\documentclass[12pt,a4paper]{article}
\usepackage[english]{babel}
\usepackage{amsmath,amssymb,amsthm,epsfig,color,graphicx, mathtools, bbm, dsfont, geometry, float, mathrsfs, bm, hyperref}
\usepackage[latin1]{inputenc}
\usepackage[T1]{fontenc}
\usepackage{indentfirst}
\usepackage[shortlabels]{enumitem}
\usepackage[title]{appendix}
\usepackage{lmodern}       
\usepackage{comment}  
           % alternative Computer Modern-fonts for computer 
\usepackage{textcomp, upgreek}

%----------------

\usepackage{tikz, tkz-euclide}
\usetikzlibrary{patterns}
\usetikzlibrary{arrows.meta}
\usetikzlibrary{calc}
\usetikzlibrary{decorations.markings}
\usepackage{fancybox}

%-------------------

\usepackage{graphicx}
\graphicspath{ {./images/} }
\newcommand{\Z}{\mathbb{Z}}

\newcommand{\lab}{\mathrm{lab}}     
\newcommand{\din}{\partial_{\mathrm{in}}}

\newcommand{\diam}{\mathrm{diam}}
\newcommand{\dis}{\mathrm{dist}}

\newcommand{\I}{\mathrm{I}}
\newcommand{\Sp}{\mathrm{sp}}
\newcommand{\ssum}[1]{
\sum_{\mathclap{\substack{#1}}}
}

            %Support
                  %integral d
                  %Exterior
                  %sign function
            %sf

\newcommand{\be}{\begin{equation}}
\newcommand{\ee}{\end{equation}}

%----------------
\numberwithin{equation}{section}

\usepackage{bbm}
\addtolength{\hoffset}{-1.5cm} \addtolength{\textwidth}{2cm}
\addtolength{\voffset}{-1.5cm} \addtolength{\textheight}{2cm}
  \newcounter{dummy} \numberwithin{dummy}{section}
  \theoremstyle{plain}
  \newtheorem*{theorem*}        {Theorem}
	\newtheorem*{conjecture*}   {Conjecture}
  \newtheorem{theorem}[dummy]          {Theorem}
  \newtheorem{lemma}[dummy]              {Lemma}
  \newtheorem*{lemma*}          {Lemma}

  \newtheorem{proposition}[dummy]       {Proposition}
   
  \newtheorem{remark}[dummy]           {Remark}
  \theoremstyle{remark}
    
  \theoremstyle{definition}
   \newtheorem{definition}[dummy]          {Definition}
%------------------- long left right arrow with word above
\makeatletter

\newcommand\longleftrightarrowfill@{%
  \arrowfill@\leftarrow\relbar\rightarrow}
\makeatother

%% Comman to edit collors
%
%
\definecolor{Red}{cmyk}{0,1,1,0}

\definecolor{Blue}{cmyk}{1,1,0,0}

\definecolor{DarkBlue}{rgb}{0.1,0.1,0.5}
\definecolor{Red}{rgb}{0.9,0.0,0.1}
\definecolor{DarkGreen}{rgb}{0.10,0.50,0.10}
\definecolor{DarkRed}{rgb}{0.50,0.10,0.10}
\definecolor{bleu}{RGB}{0,140,189}%

%-----------------Margem e FormataÃ§Ã£o---------------

\newgeometry{vmargin={15mm}, hmargin={12mm,17mm}}

%--------------------------------------------------

\begin{document}

\begin{center}
{\LARGE Long-Range Ising Models: Contours, Phase Transitions and Decaying Fields}
\vskip.5cm
Lucas Affonso$^{1}$, Rodrigo Bissacot$^{1}$, Eric O. Endo$^{2}$, Satoshi Handa$^{3}$
\vskip.3cm
\begin{footnotesize}
$^{1}$Institute of Mathematics and Statistics (IME-USP), University of S\~{a}o Paulo, Brazil\\
$^{2}$NYU-ECNU Institute of Mathematical Sciences, NYU Shanghai, China\\
$^{3}$Department of Mathematics, Hokkaido University and Fujitsu Laboratories Ltd., Japan.
\end{footnotesize}
\vskip.1cm
\begin{scriptsize}
emails: lucas.affonso.pereira@gmail.com, rodrigo.bissacot@gmail.com, ericossamiendo@gmail.com, handa.math@gmail.com
\end{scriptsize}

\end{center}

\begin{abstract}
Inspired by Fr\"{o}hlich-Spencer and subsequent authors who introduced the notion of contour for long-range systems, we provide a definition of contour  and a direct proof for the phase transition for ferromagnetic long-range Ising models on $\Z^d$, $d\geq 2$. The argument,  which is based on a multi-scale analysis, works for the sharp region  $\alpha>d$ and improves previous results obtained by Park for $\alpha>3d+1$, and by Ginibre, Grossmann, and Ruelle for $\alpha> d+1$, where $\alpha$ is the power of the coupling constant.  The key idea is to avoid a large number of small contours. As an application, we prove the persistence of the phase transition when we add a polynomially decaying magnetic field with power $\delta>0$ as $h^*|x|^{-\delta}$, where $h^* >0$.  For $d<\alpha<d+1$,  the phase transition occurs when $\delta>\alpha-d$,  and when $h^*$ is small enough over the critical line $\delta=\alpha-d$.  For $\alpha \geq d+1$,   $\delta>1$ is enough to prove the phase transition, and for $\delta=1$ we have to ask $h^*$ small.  The natural conjecture is that this region is also sharp for the phase transition problem when we have a decaying field.
\end{abstract}

	\section{Introduction}

Most of our knowledge about statistical mechanics of lattice systems comes from short-range interactions and one of the main problems in the area is to decide if a model presents or not a phase transition. The standard strategy to prove the phase transition is to define a notion of contour for the model to apply the \textit{Peierls' argument} \cite{Pei}.  However, many important examples have long-range interactions.  In this paper, we consider the ferromagnetic long-range Ising model with an external field in dimension $d\geq 2$. The Hamiltonian of the model is given formally by
\be\label{hamiltonian}
H = -\sum_{x,y \in \Z^d}J_{xy}\sigma_x\sigma_y - \sum_{x \in \Z^d}h_x \sigma_x ,
\ee 
where $J_{xy} = \widetilde{J}(|x-y|)= J|x-y|^{-\alpha}$, $J>0$,  $\alpha > d$ and $h_x \in \mathbb{R}$. The case considered originally was the zero magnetic field, that is, $h_x=0$ for every $x \in \Z^d$. 

Part of the results for ferromagnetic long-range Ising models in dimension one ($d=1$) can be summarized as follows:  Kac and Thompson conjectured in \cite{KT} that the model exhibits a phase transition at low temperatures when  $\alpha \in (1,2]$. The conjecture was proved in 1969 by Dyson in \cite{Dyson} for $\alpha \in (1,2)$.  Dyson's approach applies correlation inequalities between a model that is known nowadays as \emph{hierarchical model}. In a private communication with Dyson, Thompson told that there was no phase transition for $\alpha=2$ (See the references in \cite{Dyson}). Still, in 1982, Fr\"{o}hlich and Spencer \cite{Fro1} introduced a notion of one-dimensional contour and proved the phase transition based on a Peierls type argument.  

The contours presented in \cite{Fro1} were inspired by techniques introduced by the same authors in \cite{Fro3} to study two-dimensional systems with continuous symmetries, where breakthroughs were made in the study of the Berezin-Kosterlitz-Thouless transition. The idea consists in organizing the spin flips in (not necessarily connected) contours that satisfy a condition related to their distance between each other (in \cite{Fro2}, it is called \emph{Condition D}). 

Cassandro, Ferrari, Merola and Presutti \cite{Cass} attempted to extend the contour argument to different exponents for the interactions introducing a more geometric approach in terms of triangle configurations to the problem of the phase transition. Their modification comes with an additional condition on the coupling for nearest neighbours, namely, $\widetilde{J}(1) \gg 1$. The authors showed the phase transition assuming that $\alpha \in (2-\alpha^+,2]$, where ${\alpha^+ =\log(3)/\log(2) - 1}$. Many results were obtained following either the contour approach of Fr\"{o}hlich and Spencer or Cassandro, Ferrari, Merola and Presutti, such as cluster expansion \cite{Cass2, Imbr1, Imbr2}, phase separation \cite{Cass3}, phase transition for the one-dimensional long-range model with a random field \cite{Cass1} and condensation for the corresponding lattice gas model \cite{Jo}. 

Recently, Bissacot, Endo, van Enter, Kimura, and Ruszel \cite{Eric2}, based on the contour argument of \cite{Cass}, considered the model with a presence of the decaying magnetic field $(h_x)_{x\in \Z}$. Moreover, restricting the region of $\alpha$ to $(2-\alpha^*,2]$ where $\alpha^+<\alpha^*$ satisfies $\sum_{n \geq 1}n^{-\alpha^*}=2$, they removed the condition of the nearest neighbour's coupling $\widetilde{J}(1)\gg 1$.

For multidimensional models ($d\geq 2$), one of the only notions of contour available beyond the short-range case were proposed by Park \cite{Park1, Park2}, on an extension of Pirogov-Sinai theory.  Park's arguments work for pair interactions and finite state space, but the interaction's exponent has to be assumed $\alpha>3d+1$. Before this, in \cite{GGR}, Ginibre,  Grossmann, and Ruelle extended the usual Peierls argument for a class of interactions more general than nearest-neighbor. Their result implies that, if we add a long-range perturbation to the nearest-neighbor model of the form $K_{x,y} = K|x-y|^{-\alpha}$, with $K$ a real constant, the cost of erasing a contour $\gamma$ from a family of contours $\Gamma$ containing $\gamma$ is bounded below by  $2\left(J -  |K| c_\alpha\right)|\gamma|$, where $c_\alpha = \sum_{k\in \Z^d\setminus\{0\}}|k|^{1-\alpha} - 2d$. Notice that this lower bound is positive for $K$ small enough and $\alpha>d+1$. Indeed, their result holds in more generality: the constant $K$ may be replaced by a family of real constants $K_{x,y}$ with $\sup_{\{x,y\}\in \Z^d}|K_{x,y}|$ small enough. Notice that the sign of $K_{x,y}$ may change, making their result applicable for antiferromagnetic long-range perturbations, for instance. Since the methods of Ginibre-Grossmann-Ruelle hold for antiferromagnetic perturbations, the results by van Enter \cite{vanEnter81} and Biskup, Chayes, and Kivelson \cite{Bisk} show that their enhanced Peierls argument is not generalizable for $d<\alpha\leq d+1$, since an arbitrarily small antiferromagnetic long-range interaction with this decay forces the magnetization to be zero.

One of our main contributions in this paper is to present a proof for the phase transition at low temperature for the \emph{ferromagnetic long-range Ising model on the lattice $\Z^d$} with $d\ge 2$.  Our proof combines ideas from \cite{ Cass, Fro3, Fro2, Park1, Park2, Pres},  and we are able to obtain the Peierls argument in the case of zero magnetic field for the sharp region $\alpha > d$. Notice that the positivity of the magnetization follows from the GKS inequalities and the Peierls argument for the nearest neighbor Ising model. The novelty in our results resides in the fact that we have a contour argument applicable to models where the correlation inequalities do not imply the phase transition or even do not hold, such as the model with a decaying field in the former case and to random magnetic fields \cite{Johannes}, in the later.

It is well known that ferromagnetic Ising models do not present phase transition in the presence of a uniform magnetic field; see Lee and Yang \cite{LY1,LY2}.  When the field is not constant, the situation drastically changes.  There is some literature about models with fields,  including the famous case of the random field. See, for instance, \cite{BBL,BBCK,BEE,Bov, BK,CMR,GPY,Preston}.

The problem of studying the phase diagram with a decaying field was introduced in \cite{Bis1}.  Since the pressure of the Ising model with a decaying magnetic field $(h_x)_{x\in \Z^d}$ is equal to the pressure of the Ising model with zero magnetic field, it may induce the belief that these two models should present the same phase diagram.  In \cite{Bis2}, the authors studied the phase diagram of the ferromagnetic nearest-neighbor Ising model on the lattice $\Z^d$ in the presence of the spatially dependent magnetic field $(h_x)_{x\in \Z^d}$ given by $h_x=h^*|x|^{-\delta}$, when $x\neq 0$, and $h_0=h^*$, where $h^*$ and $\delta$ are positive constants.  They studied the behaviour of the model at low temperatures according to the exponent of the decaying field: the model undergoes a phase transition at low temperature for $\delta>1$, and we have uniqueness for  $0<\delta<1$. Afterward, Cioletti and Vila \cite{CV} closed the gap, concluding the uniqueness of the Gibbs measure when $0<\delta<1$ for all temperatures, the argument uses the Fortuin-Kasteleyn representation.

When we add a decaying magnetic field,  it is not possible to obtain the phase transition for the long-range Ising model from the short-range case, so the natural strategy is to use the Peierls argument, as we see in \cite{Bis1,Bis2,Eric2}. In fact, these models show that arguments using contours are sharp with respect to the decaying of the magnetic field. 

In the present paper, we extended the analysis in \cite{Bis2} from the short-range to the long-range case considering decaying external fields in the Hamiltonian \eqref{hamiltonian}. By a similar approach of Fr\"{o}hlich and Spencer, we define a notion of contour on this model to show the phase transition at low temperature when $d<\alpha<d+1$ and $\delta>\alpha-d$, and when $\alpha\ge d+1$  and $\delta>1$. 

To understand the heuristics of our result on phase transitions, let us consider the configurations
\be
\sigma_x = \begin{cases}
	+1& \text{if } x \in B_R(z), \\
	-1& \text{otherwise},
\end{cases}
\ee
where $B_R(z)\subset \Z^d$ is the closed ball in the $\ell_1$-norm centered in $z$ with radius $R\geq 0$. Let $\Omega_c$ be the collection of all such configurations together with the configuration that is $-1$ in all $\Z^d$ and, for fixed $\Lambda \Subset \Z^d$, let $\Omega_{c,\Lambda}$ be the subset of configurations where $B_R(z)\subset \Lambda$ together with the configuration that is $-1$ in all $\Z^d$. Then, we have
\be\label{contaheuristica}
\mu_{\beta,\Lambda}^-( \sigma_0 = +1|\Omega_{c,\Lambda}) \leq \sum_{R\geq 1}R^d \exp\left(-\beta (cR^{d-1}+ F_{B_R}- \sum_{x \in B_R(0)}h_x)\right),
\ee
where the quantity $F_{B_R}$ is defined as
\be
F_{B_R} \coloneqq \sum_{\substack{x \in B_R(0)\\ y \in B_R(0)^c}}J_{xy}.
\ee
One can understand this quantity as a surface energy term, and it has different asymptotics depending on the parameters $\alpha$ and $d$.  Denoting by $f\approx g$  the fact that given two functions $f,g:\mathbb{R}^{+} \rightarrow \mathbb{R}^{+}$, there exist positive constants $A'\coloneqq A'(\alpha, d),  A\coloneqq A(\alpha, d)$ such that $A'f(x) \leq g(x) \leq Af(x)$ for every $x>0$ large enough, we have
\be
F_{B_R} \approx \begin{cases}
	R^{2d-\alpha}&\text{if } d<\alpha<d+1, \\
	R^{d-1}\log(R)&\text{if } \alpha = d+1, \\
	R^{d-1}&\text{if } \alpha>d+1.
\end{cases}
\ee
For the proof, see Propositions 3.1 and 3.4 in  \cite{BBCK}.  For our purposes, we will need estimates for more general subsets than balls, see Lemma \ref{lema3}.   Now, the inequality \eqref{contaheuristica} show us that the phase transition occurs when $\delta> \alpha - d$.  To see this, observe that the surface energy term should be  larger than the contribution of field, which is given by
\be
\sum_{x \in B_R(0)}h_x = O(R^{d-\delta}).
\ee

\begin{comment}
As an application of our construction,  we prove the phase transition for multidimensional ferromagnetic long-range Ising models in a presence of the decaying magnetic field. Here, we consider the coupling constant $J_{xy}=|x-y|^{-\alpha}$ with $\alpha>d$.  
\end{comment}

Our main result can be summarized by the following picture:

\begin{figure}[H]
	\centering
	
	\tikzset{every picture/.style={line width=0.75pt}} 
	
	\begin{tikzpicture}[scale=1.25]
		
		%Axes
		
		\draw[-{Stealth}, black] (-1,0)--(5,0);
		\draw[-{Stealth}, black] (0,-1)--(0,4);
		\draw[-{Stealth}, black] (6,2.6)to[out=180, in=0](5,2);
		
		%Graph
		\draw[-, dashed] (0,0)--(2.5,2)--(5,2);
		\fill[color= black, opacity= 0.2] (0,0)--(2.5,2)--(5,2)--(5,4)--(0,4)-- cycle;	
		
		%Boxes
		\fill[white, rounded corners, thick] (1.5,3) rectangle (3.5,3.4);
		\draw[fill=white, draw=black, rounded corners, thick] (1.5,3) rectangle (3.5,3.4);
		\draw[black, rounded corners, thick] (3,1) rectangle (4.5,1.5);
		\draw[black, rounded corners, thick] (6,2.3) rectangle (8.1,2.9);
		
		%Nodes
		
		\draw (0,0) node[anchor=north east, scale=1] {$0$};
		\draw (5,0) node[anchor=north east, scale=1] {$\alpha -d$};
		\draw (0,4) node[anchor=north east, scale=1] {$\delta$};
		\draw (0.1,2)--(-0.1,2);
		\draw (0,2) node[anchor=north east, scale=1] {$1$};
		\draw (2.5,0.1)--(2.5,-0.1);
		\draw (2.5,0) node[anchor=north west, scale=1] {$1$};
		\draw(1.45,3.05) node[anchor=south west, scale=0.7] {\;\small\textbf{Phase Transition}};
		\draw(2.9,1.05) node[anchor=south west, scale=0.7] {\;\textbf{Uniqueness?}};
		\draw(5.9,2.6) node[anchor=south west, scale=0.7] {\;\textbf{Phase Transition}};
		\draw(6.3,2.3) node[anchor=south west, scale=0.7] {\;\textbf{for small $h^*$}};

	\end{tikzpicture}
	\label{figpt}
	\caption{The phase diagram for the long-range Ising model depending on $\alpha$ and $\delta$.}
\end{figure}
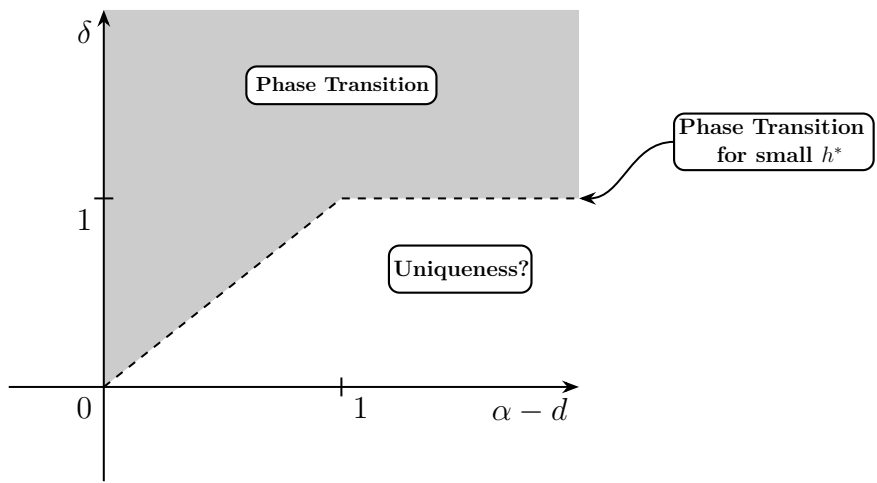

This paper is divided as follows. In Section 2, we give some basic definitions and define the model.  In section 3,  we introduce the notion of contour for the long-range Ising model and estimates for the entropy of the contours.  In the Section 4, we present the proof of the phase transition for the region indicated in the Figure 1.

This paper is contained in the Ph.D. thesis of Lucas Affonso \cite{Affonso}.

\section{Definitions and notations}

\subsection{The model}
Although most of the definitions and results cited in this section are available in greater generality, we choose to consider only the case where the lattice is $\Z^d$ for $d\geq 2$ and the state space is $E=\{-1,+1\}$. Our configuration space will be $\Omega = \{-1,+1\}^{\Z^d}$. We will denote $\sigma^+$ (resp. $\sigma^-$) the configuration which assume the same value equal to $+1$ (resp. $-1$) in all vertices, i.e., $\sigma^{\pm}_x=\pm 1$ for all $x\in\Z^d$. We write $\Lambda \Subset \Z^d$ for finite subsets of the lattice. Fixing such $\Lambda$, we define the local configuration space $\Omega_\Lambda = \{-1,+1\}^{\Lambda}$ and, given $\omega \in \Omega$, we define $\Omega_\Lambda^\omega \subset \Omega$ as the subset of configurations such that $\sigma_{\Lambda^c}=\omega_{\Lambda^c}$, i.e., $\sigma_x=\omega_x$ for all $x\in \Lambda^c$.  We say that $\sigma$ has $+$-boundary condition (resp. $-$-boundary condition) when $\sigma\in \Omega_{\Lambda}^{+}$ (resp. $\sigma\in \Omega_{\Lambda}^{-}$).  

For each $\Lambda \Subset \Z^d$, $\omega \in \Omega$, and $\mathbf{h}=(h_x)_{x\in \mathbb{Z}^d}$ a collection of real numbers, the \emph{Hamiltonian function of the long-range Ising model} $H^\omega_{\Lambda,\mathbf{h}}:\Omega^{\omega}_\Lambda \rightarrow \mathbb{R}$ is given by
\begin{equation}\label{Isingsys}
	H_{\Lambda,\mathbf{h}}^\omega(\sigma) = -\sum_{\{x,y\} \subset \Lambda} J_{xy} \sigma_x\sigma_y - \sum_{\substack{x \in \Lambda \\ y \in \Lambda^c}}J_{xy} \sigma_x \omega_y - \sum_{x \in \Lambda} h_x \sigma_x,
\end{equation}
where, for $\alpha>d$, $J>0$ and $x,y\in \Z^d$ with $x\neq y$, the \emph{coupling constant}  $J_{xy}$ is given by
\begin{equation}\label{long}
	J_{xy} = \begin{cases}
		\frac{J}{|x-y|^\alpha} &\text{ if } x\neq y, \\
		0 & \text{otherwise},
	\end{cases}
\end{equation}
where $|z|$ is the $\ell^1$ norm of $z\in \Z^d$. For $\delta,h^*>0$, the \emph{magnetic field} $(h_x)_{x \in \Z^d}$ is given by
\begin{equation}\label{magfield}
	h_x = \begin{cases}
		h^* & \text{ if }x=0, \\
		\frac{h^*}{|x|^\delta} &\text{ if }  x\neq 0.
	\end{cases}
\end{equation}
% If $J>0$, the model is known as \emph{ferromagnetic}.

For a subset $\Lambda\subset \Z^d$, consider the $\sigma$-algebra $\mathcal{F}_{\Lambda}$ generated by the cylinder sets supported on $\Lambda$. The basic object of the study in the classical statistical mechanics is the  collection of probability measures in $(\Omega,\mathcal{F}_{\Z^d})$ called \emph{finite volume Gibbs measures} defined by
\begin{equation}
	\mu_{\beta,\mathbf{h}, \Lambda}^\omega (\sigma) \coloneqq 
	\frac{e^{-\beta H_{\Lambda,\mathbf{h}}^\omega(\sigma)}}{Z_{\beta,\mathbf{h}, \Lambda}^\omega} \text{   if } \sigma_{\Lambda^c}=\omega_{\Lambda^c},
\end{equation}
and $\mu_{\beta,\mathbf{h}, \Lambda}^\omega (\sigma) =0$ otherwise. Here $\beta>0$ is the inverse temperature, and the normalization factor $Z^\omega_{\beta,\mathbf{h}, \Lambda}$, called \emph{partition function}, is defined by
\be
Z^\omega_{\beta,\mathbf{h}, \Lambda} \coloneqq \sum_{\sigma \in \Omega_\Lambda}e^{-\beta H^\omega_{\Lambda, \mathbf{h}}(\sigma)}.
\ee

The Gibbs measure with $+$ (resp. $-$) boundary condition is when $\omega_x=+1$ (resp. $-1$) for every $x \in \Z^d$. We say that a sequence of finite subsets $(\Lambda_n)_{n \in \mathbb{N}}$ \emph{invades} $\Z^d$, denoted by $\Lambda_n \nearrow \Z^d$, if for every finite subset $\Lambda$ there exists $N>0$ such that $\Lambda \subset \Lambda_n$ for all  $n \geq N$. Since the configuration space $\Omega$ is compact, the space of probability measures is a weak* compact set. Therefore the net defined by the collection of finite volume Gibbs measures has a convergent subsequence. We define the set $\mathcal{G}_\beta$ as the closed convex hull of all the limits obtained by this procedure, i.e.,
\be
\mathcal{G}_\beta \coloneqq \overline{\text{conv}}\{\mu_\beta: \mu_\beta = w^*\text{-}\lim_{\Lambda' \nearrow \Z^d}\mu_{\beta,\Lambda'}^\omega, \Lambda' \Subset \Z^d \text{ invades the lattice}\}.
\ee

The set $\mathcal{G}_{\beta}$ is always non empty in our case by a simple application of the Banach-Alaoglu theorem. We say that the model has \emph{uniqueness at $\beta$} if $|\mathcal{G}_\beta|= 1$ and it undergoes to a \emph{phase transition at $\beta$} if~$|\mathcal{G}_\beta|> 1$.

\section{Contours}

Contours are geometric objects first introduced in a seminal paper of R. Peierls \cite{Pei}. The technique is known nowadays as \emph{Peierls' Argument}. Many attempts were made to extend the ideas of the Peierls argument to the other systems \cite{Dat,  DobCont,Park1, Pirogov, Zaradnik}. The most sucessful generalization was made by S. Pirogov and Y. Sinai in \cite{Pirogov}, and later improved by  Zahradn\'{i}k \cite{Zaradnik}. Their work is known as \emph{Pirogov-Sinai} theory, and it is applied to models with short range interactions that may not have symmetries.

%The usual contours are deviations from the ground states of the system under consideration in the Pirogov-Sinai scheme. 

Park, in \cite{Park1, Park2} extended the theory of Pirogov-Sinai to systems with two-body long-range interactions that satisfy a condition equivalent to \eqref{long} having decay $\alpha > 3d+1$. Inspired by \cite{Fro1}, in this section we will introduce new contours more suitable to study long-range two-body interactions. 
%For $s \in [1,\infty)$ and $x\in \Z^d$, let $B_s(x)$ be the closed ball in the $\ell_1$-norm centered in $x$ with radius $s$. \emph{Differently from the original definition on Park's papers, for us,  it is enough to consider} $s=1$ in the definition of \textit{correct points}.

\begin{definition}\label{def1}
	Given $\sigma \in \Omega$, a point $x \in \Z^d$ is called \emph{+ (or - resp.)} \emph{correct} if $\sigma_y = +1$, (or $-1$, resp.) for all points $y$ in $B_1(x)$. The \emph{boundary} of $\sigma$, denoted by $\partial \sigma$, is the set of all points in $\Z^d$ that are neither $+$ nor $-$ correct.
\end{definition}

We recall the reader that $B_1(x)$ is the ball of radius $1$ with respect to the $\ell^1$-norm. Note that the boundary can be an infinite subset of $\Z^d$. Indeed, if we take $\sigma \in \Omega$ defined by
\[
\sigma_x = \begin{cases}
	+1& \text{if } |x|\text{ is even}, \\
	-1& \text{otherwise}.
\end{cases}
\]
It is easy to see that every point in $\Z^d$ is incorrect with respect to the configuration $\sigma$, and thus $\partial\sigma = \Z^d$. This situation can be avoided by restricting our attention only to configurations with finite boundary set. By definition of incorrectness, for a configuration $\sigma$ to have a finite boundary set it must satisfy $\sigma \in \Omega_\Lambda^+$ or $\Omega_\Lambda^-$ for some subset $\Lambda \Subset \Z^d$. 

For each subset $\Lambda \Subset \Z^d$, we call $\Lambda^{(0)}$ the unique unbounded connected component of $\Lambda^c$. Then, we define the \emph{volume} by $V(\Lambda)= (\Lambda^{(0)})^c$. Note that this set is a union of simply connected sets that contains $\Lambda$ and is the smallest one in the subset order. The \emph{interior} is defined by $\I(\Lambda)= \Lambda^c \setminus \Lambda^{(0)}$.

In Pirogov-Sinai theory, the construction of the contours starts by considering first the connected subsets of the boundary $\partial\sigma$. This procedure is troublesome for long-range models since each point of the lattice interacts with all the other points. Thus the contours always have a nonvanishing interaction between themselves. To avoid this problem, we will divide the boundary of a configuration in a way where the interaction between them will be negligible in a sense to be specified later. Inspired by Fr\"{o}hlich and Spencer \cite{Fro1}, we introduce the following definition. 
\begin{definition}\label{def:d_condition}
	Fix real numbers $M,a,r>0$. For each configuration $\sigma \in \Omega$ with finite boundary $\partial\sigma$, a set $\Gamma(\sigma) \coloneqq \{\overline{\gamma} : \overline{\gamma} \subset \partial\sigma\}$ is called an $(M,a,r)$-\emph{partition} when the following conditions are satisfied:
	\begin{enumerate}[label=\textbf{(\Alph*)}, series=l_after] 
		\item They form a partition of $\partial\sigma$, i.e.,  $\displaystyle\bigcup_{\overline{\gamma} \in \Gamma(\sigma)}\overline{\gamma}=\partial\sigma$ and $\overline{\gamma} \cap \overline{\gamma}' = \emptyset$ for distinct elements of $\Gamma(\sigma)$. 
		\item For all $\overline{\gamma} \in \Gamma(\sigma)$ there exist $1\leq n \leq 2^r-1$ and a family of subsets $(\overline{\gamma}_{k})_{1\leq k \leq n}$ satisfying 
		\begin{enumerate}[\textbf{(B\arabic*)}]
			\item  $\displaystyle \overline{\gamma} = \bigcup_{1\leq k \leq n}\overline{\gamma}_{k}$,
			\item For all distinct $\overline{\gamma},\overline{\gamma}' \in \Gamma(\sigma)$,
			\be\label{B_distance}
			\dis(\overline{\gamma},\overline{\gamma}') > M \min\left \{\underset{1\leq k \leq n}{\max}\diam(\overline{\gamma}_k),\underset{1\leq j\leq n'}{\max}\diam(\overline{\gamma}'_{j})\right\}^a,
			\ee
			where $(\overline{\gamma}'_j)_{1\leq j \leq n'}$ is the family given by item $\textbf{(B1)}$ for $\overline{\gamma}'$.
		\end{enumerate} 
	\end{enumerate}
\end{definition}
Note that the sets $\overline{\gamma} \in \Gamma(\sigma)$ may be disconnected. We also stress that in Condition \textbf{(B1)} the sets $(\overline{\gamma}_k)_{1\leq k \leq n}$ may not be disjoint. Some of the next results are true for any $M,a,r>0$, as the existence of $(M,a,r)$-partition for any configuration $\sigma$ with finite boundary $\partial\sigma$,  see Proposition \ref{prop1}. 
\begin{remark}
    For the main purposes of this paper, which is the proof of the phase transition, the constant $a$ is chosen as $a \coloneqq a(\alpha, d) = \max\left\{\frac{d+1+\varepsilon}{\alpha-d}, d+1+\varepsilon\right\}$, for some $\varepsilon>0$ fixed and $r$ given by $r= \lceil \log_2(a+1) \rceil + d +1$, where $\lceil x \rceil$ is the smallest integer greater than or equal to $x$. The motivation for these choices will be clear in the proofs. The specific value of $M\coloneqq M(\alpha, d)$ will be chosen later, but we will assume for now on that $M\geq 1$.
\end{remark}
\begin{figure}[H]
	\centering	
	\tikzset{every picture/.style={line width=0.75pt}} 
	\begin{tikzpicture}[scale=1.25]
		\begin{scope}[shift={(-5,0)}]
			%gamma_2
			\draw [ line width=0.5mm, draw=black, dashed] plot [smooth cycle] coordinates {(0.25,1.75) (1.1,2) (1.25,2.75) (0.25,3) }--cycle;
			\draw [line width=0.5mm, fill=black!9!white, draw=black] plot [smooth cycle] coordinates {(0.5,2.25) (1,2.5) (0.5,2.75) (0.25,2.5)}--cycle;
			%gamma_3
			\draw [line width=0.5mm, fill=white, draw=black, dashed] plot [smooth cycle] coordinates {(1,-0.5) (5.5,-0.5) (5.5,1.5) (4,1.5) (3.5,0.75) (1,0.75) }--cycle;
			\draw [line width=0.5mm, fill=black!8!white, draw=black] plot [smooth cycle] coordinates {(1.5,0) (2,0) (2.5,0.5) (1.5,0.5) (1.25,0.25) }--cycle;
			\draw [line width=0.5mm, fill=black!8!white, draw=black] plot [smooth cycle] coordinates {(3.25,0.5) (3.25,-0.25) (4,-0.25) (4,0.5) }--cycle;
			\draw [line width=0.5mm, fill=black!8!white, draw=black] plot [smooth cycle] coordinates {(3.25,0.5) (3.25,-0.25) (4,-0.25) (4,0.5) }--cycle;
			\draw [line width=0.5mm, fill=white, draw=black] plot [smooth cycle] coordinates {(3.75,0.3) (3.75,0) (3.5,0)  (3.5,0.3) }--cycle;
			\draw [line width=0.5mm, fill=black!8!white, draw=black] plot [smooth cycle] coordinates {(4.75,1.5) (4.75,1) (5.5,0.5) (5.5,1) }--cycle;
			%nodes		
			\draw (0.9,2.1) node[scale=1] {$\overline{\gamma}$};
			\draw (4.8,0.2) node[scale=1] {$\overline{\gamma}'$};
			\draw[<->] (0.5,2.25)--(1.5,0.5);
			\draw (2.2,1.4) node[scale=1] {$\dis(\overline{\gamma},\overline{\gamma}') = \frac{n^a}{2}$};
			\draw[<->] (2.2,0.05)--(3.15,0);
			\draw (2.675,-0.15) node[scale=1] {$n^{a-1}$};
			\draw[<->] (4.05,0.5)--(4.9,0.8);
			\draw (4.45,0.85) node[scale=1] { $n^{a-1}$};
		\end{scope}
		\draw [-, dashed] (2,-1)--(2,4);
		\begin{scope}[shift={(4,0)}]
			%gamma_2
			\draw [ line width=0.5mm, draw=black, dashed] plot [smooth cycle] coordinates {(-1.25,1.75) (1.1,1.8) (1.25,2.75) (-0.5,3) }--cycle;
			\draw [line width=0.5mm, fill=black!8!white, draw=black] plot [smooth cycle] coordinates {(-1,2) (1,2.5) (0.5,2.75) (-0.8,2.5)}--cycle;
			%gamma_3
			\draw [line width=0.5mm, fill=white, draw=black, dashed] plot [smooth cycle] coordinates {(1,-0.5) (4,-0.5) (4.5,-1) (5.7,-1) (5.5,1.5) (4,1.5) (3.5,0.75) (1,0.75) }--cycle;
			\draw [line width=0.5mm, fill=black!8!white, draw=black] plot [smooth cycle] coordinates {(1.5,0) (2,0) (2.5,0.5) (1.5,0.5) (1.25,0.25) }--cycle;
			\draw [line width=0.5mm, fill=black!8!white, draw=black] plot [smooth cycle] coordinates {(3.25,0.5) (3.25,-0.25) (4,-0.25) (4,0.5) }--cycle;
			\draw [line width=0.5mm, fill=black!8!white, draw=black] plot [smooth cycle] coordinates {(3.25,0.5) (3.25,-0.25) (4,-0.25) (4,0.5) }--cycle;
			\draw [line width=0.5mm, fill=white, draw=black] plot [smooth cycle] coordinates {(3.75,0.3) (3.75,0) (3.5,0)  (3.5,0.3) }--cycle;
			\draw [line width=0.5mm, fill=black!8!white, draw=black] plot [smooth cycle] coordinates {(4.75,1.5) (4.75,1) (5.5,0.5) (5.5,1) }--cycle;
			\draw [line width=0.5mm, fill=black!8!white, draw=black] plot [smooth cycle] coordinates {(5,-0.3) (4.75,-0.9) (5.5,-0.9) (5.5,-0.3) }--cycle;
			%nodes
			\draw (0.9,2.1) node[scale=1] {$\overline{\gamma}$};
			\draw (4.8,0.2) node[scale=1] {$\overline{\gamma}'$};
			\draw[<->] (0.5,2.25)--(1.5,0.5);
			\draw (2.2,1.4) node[scale=1] {$\dis(\overline{\gamma},\overline{\gamma}') = 2n^a$};
			\draw[<->] (2.2,0.05)--(3.15,0);
			\draw (2.675,-0.15) node[scale=1] { $n^a$};
			\draw[<->] (5.3,-0.2)--(5.2,0.6);
			\draw (5.5,0.2) node[scale=1] { $n^2$};
                \draw[<->] (4.05,0.5)--(4.9,0.8);
			\draw (4.45,0.85) node[scale=1] { $n^2$};
			\draw[<->] (4.1,0)--(4.85,-0.4);
			\draw (4.5,-0.5) node[scale=1] { $n^2$};
		\end{scope}	
	\end{tikzpicture}
	\caption{Consider $M=1, r=2$, and $a>2$. For the image on the left, consider that all the connected components (grey regions) of $\overline{\gamma}'$ have diameters equal to $n$ and $\diam(\overline{\gamma})=3n$. In this case, there is no family of subsets of $\overline{\gamma}'$ satisfying condition \textbf{(B)}. A possible $(M,a,r)$-partition for this case is $\Gamma(\sigma)=\{\overline{\gamma}\cup \overline{\gamma}'\}$. For the figure on the right, consider that all the connected components of $\overline{\gamma}'$ have diameter $n$ and $\diam(\overline{\gamma})=3n^2$. Notice that, in this case, the family of subsets of $\overline{\gamma}'$ satisfying Inequality \eqref{B_distance} must have $n' >2^r-1$.}
\end{figure}
%\begin{remark}
	%Notice that the sets $\overline{\gamma} \in \Gamma(\sigma)$ may be disconnected. Our construction will imply that, if $V(\overline{\gamma}') \subset V(\overline{\gamma})$, then $V(\overline{\gamma}') \cap I(\overline{\gamma}_*) = \emptyset$ or $V(\overline{\gamma}') \cap I(\overline{\gamma}_*) = V(\overline{\gamma}')$ for every connected component $\overline{\gamma}_*$ of $\overline{\gamma}$. This fact is proved in Proposition \ref{prop1}.
%\end{remark}
In general, there are many possible $(M,a,r)$-partitions for a given $\sigma \in \Omega$ with finite boundary and fixed $M,a,r>0$. If we have two $(M,a,r)$-partitions $\Gamma$ and $\Gamma'$, we say that $\Gamma$ \emph{is finer than} $\Gamma'$ if for every $\overline{\gamma} \in \Gamma$ there is $\overline{\gamma}' \in \Gamma'$ with $\overline{\gamma} \subseteq \overline{\gamma}'$. The finest $(M,a,r)$-partition also satisfies stronger separation properties than what is stated in Condition \textbf{(A)}. It will actually satisfy the following condition (see also Figure \ref{A1}),
\begin{itemize}
    \item[\textbf{(A1)}] For any $\overline{\gamma},\overline{\gamma}^\prime\in \Gamma(\sigma)$, $\overline{\gamma}'$ is contained in only one connected component of $(\overline{\gamma})^c$.
\end{itemize}
The proposition below is important in the proof that there is a finest $(M,a,r)$-partition.
\begin{proposition}
Let $\sigma$ be a configuration with finite $\partial \sigma$, and let $\Gamma,\Gamma'$ be two $(M,a,r)$-partitions. Then the set defined as
\[
\Gamma\cap \Gamma'\coloneqq \{\gamma \cap \gamma': \gamma \in \Gamma, \gamma' \in \Gamma', \gamma \cap \gamma' \neq \emptyset\},
\]
is also an $(M,a,r)$-partition. Moreover, if one of them satisfies condition \textbf{(A1)}, then $\Gamma\cap \Gamma'$ also satisfies condition \textbf{(A1)}.
\end{proposition}
\begin{proof}
	The existence of non-trivial $(M,a,r)$-partitions will be given in the next section. Consider two $(M,a,r)$-partitions $\Gamma$ and $\Gamma'$. We will show that we can build an $(M,a,r)$-partition that is finer than $\Gamma$ and $\Gamma'$. We will prove that $\Gamma \cap \Gamma'$ is an $(M,a,r)$-partition. It is easy to see that $\Gamma\cap \Gamma'$ forms a partition of $\partial \sigma$. For condition \textbf{(B)}, consider $\overline{\gamma}_i\cap \overline{\gamma}'_i\in \Gamma\cap\Gamma'$ and choose as a family of subsets $(\overline{\gamma}_i\cap \overline{\gamma}'_i)_k \coloneqq \overline{\gamma}_i\cap \overline{\gamma}'_{i,k}$ for all $k \in\{1,\dots,n'\}$ where the intersection is not empty. Here $(\overline{\gamma}'_{i,k})_{1\leq k\leq n'}$ is the family of subsets for $\overline{\gamma}'$ given by condition \textbf{(B)}. We have,
	\begin{align*}
	\dis(\overline{\gamma}_1\cap\overline{\gamma}'_1,\overline{\gamma}_2\cap\overline{\gamma}'_2) \geq \dis(\overline{\gamma}'_1, \overline{\gamma}'_2) &> M \min\left \{\underset{1\leq k \leq n'}{\max}\diam(\overline{\gamma}'_{1,k}),\underset{1\leq j\leq m'}{\max}\diam(\overline{\gamma}'_{2,j})\right\}^a \\
	&\geq M \min\left \{\underset{1\leq k \leq n'}{\max}\diam((\overline{\gamma}_1\cap\overline{\gamma}_1')_k),\underset{1\leq j\leq m'}{\max}\diam((\overline{\gamma}_2\cap\overline{\gamma}_2')_j)\right\}^a.
	\end{align*}
Let $\overline{\gamma}_i \in \Gamma$ and $\overline{\gamma}'_i \in \Gamma'$ such that $\overline{\gamma}_i\cap \overline{\gamma}_i' \in \Gamma \cap \Gamma'$ for $i=1,2$. Suppose that $\Gamma$ satisfies the Condition \textbf{(A1)}. Then, there is a connected component $A$ of $(\overline{\gamma}_2)^c$ that contains $\overline{\gamma}_1$. Hence,
	\[
	\overline{\gamma}_1 \cap \overline{\gamma}'_1\subset A \subset (\overline{\gamma}_2 \cap \overline{\gamma}'_2)^c.
	\]   
\end{proof}
\begin{remark}
Since there are a finite number of $(M,a,r)$-partitions for a finite boundary set $\partial \sigma$, we can construct the finest one by intersecting all of them, following the above construction. If we had two $(M,a,r)$-partitions that have the property of being the finest, we could produce another $(M,a,r)$-partition finer than both, yielding a contradiction.    
\end{remark}
\begin{remark}
	In their one-dimensional paper \cite{Fro1}, Fr\"{o}hlich and Spencer assumed that they would choose the finest partition of spin flips satisfying their Condition D, but the uniqueness of this partition would not be a problem, since you just need to fix a partition for the phase transition argument to hold. It was in \cite{Imbr1} where Imbrie settled this question and our proof is inspired by his argument.
\end{remark} 
\begin{figure}[H]\label{A1}
\centering
\tikzset{every picture/.style={line width=0.75pt}} 
\begin{tikzpicture}[scale=1.8]
	%gamma
	\draw [ line width=0.5mm, fill=black!8!white, draw=black] plot [smooth cycle] coordinates {(-2,1) (-2,-1) (0,-2) (2,-2) (6,0) (4,2) (0,2) }--cycle;
	\draw [ line width=0.5mm, fill=white, draw=black] plot [smooth cycle] coordinates {(-1.5,1) (-1,-1.2) (1,-1.5) (5.5,0) (3.5, 1.8) }--cycle;
	\draw [ line width=0.5mm, fill=black!8!white, draw=black] plot [smooth cycle] coordinates {(-0.75,-1) (1,-1) (2.8,0) (1.5, 1.2) (-1,1) }--cycle;
	\draw [ line width=0.5mm, fill=white, draw=black] plot [smooth cycle] coordinates {(-0.55,-0.7) (0.75,-0.75) (1,0)  (-0.5,0.5) }--cycle;
	\draw [ line width=0.5mm, fill=white, draw=black] plot [smooth cycle] coordinates {(1.5,-0.4)  (2.5, 0.2) (1.7,0.8) (0,0.7) (1, 0.4) }--cycle;
	%\gamma'
	\begin{scope}[shift={(0,-0.3)}]
		\draw [ line width=0.5mm, pattern=dots, draw=black] plot [smooth cycle] coordinates {(0,-0.2) (0.25,-0.25) (0.5,0) (0.3,0.3) (-0.1,0.2) }--cycle;
		\draw [ line width=0.5mm, pattern=dots, draw=black] plot [smooth cycle] coordinates {(3.5,0) (3.75,0.2) (4,0.8) (3.5, 1)  }--cycle;
		%nodes	
		\draw (3.7,0.6) node[scale=1.5] {\large $\overline{\gamma}'$};
		\draw (0.2,0) node[scale=1.5] {\large $\overline{\gamma}'$};
	\end{scope}
	\begin{scope}[shift={(1.5,0.2)}]
		\draw [ line width=0.5mm, pattern=dots, draw=black] plot [smooth cycle] coordinates {(0,-0.2) (0.25,-0.25) (0.5,0) (0.3,0.3) (-0.1,0.2) }--cycle;
		\draw (0.2,0) node[scale=1.5] {\large $\overline{\gamma}'$};
	\end{scope}
	%node
	\draw (0.9,1) node[scale=1.5] {\large $\overline{\gamma}$};
	\draw (1.1,-1.8) node[scale=1.5] {\large $\overline{\gamma}$};
\end{tikzpicture}
	\caption{To illustrate how Condition \textbf{(A1)} works, consider the figure above. In this case, the connected components of $\overline{\gamma}'$ are dotted and the connected components of $\overline{\gamma}$ are grey. One can readily see that there is a connected component of $(\overline{\gamma})^c$ that has a nonempty intersection with $V(\overline{\gamma}')$ but does not fully contain it. In order to fix such a problem, one should separate $\overline{\gamma}'$ in three different sets.}
\end{figure}
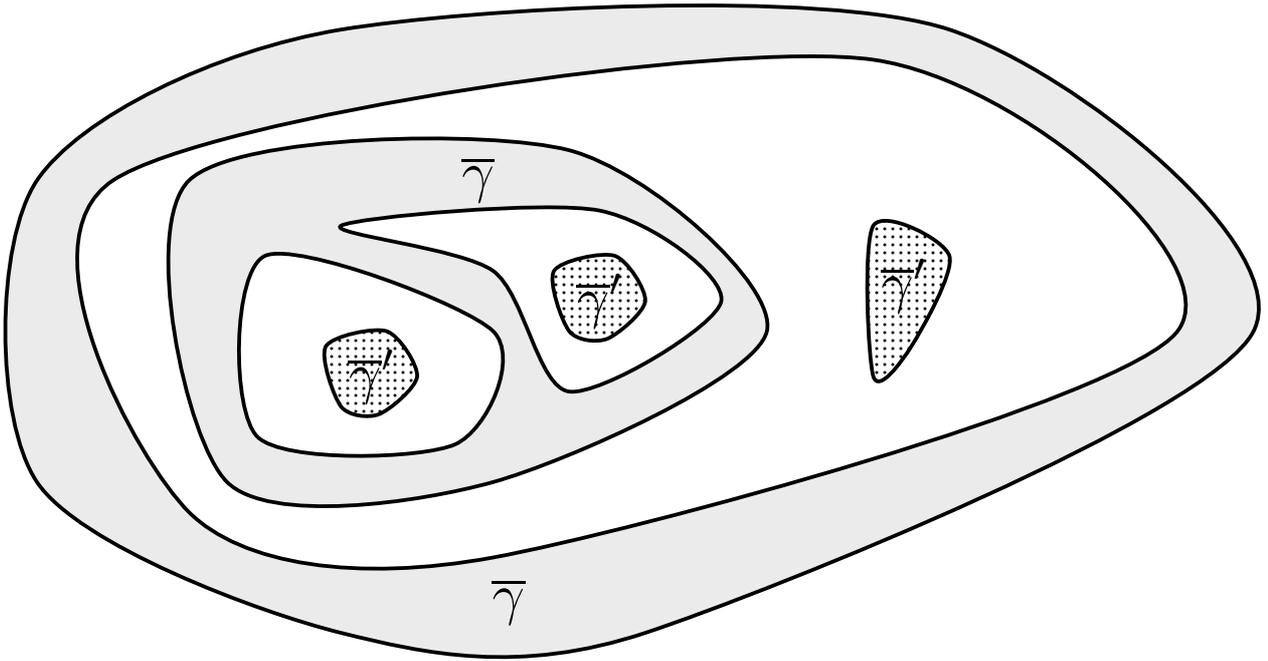

\subsection{The definition and a discussion about contours on long-range Ising models}

For a fixed configuration $\sigma$ with finite boundary $\partial \sigma$, the $(M,a,r)$-partitions will be the \textit{support of the contours},  subsets of $ \Z^d$ where every point is incorrect. The starting point for contour-based phase transition arguments for $d\geq 2$ long-range models was Park's extension of Pirogov-Sinai theory, developed in \cite{Park1, Park2}. His methods applied for the problem in consideration in this work only allowed to study long-range interactions with $\alpha>3d+1$.  At this point, although we are strongly inspired by the papers of Fr\"{o}hlich and Spencer \cite{Fro3, Fro1, Fro2} (See also \cite{Imbr1, Imbr2}) and the previous authors, we implemented slight modifications that allow us to cover all the region $\alpha > d$. 

The first point is that, differently from the original papers of Fr\"{o}hlich and Spencer, we have no arithmetic condition over the $(M,a,r)$-partitions, which means we do not ask the sum of the spins should be zero over the support of the contours. In particular, there is no constraints over the size of the contours they could have an even or odd number of points of $\Z^d$. The definition is stated only in terms of the distances among subsets of $\Z^d$.

	In order to control the entropy (the quantity) of contours, we needed to introduce a parameter $r$. It is worthwhile to stress that in the original works of Fr\"{o}hlich and Spencer \cite{Fro3, Fro1, Fro2} the parameter $r=1$.  
	
	The exponent $a$ plays an important role in our arguments.  When $\alpha$ is close to $d$, the interaction is stronger, and the contours should be far from each other. In the original papers of Fr\"{o}hlich and Spencer, $a$  is chosen as a fixed number or to belong in a finite interval as follows.  In the first paper \cite{Fro3}, for bidimensional models, we have $\frac{3}{2} < a <2$.  In the paper about the Dyson model when $d=1$ and $\alpha =2$, we have $a=\frac{3}{2}$.  In the paper of multidimensional random Schr\"{o}dinger operators,  they assumed $1 \leq a <2$, in section 4 of \cite{Fro2}. These papers use the idea of multiscale analysis to study different problems, which strongly inspired us in the definition and construction of our contours. A brief summary mentioning the multiscale methods and its power is in \cite{Simon} and at the references therein.
	
After this,  Cassandro, Ferrari, Merola, Presutti \cite{Cass} defined the contours with a more geometrical approach, using triangles. Their energy bounds for erasing a contour depend on these triangles. They choose $a = 3$ for unidimensional long-range Ising models with $\alpha \in (2-\alpha^+, 2]$, where $\alpha^+=\log(3)/\log(2) - 1$.  This is an important point; while our arguments work for any $2\leq d<\alpha$,  Littin and Picco proved that in the unidimensional case, it is impossible to produce a direct proof of the phase transition using the Peierls' argument and definition of contour in \cite{Cass}, although they prove the phase transition for the entire region with $\alpha \in (1,2]$. The papers  \cite{Cass} and \cite{LP} also assume the extra assumption $J(1) \gg 1$, which means that first neighbor interaction should be big enough, and the model simulates a short-range behavior. Recently it was proved that this extra assumption could be removed \cite{Eric2, Kim}.

 The following proposition guarantees  the existence of an $(M,a,r)$-partition for each configuration $\sigma$ with finite boundary.
\begin{proposition}\label{prop1}
	Fix real numbers $M,a,r>0$. For every $\sigma \in \Omega$ with finite boundary there is an $(M,a,r)$-partition $\Gamma(\sigma)$. Moreover, it satisfies condition \textbf{(A1)}.
\end{proposition}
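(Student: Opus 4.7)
My plan is to settle the proposition by exhibiting an explicit $(M,a,r)$-partition, rather than by running any elaborate multi-scale procedure. If $\partial\sigma=\emptyset$ then the empty family $\Gamma(\sigma)=\emptyset$ trivially qualifies, so I focus on $\partial\sigma\neq\emptyset$ and set $\Gamma(\sigma) := \{\partial\sigma\}$, i.e., take the whole boundary as a single contour equipped with the trivial decomposition $n=1$ and $\overline{\gamma}_1 = \partial\sigma$.

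Verification of the two conditions of Definition \ref{def:d_condition} is then routine. For \textbf{(A)}: the union of the single element of $\Gamma(\sigma)$ is $\partial\sigma$; the disjointness requirement and the ``contained in only one connected component of $(\overline{\gamma})^c$'' clause both concern pairs of \emph{distinct} elements, of which there are none. For \textbf{(B)}: the admissible range $1\leq n\leq 2^r-1$ reduces to $r\geq 1$, which always holds in the paper's setup since $r=\lceil \log_2(a+1)\rceil + d +1 \geq 4$ whenever $d\geq 2$; part \textbf{(B1)} is the identity $\overline{\gamma}=\overline{\gamma}_1$ and part \textbf{(B2)} is again vacuous.

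I do not foresee any real technical obstacle here: the proposition amounts to checking that the defining conditions are never jointly impossible, and once one notices that \textbf{(A)} and \textbf{(B2)} are both vacuous at the coarsest partition, there is nothing to prove. The genuine content, used later in the entropy and Peierls estimates, is not bare existence but the \emph{finest} $(M,a,r)$-partition extracted via the preceding proposition by intersecting all available partitions. To generate genuinely non-trivial partitions one could run a multi-scale merging procedure in the spirit of Fr\"{o}hlich and Spencer: start with the connected components of $\partial\sigma$ as single-piece blocks, iteratively merge any pair violating \textbf{(B2)} (the merged block inheriting the union of the two prior decompositions), and regroup further whenever a merged block comes to straddle two connected components of another block's complement so as to preserve \textbf{(A)} (as illustrated in Figure 3). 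Finiteness of $\partial\sigma$ forces termination, and intersecting over all partitions produced in this way recovers the finest one. For the existence statement of Proposition \ref{prop1}, however, the trivial partition already suffices.
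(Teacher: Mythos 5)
You are correct that the bare existential claim is already witnessed by the trivial partition $\Gamma(\sigma)=\{\partial\sigma\}$: taking $n=1$ and $\overline{\gamma}_1=\partial\sigma$ makes \textbf{(B1)} an identity, while \textbf{(A)} and \textbf{(B2)} are vacuous for a singleton family. (One needs $r\geq 1$ for $1\leq n\leq 2^r-1$ to be satisfiable; for $r<1$ no partition of a non-empty $\partial\sigma$ can exist at all, and the paper's own inductive peeling with the threshold $|v(G)|\leq 2^r-1$ would also stall, so both arguments tacitly assume $r\geq 1$, which the paper's choice $r=\lceil\log_2(a+1)\rceil+d+1$ always guarantees.) So your proof of the proposition as literally stated is valid and much shorter than the paper's.

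The catch is that the paper's proof is deliberately constructive and the construction carries information that is used later, so the trivial witness cannot be substituted without breaking the downstream arguments. The paper covers $\partial\sigma$ with $n$-cubes at increasing scales, builds at each scale a graph whose edges join cubes within distance $Md^a 2^{an}$, and peels off every connected component with at most $2^r-1$ cubes before passing to the next scale. The resulting partition has the structural feature that a piece of a contour which is small (at most $2^r-1$ cubes) and isolated at scale $n$ is \emph{never} glued to anything else. This is exactly what the proof of Proposition~\ref{prop4} invokes --- ``if $|v(G)|\leq 2^r-1$, by our construction in Proposition~\ref{prop1} the set $v(G)$ would be separated into another element of the $(M,a,r)$-partition'' --- to obtain the geometric contraction $|\mathscr{C}_{rn}(\Sp(\gamma))|\leq 2^{-(r-d-1)}|\mathscr{C}_{rg(n)}(\Sp(\gamma))|$, which in turn yields the entropy bound $V_r(\Sp(\gamma))\leq\kappa|\gamma|$ feeding Proposition~\ref{importprop}. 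For the singleton partition $\{\partial\sigma\}$, or indeed for a generic $(M,a,r)$-partition, this contraction has no reason to hold. You gesture at this when you say the ``genuine content'' lies elsewhere, but you misplace it: what Proposition~\ref{prop4} uses is not minimality of the finest partition but the explicit cube-peeling structure of the construction in Proposition~\ref{prop1}. Your sketched bottom-up alternative --- starting from connected components of $\partial\sigma$ and merging pairs that violate \textbf{(B2)} --- also has a real gap as stated: merging two blocks each decomposed into up to $2^r-1$ pieces can produce a decomposition with more than $2^r-1$ pieces, so \textbf{(B)} is not preserved under the merge; passing to a coarser cube scale, as the paper does, is precisely the device that repairs this.
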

\begin{proof}
	For each $x \in \Z^d$ and $n> 0$ we define an $n$-cube $C_n(x) \subset \Z^d$ as
	\begin{equation}\label{cubes}
	C_n(x) \coloneqq \left(\prod_{i=1}^d[2^{n-1}x_i-2^{n-1},2^{n-1}x_i+2^{n-1}]\right)\cap \Z^d.
	\end{equation}
	These cubes have sidelength $2^n$ and center at the point $2^{n-1} x$. For $n=0$, we establish the convention that $C_{0}(x) = x$, for any point $x \in \Z^d$. For each $\Lambda \Subset \Z^d$ and $n\geq 0$, we define $\mathscr{C}_n(\Lambda)$ as a minimal cover of $\Lambda$ by $n$-cubes. For each cover $n\geq 0$, we define the graph $G_n(\Lambda) = (v(G_n(\Lambda)),e(G_n(\Lambda)))$ by $v(G_n(\Lambda))\coloneqq\mathscr{C}_n(\Lambda)$ and $e(G_n(\Lambda))\coloneqq \{(C_n(x),C_n(y)): \dis(C_n(x),C_n(y))\leq Md^a2^{an}\}$.
	
	Note that $d2^n$ is the diameter in the $\ell^1$-norm of an $n$-cube. Let $\mathscr{G}_n(\Lambda)$ be the set of all connected components of the graph $G_n(\Lambda)$ and, for each $G \in \mathscr{G}_n(\Lambda)$, define
	\[
	\gamma_G \coloneqq \bigcup_{C_n(x) \in v(G)} (\Lambda \cap C_n(x)).
	\]
	We are ready to establish the existence of an $(M,a,r)$-partition for the boundary $\partial\sigma$ of a configuration  $\sigma$. Set $\partial \sigma_0:=\partial\sigma$ and 
	\[
	\mathscr{P}_0:=\{G \in \mathscr{G}_0(\partial \sigma_0) : |v(G)|\leq 2^r-1\}.
	\]
	Notice that this set separates all points that are distant by at least $M d^a$. Define inductively, for $n\geq 1$,  
	\[
	\mathscr{P}_n:=\{G \in \mathscr{G}_n(\partial \sigma_n): |v(G)|\leq 2^r-1\},
	\]
	where $\partial\sigma_n:=\partial\sigma_{n-1}\setminus \underset{G \in \mathscr{P}_{n-1}}{\bigcup}\gamma_G$, for $n\geq1$. Since the $n$-cubes invade the lattice, when we continue increasing $n$, there exists $N\geq 0$ such that $\partial\sigma_n = \emptyset$ for every $n \geq N$. In this case, we define $\mathscr{P}_n = \emptyset$. Let $\mathscr{P} \coloneqq \underset{n\geq 0}{\bigcup}\mathscr{P}_n$. We are going to show that the family $\Gamma(\sigma)\coloneqq \{\gamma_G: G \in \mathscr{P}\}$ is an $(M,a,r)$-partition. 
	
	In order to show that Condition \textbf{(B)} is satisfied, we will construct families of subsets, with at most $2^r-1$ elements, where Inequality \eqref{B_distance} is verified. We will write $G_n \coloneqq G_n(\partial \sigma_n)$ to simplify our notation. Take distinct $\gamma_G, \gamma_{G'} \in \Gamma(\sigma)$. There are positive integers $n,m$, with $n \leq m$, such that $G \in \mathscr{P}_n$ and $G' \in \mathscr{P}_m$. Let $G''$ be the subgraph of $G_n$ such that $v(G'')$ covers $\gamma_{G'}$ and it is minimal in the sense that all other subgraphs $G'''$ of $G_n$ satisfying this property have $|v(G''')|\geq |v(G'')|$. Thus, defining $\gamma_k = \gamma_G\cap C_n(x_k)$, for each $C_n(x_k) \in v(G)$, we have,
	\be\label{eq_distance}
	\dis(\gamma_G, C_n(z)) = \min_{1 \leq k \leq |v(G)|}\dis(\gamma_k, C_n(z)) \geq \min_{1 \leq k \leq |v(G)|} \dis(C_n(x_k),C_n(z)),
	\ee
	for each $C_n(z)\in v(G'')$. There is no edge between the subgraph $G''$ and the connected component $G$, by construction. 
	%Separate in cases, if $G' \in \mathscr{P}_n$, then $G'' = G'$ and if they intersect, they must be the same connected component. This is in contradiction with the fact that $G$ and $G'$ are distinct. Assume $n>m$. Then, we have \gamma_{G'} \subset \partial \sigma_{n+1} \subset \partial \sigma_n \setminus \gamma_G$. Then, the cubes in $G_n$ used to cover $\gamma_{G'}$ are  part of the cubes used to cover $\partial\sigma_n \setminus \gamma_G$ that are disconnected from $G$.
	Thus, 
	\[
	\dis(C_n(x_k),C_n(z)) > Md^a 2^{an}.
	\]
	Consider, also, the sets $\gamma'_j = \gamma_{G'}\cap C_m(y_j)$, where $C_m(y_j) \in v(G')$. Then
	\be
		\dis(\gamma_G, \gamma_{G'})= \min_{1 \leq j \leq |v(G')|} \min_{\substack{C_n(z) \in v(G'') \\ C_n(z) \cap \gamma'_j \neq \emptyset}}\dis(\gamma_G,\gamma'_j \cap C_n(z)) \geq \min_{ C_n(z) \in v(G'')} \dis(\gamma_G, C_n(z)).
	\ee
	Using Inequality \eqref{eq_distance}, we arrive at the inequality $\dis(\gamma_G,\gamma_{G'})> Md^a 2^{an}$. Note that, by construction, both $\{\gamma_k\}_{1\leq k \leq |v(G)|}$ and $\{\gamma'_j\}_{1\leq j \leq |v(G')|}$ satisfy, respectively, 
	
	%Since $\gamma_k \subset C_n(x_k)$, we know that the diameter $\diam(\gamma_k) \leq \diam(C_n(x_k))= d2^n$. Hence
	\[
	\max_{1\leq k \leq |v(G)|}\diam(\gamma_k) \leq d2^n \quad \text{and} \quad \max_{1\leq j \leq |v(G')|}\diam(\gamma'_j) \leq d2^m.
	\] 
	Since we assumed that $n \leq m$, we get
	\[
	\min\left\{\max_{1\leq k \leq |v(G)|}\diam(\gamma_k) ,\max_{1\leq j \leq |v(G')|}\diam(\gamma'_j)\right\}^a \leq d^a 2^{an},
	\]
	and we proved that the family $\Gamma(\sigma)$ satisfies Condition \textbf{(B)}. 
	
	In order to establish Condition \textbf{(A)}, we first note that the equality $\partial \sigma = \bigcup_{G \in \mathscr{P}}\gamma_G$ follows by construction. The elements of $\Gamma(\sigma)$ are pairwise disjoint since Inequality \eqref{B_distance} is satisfied.
	
	Eventually, let us present the proof of Condition \textbf{(A1)} for our constructed partition. Let $\gamma_G, \gamma_{G'} \in \Gamma(\sigma)$, with $V(\gamma_G)\cap V(\gamma_{G'})\neq \emptyset$. There are positive integers $n,m$ satisfying $n\leq m$ such that $G \in \mathscr{P}_n$ and $G' \in \mathscr{P}_m$. Consider $G''$, as before, the minimal subgraph of $G_n$ that covers $\gamma_{G'}$. Since $\gamma_G \cap \gamma_{G'}=\emptyset$ it holds $\gamma_G \subset (\gamma_{G'})^c$. We will show that $\gamma_G$ must be contained in only one connected component of $(\gamma_{G'})^c$. Every $n$-cube $C_n(x) \in v(G)$ cannot have a nonempty intersection with $\gamma_{G'}$, since the last one is covered by the $n$-cubes in $v(G'')$ and there is no edge between $G$ and $G''$. This is sufficient to conclude that each $n$-cube in $v(G)$ is in only one connected component of $(\gamma_{G'})^c$.
	 
	 %Since the $n$-cubes are themselves connected, this is sufficient to conclude that each $n$-cube in $v(G)$ is in only one connected component of $(\gamma_{G'})^c$. 
	
	If $(\gamma_{G'})^c$ has only one connected component or $|v(G)|=1$, there is nothing to prove. Suppose, by contradiction, that there exist two $n$-cubes $C_n(x), C_n(x') \in v(G)$ in different connected components of $(\gamma_{G'})^c$. We claim 
	\be\label{eq1111}
	\dis(C_n(x), C_n(x')) \geq 2M d^a 2^{an}.
	\ee
	Indeed, take two points $z \in C_n(x)$ and $z' \in C_n(x')$ such that $\dis(C_n(x), C_n(x'))=|z-z'|$. Let $\lambda_{z,z'}$ be a minimal path in $\Z^d$ starting at $z$ and ending at $z'$. Note that $|\lambda_{z,z'}| = |z-z'|$. Since $C_n(x)$ and $C_n(x')$ are in different connected components of $(\gamma_{G'})^c$ there must exist $y \in \lambda_{z,z'} \cap \gamma_{G'}$. We can break $\lambda_{z,z'}$ as the union of minimal paths $\lambda_{z,y}$ and $\lambda_{y,z'}$. This fact implies
	\begin{align*}
	\dis(C_n(x), C_n(x')) &= |z-y| + |y-z'|\nonumber \\
	&\geq \min_{y' \in \gamma_{G'}}[|z-y'| + |y'-z'|] \nonumber\\
	&\geq \min_{y' \in \gamma_{G'}}[\dis(C_n(x),y') + \dis(C_n(x'),y')] \nonumber\\
	&\geq \min_{C_n(z) \in v(G'')}[\dis(C_n(x),C_n(z)) + \dis(C_n(x'),C_n(z))]\\
	& \geq 2M d^a 2^{an},
	\end{align*}
	where the last inequality is due to the fact that the subgraphs $G$ and $G''$ have no edge between them. Inequality \eqref{eq1111} is valid for any pair of $n$-cubes in different connected components of $(\gamma_{G'})^c$, thus our discussion implies that $C_n(x)$ and $C_n(x')$ are vertices of two different connected components. This cannot happen since $G$ is connected, we arrive at a contradiction. 
\end{proof}
\begin{definition}
	For $\Lambda \subset \Z^d$, we define the \emph{inner boundary} $\din \Lambda \coloneqq \{x \in \Lambda : \inf_{y \in \Lambda^c} |x-y| =1\}$,
	and the \emph{edge boundary} as $\partial\Lambda \coloneqq \{\{x,y\} \subset \Z^d: |x-y|=1, x \in \Lambda, y \in \Lambda^c\}$.
\end{definition}
\begin{remark}
	The usual isoperimetric inequality says $2d|\Lambda|^{1- \frac{1}{d}} \leq |\partial \Lambda|$. The inner boundary and the edge boundary are related by  $|\din \Lambda| \leq |\partial \Lambda| \leq 2d |\din \Lambda|$, yielding us the inequality $|\Lambda|^{1-\frac{1}{d}} \leq |\din\Lambda|$, which we will use in the rest of the paper. 
\end{remark}
 In order to define the label of a contour, we must be careful since the inner boundary of a set $\overline{\gamma}$ may have different signs; see the figure below. 
%Similar to Pirogov-Sinai theory, we can define a label function for the sets $\overline{\gamma} \in \Gamma(\sigma)$, but we need to take care since, as the contours are not connected, it is possible that the outer boundary of $\overline{\gamma}$ may have different signs (see the figure below).
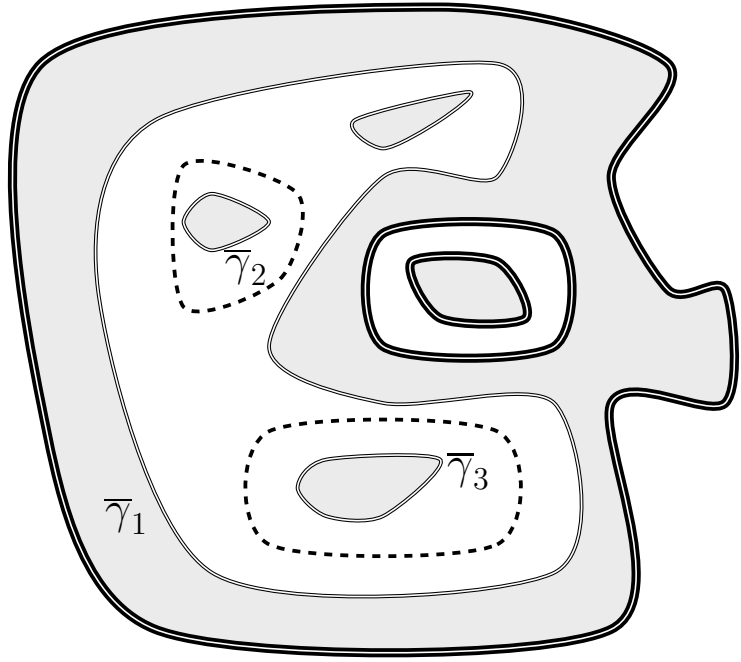
\begin{figure}[H]
	\centering
	
	\tikzset{every picture/.style={line width=0.75pt}} 
	
	\begin{tikzpicture}[scale=1.2]
		%\draw [gray!50]  (0,0) -- (1,1) --(2,1.5)-- (3,1) -- (3,0)  -- (2,-1) -- cycle;
		%gamma_1
		\draw [line width=0.5mm,fill=black!8!white,double] plot [smooth cycle]  coordinates {(0,-1) (4,-1) (4,1) (5,1) (5,2) (4.5,2) (4,3) (4.5,4) (2.5,4.5) (-1,4) (-1,1)}--cycle;
		\draw [line width=0.5mm,fill=white, thin, double] plot [smooth cycle] coordinates {(0.5,-0.5) (3.5,-0.5) (3.5,1) (2,1) (1, 1.5) (2,3) (3,3) (3,4) (0,3.5) (-0.5,2)}--cycle;
		\draw [line width=0.5mm,fill=white, double] plot [smooth cycle] coordinates {(2,1.5) (3.5,1.5) (3.5,2.5) (2,2.5)}--cycle;
		\draw [line width=0.5mm,fill=black!8!white, double] plot [smooth cycle] coordinates {(2.5,1.75) (3.25,1.75) (3,2.25) (2.25,2.25)}--cycle;
		\draw [line width=0.5mm,fill=black!8!white, thin, double] plot [smooth cycle] coordinates {(2,3.25) (2.5,3.5) (2.75,3.75) (1.75,3.5)}--cycle;
		
		%gamma_2
		\begin{scope}[shift={(0,0.1)}]
		\draw [ line width=0.5mm, draw=black, dashed] plot [smooth cycle] coordinates {(0.25,1.75) (1.1,2) (1.25,2.75) (0.25,3) }--cycle;
		\draw [line width=0.5mm, fill=black!8!white, thin, double] plot [smooth cycle] coordinates {(0.5,2.25) (1,2.5) (0.5,2.75) (0.25,2.5)}--cycle;
		\end{scope}
		%gamma_3
		\draw [line width=0.5mm, draw=black, dashed] plot [smooth cycle] coordinates {(1,-0.25) (3,-0.25) (3,0.75) (1,0.75) }--cycle;
		\draw [line width=0.5mm, fill=black!8!white, thin, double] plot [smooth cycle] coordinates {(1.5,0) (2,0) (2.5,0.5) (1.5,0.5) (1.25,0.25) }--cycle;
		
		%nodes
		\draw (-0.25,0) node[scale=1] {\Large $\overline{\gamma}_1$};
		\draw (0.8,2.2) node[scale=1] {\Large $\overline{\gamma}_2$};
		\draw (2.75,0.4) node[scale=1] {\Large $\overline{\gamma}_3$};
		
	\end{tikzpicture}
	\caption{An example of $\Gamma(\sigma)=\{\overline{\gamma}_1,\overline{\gamma}_2, \overline{\gamma}_3\}$, with $\overline{\gamma}_1$ having regions in the  inner boundary with different signs. In the figure, the grey region are for incorrect points, the thin and thick border corresponds to, respectively, $+1$ and $-1$ labels.}
\end{figure}

For a set $\overline{\gamma} \in \Gamma(\sigma)$, let $\overline{\gamma}^{(1)},\dots,\overline{\gamma}^{(n)}$ be its connected components. To define the label of $\overline{\gamma}$, we must introduce the following concept. A connected component $\overline{\gamma}^{(k)}$ of $\overline{\gamma}$ is called \emph{external} if for any other connected component $\overline{\gamma}^{(k')}$ with $V(\overline{\gamma}^{(k')})\cap V(\overline{\gamma}^{(k)}) \neq \emptyset$, we have $V(\overline{\gamma}^{(k')})\subset V(\overline{\gamma}^{(k)})$.
\begin{lemma}
	Any configuration $\sigma$ with finite boundary is constant on $\din V(\overline{\gamma})$, for each $\overline{\gamma} \in \Gamma(\sigma)$. 
\end{lemma}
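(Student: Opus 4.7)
My plan is a three-step topological argument that exploits Condition \textbf{(B)} to isolate $\overline{\gamma}$ from the other elements of the partition $\Gamma(\sigma)$.

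\emph{Step 1.} First I would establish the inclusion $\din V(\overline{\gamma}) \subseteq \overline{\gamma}$. Take $x \in \din V(\overline{\gamma})$; by definition $x$ has a neighbor $y \in V(\overline{\gamma})^c = \overline{\gamma}^{(0)} \subseteq \overline{\gamma}^c$. If $x$ were not in $\overline{\gamma}$, then $x$ would lie in $\I(\overline{\gamma}) \subseteq \overline{\gamma}^c$, and being adjacent to $y$ it would sit in the same connected component of $\overline{\gamma}^c$ as $y$; hence $x \in \overline{\gamma}^{(0)}$, contradicting $x \in V(\overline{\gamma})$.

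\emph{Step 2.} Next I would show that any such exterior neighbor $y$ is a correct point of $\sigma$, so that $\sigma_x = \sigma_y$. Suppose not: then $y \in \partial\sigma$, and since $y \notin \overline{\gamma}$ it belongs to some $\overline{\gamma}' \in \Gamma(\sigma)$ distinct from $\overline{\gamma}$. But then $\dis(\overline{\gamma}, \overline{\gamma}') \le |x-y| = 1$, which contradicts the separation \eqref{B_distance} once $M$ and the parameters are such that the minimum on the right-hand side of \eqref{B_distance} is at least one (this is the case for the partition built in Proposition \ref{prop1}, where non-trivial pieces at scale $n \ge 0$ are forced to be at distance strictly greater than $Md^{a}2^{an}$). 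Hence $y$ is correct and $\sigma_x = \sigma_y$.

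\emph{Step 3.} The crux is to show that $\sigma_y$ does not depend on the chosen $x$. I would consider the outer shell
\[
A := \{z \in \overline{\gamma}^{(0)} : \dis(z, V(\overline{\gamma})) = 1\},
\]
which by a uniform version of Step 2 lies entirely in the correct-point set $\Z^d \setminus \partial\sigma$. Because every connected component of $V(\overline{\gamma})$ is simply connected and $\overline{\gamma}^{(0)}$ is connected and unbounded, $A$ is contained in a single connected region of $\Z^d \setminus \partial\sigma$. Along any edge joining two correct points the spin is constant, so $\sigma$ is constant throughout that region and, in particular, on $A$. Combining the three steps, $\sigma$ takes a single value on $\din V(\overline{\gamma})$; that value will be the label of the contour.

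\textbf{Main obstacle.} Step 3 is the substantive one: one must combine the topological structure of $V(\overline{\gamma})$ (its components being simply connected, so the outer shell is connected) with the quantitative separation coming from \eqref{B_distance} (to rule out a foreign $\overline{\gamma}'$ cutting through the shell). Steps 1 and 2 are essentially bookkeeping on the definitions of volume, interior and inner boundary.
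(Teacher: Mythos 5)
Your Step~1 (that $\din V(\overline{\gamma})\subseteq\overline{\gamma}$) is fine, but the argument after that has a genuine gap, and it is precisely the gap you flag as the ``main obstacle.'' The decisive claim in Step~3 --- that the outer shell $A$ sits inside a single connected region of $\Z^d\setminus\partial\sigma$ --- does not follow from the two ingredients you invoke. First, the parenthetical ``its components being simply connected, so the outer shell is connected'' is false: if $V(\overline{\gamma})$ has two well-separated simply connected components (recall $\overline{\gamma}$ need not be connected), the shell $A$ is the disjoint union of two shells and is disconnected. Second, the quantitative separation \eqref{B_distance} guarantees nothing about contours far from $\overline{\gamma}$: a foreign $\overline{\gamma}'$ could sit deep in $\overline{\gamma}^{(0)}$, at enormous distance from $\Sp(\gamma)$, and nevertheless enclose one component of $V(\overline{\gamma})$ while leaving another outside. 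That situation separates the two pieces of $A$ into different components of $\Z^d\setminus\partial\sigma$, and Condition \textbf{(B)} says nothing against it. So the step that carries all the weight is unsupported.

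What is actually needed is Condition \textbf{(A)}, and that is what the paper's proof uses. The paper argues at the level of external connected components: if two of them carried different signs on their inner boundaries, there would have to be a region of incorrect points surrounding one of them; that region cannot be part of $\overline{\gamma}$ (else the surrounded component is not external), so it belongs to some $\overline{\gamma}'\in\Gamma(\sigma)$, and then $\overline{\gamma}$ meets at least two connected components of $(\overline{\gamma}')^c$ --- exactly what Condition \textbf{(A)} forbids. Your Step~3 should be replaced by an argument of this form; the separation estimate from \textbf{(B)} is the wrong tool. A lesser issue: in Step~2, the separation bound $\dis(\overline{\gamma},\overline{\gamma}')>M\min\{\dots\}^a$ can degenerate to $\dis>0$ when one of the partition pieces has diameter $0$ (a single site), so for a general $(M,a,r)$-partition you cannot conclude $\dis\ge 2$ from \eqref{B_distance} alone. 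This is patchable (it holds for the partitions built in Proposition~\ref{prop1} once $Md^a>1$), but as written the step is not airtight.
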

\begin{proof}
	%Let $\overline{\gamma} \in \Gamma(\sigma)$ and $G \in \mathscr{P}$ such that $\gamma=\gamma_G$. Suppose that there are two external connected components $\overline{\gamma}^{(i)}$ and $\overline{\gamma}^{(j)}$ of $\overline{\gamma}$ with $\lab(\overline{\gamma}^{(i)})\neq \lab(\overline{\gamma}^{(j)})$. 	
	 Note that $V(\overline{\gamma}) $ is the union of $V(\overline{\gamma}^{(k)})$ for all its external connected components $ \overline{\gamma}^{(k)}$. Suppose there are $\overline{\gamma}^{(k)},\overline{\gamma}^{(j)}$ external connected components of $\overline{\gamma}$ such that the sign of $\sigma$ on  $\din V(\overline{\gamma}^{(k)})$ is different from the sign on $\din V(\overline{\gamma}^{(j)})$. Then, the configuration $\sigma$ must change sign inside $V(\overline{\gamma}^{(k)})^c\cap V(\overline{\gamma}^{(j)})^c$. Since $\sigma$ is constant outside some finite set $\Lambda$, either $\overline{\gamma}^{(k)}$ or $\overline{\gamma}^{(j)}$ must be surrounded by a different region of incorrect points, let us call it $\overline{\gamma}^{(l)}$. We can assume that the connected component surrounded by $\overline{\gamma}^{(l)}$ is $\overline{\gamma}^{(k)}$. The set $\overline{\gamma}^{(l)}$ cannot be a connected component  of $\overline{\gamma}$, otherwise the set $\overline{\gamma}^{(k)}$ would not be external. If $\overline{\gamma}^{(l)}$ is a connected component of an other element $\overline{\gamma}' \in \Gamma(\sigma)$, then $\overline{\gamma}$ have a nonempty intersection with at least two connected components of $(\overline{\gamma}')^c$, contradiction to Condition \textbf{(A)}. 
\end{proof}
The \textit{label} of $\overline{\gamma}$ is defined as the function $\lab_{\overline{\gamma}} :\{(\overline{\gamma})^{(0)}, \I(\overline{\gamma})^{(1)}\dots, \I(\overline{\gamma})^{(n)}\} \rightarrow \{-1,+1\}$ defined as: $\lab_{\overline{\gamma}}(\I(\overline{\gamma})^{(k)})$ is the sign of the configuration $\sigma$ in $\din V(\I(\overline{\gamma})^{(k)})$, for $k\geq 1$, and $\lab_{\overline{\gamma}}((\overline{\gamma})^{(0)})$ is the sign of $\sigma$ in $\din V(\overline{\gamma})$.
\begin{definition}
Given a configuration $\sigma$ with finite boundary, its \emph{contours} $\gamma$ are pairs $(\overline{\gamma},\lab_{\overline{\gamma}})$,  where $\overline{\gamma} \in \Gamma(\sigma)$ and $\lab_{\overline{\gamma}}$ is the label function defined previously. The \emph{support of the contour}  $\gamma$ is defined as $\Sp(\gamma)\coloneqq \overline{\gamma}$ and its \emph{size} is given by $|\gamma| \coloneqq |\Sp(\gamma)|$.
\end{definition}

Another important concept for our analysis of phase transition is the interior of a contour. The following sets will be useful 
\[
\I_\pm(\gamma) = \bigcup_{\substack{k \geq 1, \\ \lab_{\overline{\gamma}}(\I(\Sp(\gamma))^{(k)})=\pm 1}}\I(\Sp(\gamma))^{(k)} , \;\;\;
\I(\gamma) = \I_+(\gamma) \cup \I_-(\gamma), \;\;\;
V(\gamma) = \Sp(\gamma) \cup \I(\gamma),
\]
where $\I(\Sp(\gamma))^{(k)}$ are the connected components of $\I(\Sp(\gamma))$. Notice that the interior of contours in Pirogov-Sinai theory are at most unions of simple connected sets. In our case, they are only connected, i.e., they may have holes. One feature that our contours share with Pirogov-Sinai theory is the abscence of a bijective correspondence between contours and configurations (see section 7.2.6 of \cite{Vel}, and \cite{Park1}). Usually there is more than one configuration giving the same boundary set. Also, it is not true that for all families of contours $\Gamma \coloneqq \{\gamma_1, \gamma_2, \dots, \gamma_n\}$ there is a configuration $\sigma$ whose contours are exactly $\Gamma$. This happens because they may not form an $(M,a,r)$-partition and, even if this is the case, their labels may not be compatible. When such a configuration exists, we say that the family of contours $\Gamma$ is \emph{compatible}. 

\begin{figure}[H]
	\centering
	
	\tikzset{every picture/.style={line width=0.75pt}} 
	
	\begin{tikzpicture}[scale=0.5]
		%\draw [gray!50]  (0,0) -- (1,1) --(2,1.5)-- (3,1) -- (3,0)  -- (2,-1) -- cycle;
		%gamma_1
		\draw [line width=0.5mm,fill=black!8!white,  thin, double] plot [smooth cycle] coordinates {(0,-1) (4,-1) (4,1) (2.5,2) (2,3.5) (-1,4) (-1,1)}--cycle;
		\draw [line width=0.5mm,fill=white, double] plot [smooth cycle] coordinates {(1.5,0) (2,2) (0.5,2.5) (0,1)}--cycle;
		\draw [line width=0.5mm,fill=black!8!white, double] plot [smooth cycle] coordinates {(1,1) (1.5,1.5) (1,2) (0.5,1.5)}--cycle;
		
		%gamma_2
		\draw [line width=0.5mm,fill=black!8!white,  thin, double] plot [smooth cycle] coordinates {(5,4) (7,4) (7,8) (8,8) (8,10) (6,10) (6,11) (3,11) (3, 7)}--cycle;
		
		%gamma_3
		\draw [line width=0.5mm,fill=black!8!white,  thin, double] plot [smooth cycle] coordinates {(22,4) (24,4) (24,8) (25,8) (25,10) (23,10) (23,13) (19,13) (18,10) (16,6) (16,5)}--cycle;
		\draw [line width=0.5mm,fill=white, double] plot [smooth cycle] coordinates {(17,6) (23,5) (23,7) (20,8) }--cycle;
		\draw [line width=0.5mm,fill=black!8!white,  thin, double] plot [smooth cycle] coordinates {(20,6.25) (21,6.25) (21,6.9) (20,6.9) }--cycle;
		\draw [line width=0.5mm,fill=white, thin, double] plot [smooth cycle] coordinates {(19,10) 
			(22,10) (22,12) (20,12) }--cycle;
		\draw [line width=0.5mm,fill=black!8!white, double] plot [smooth cycle] coordinates {(20.25,10.75) (21.25,10.75) (21.25,11.4) (20.25,11.4) }--cycle;
		
		%Nodes
		\draw (0,0) node[scale=0.75] {\Huge$\gamma_1$};
		\draw (1,1.5) node[scale=0.75] {\large$\gamma_2$};
		\draw (5,7) node[scale=0.75] {\Huge$\gamma_3$};
		\draw (22,8.5) node[scale=0.75] {\Huge$\gamma_4$};
		\draw (20.5,6.5) node[scale=0.75] {\Large$\gamma_5$};
		\draw (20.8,11) node[scale=0.75] {\Large$\gamma_6$};
	\end{tikzpicture}
	\caption{Above we have two situations where incompatibility happens. In the first case, we have $\gamma_1$ and $\gamma_2$ two contours that are close, thus they should not be separated. In the case $\gamma_4,\gamma_5,\gamma_6$ we have the usual problem of labels not matching.}
\end{figure}
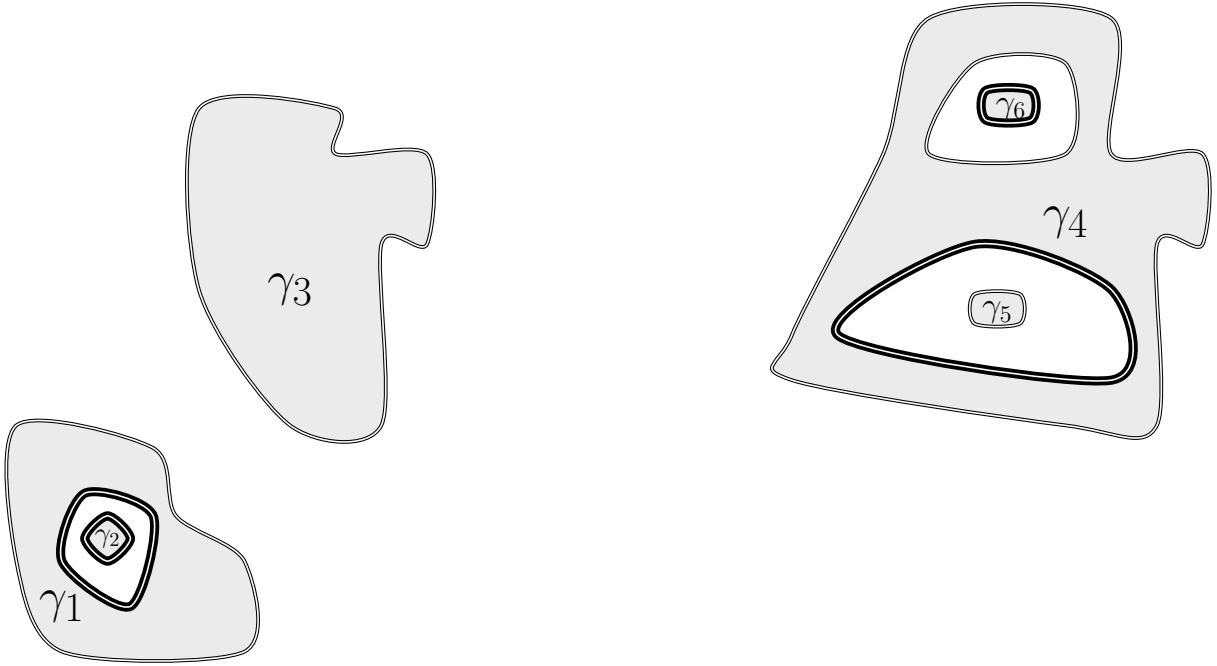

\subsection{Entropy Bounds}

The proofs in this section are highly inspired by Section 4 of Fr\"{o}hlich-Spencer \cite{Fro2}, the one-dimensional case studied in Fr\"{o}hlich-Spencer \cite{Fro1} and by Cassandro, Ferrari, Merola, Presutti \cite{Cass}. We say that a contour $\gamma$ in a family $\Gamma$ is \emph{external} if its external connected components are not contained in any other $V(\gamma')$, for $\gamma' \in \Gamma\setminus\{\gamma\}$. For each $\Lambda \subset \Z^d$, let us define the set of all external compatible families of contours $\Gamma$ with label $\pm$ contained in $\Lambda$ by
\[
\mathcal{E}^\pm_\Lambda \coloneqq\{\Gamma= \{\gamma_1, \ldots, \gamma_n\}: \Gamma \text{ is compatible, } \gamma_i \text{ is external}, \lab(\gamma_i)=\pm1, V(\Gamma) \subset \Lambda\},
\]
where $V(\Gamma)=\bigcup_{1 \leq i\leq n}V(\gamma_i)$ and $\lab(\gamma_i)$ is the value of the corresponding configuration compatible configuration in a point $x\in \din V(\gamma_i)$. When we write $\gamma \in \mathcal{E}^\pm_\Lambda$ we mean $\{\gamma\} \in \mathcal{E}^\pm_\Lambda$. We also write $\mathcal{E}_{\Z^d}^\pm\coloneqq \mathcal{E}^\pm$. To hold a Peierls-type argument, it is important to control the number of contours with a given fixed size. Hence, we need to find an upper bound of the number of contours with fixed size $|\gamma|$ containing a given point, meaning $0\in V(\gamma)$. To do so, we will need some auxiliary results. 

Since a contour does not need to be connected, the counting argument is different from the short-range case. Let us denote by $\mathscr{C}_n$ an arbitrary collection of $n$-cubes. For $n,m\ge 0$ with $n\le m$, we say that  
$\mathscr{C}_{n}$ is \emph{subordinated} to $\mathscr{C}_{m}$, denoted by $\mathscr{C}_n \preceq \mathscr{C}_m$, if $\mathscr{C}_m$ is a minimal cover of $\cup_{C_n(x)\in \mathscr{C}_n}C_n(x)$.
% when the first cover is subordinated to the second one. 
For each $n,m\geq 1$, with $n\leq m$, define
\[
N(\mathscr{C}_m, V_n) \coloneqq |\{\mathscr{C}_{n}: \mathscr{C}_n \preceq \mathscr{C}_m, |\mathscr{C}_n|=V_n\}|
\]
%This is 
be the number of collections of $n$-cubes $\mathscr{C}_n$ subordinated to a given $\mathscr{C}_m$ such that $|\mathscr{C}_n| = V_n$. 
The following two propositions are straightforward generalizations of Theorem 4.2 and Proposition 4.3 of \cite{Fro2}. The original proofs correspond to our case when $r=1$, and are included here for the benefit of the reader. 
%We are ready to state and prove the next proposition.

\begin{proposition}\label{appB:prop1}
	Let $r,n\geq 1$ be integers, and $\mathscr{C}_{rn}$ be a fixed collection of $rn$-cubes. Then there exists a constant $c\coloneqq c(d,r)>0$ such that
	\be\label{eq_appB}
	N(\mathscr{C}_{rn}, V_{r(n-1)}) \leq e^{c V_{r(n-1)}}.
	\ee
\end{proposition}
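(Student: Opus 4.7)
The plan is to reduce the count to a localization step followed by a binomial estimate. First I would observe that since $\mathscr{C}_{rn}$ is a \emph{minimal} cover of $\bigcup_{D\in\mathscr{C}_{r(n-1)}}D$, every $D\in\mathscr{C}_{r(n-1)}$ must intersect at least one $rn$-cube of $\mathscr{C}_{rn}$, and conversely every $C\in\mathscr{C}_{rn}$ must intersect at least one $D\in\mathscr{C}_{r(n-1)}$ (otherwise it would be removable). Hence the admissible $r(n-1)$-cubes all lie in the collection of $r(n-1)$-cubes meeting $\bigcup_{C\in\mathscr{C}_{rn}}C$.

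The key geometric input is an estimate $K(d,r)$ bounding the number of $r(n-1)$-cubes meeting a fixed $rn$-cube. From the definition~\eqref{cubes}, the cubes $C_{r(n-1)}(x)$ and $C_{rn}(y)$ intersect if and only if $\|x-2^r y\|_\infty\le 2^r+1$, so for each $y$ there are at most $K(d,r):=(2^{r+1}+3)^d$ admissible $x$, and by the symmetric count at most $K'(d,r)\le 4^d$ admissible $y$ for each fixed $x$. Combined with the minimality observation, this yields the auxiliary bound $|\mathscr{C}_{rn}|\le K'(d,r)\,V_{r(n-1)}$, via the map $C\mapsto D$ assigning to each $C\in\mathscr{C}_{rn}$ some $D\in\mathscr{C}_{r(n-1)}$ it intersects, whose fibres have size at most $K'(d,r)$.

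With the admissible $r(n-1)$-cubes localized to a pool of size at most $|\mathscr{C}_{rn}|\,K(d,r)\le K(d,r)K'(d,r)\,V_{r(n-1)}$, I would conclude by choosing which $V_{r(n-1)}$ of them form $\mathscr{C}_{r(n-1)}$:
\[
N(\mathscr{C}_{rn},V_{r(n-1)})\;\le\;\binom{K(d,r)K'(d,r)\,V_{r(n-1)}}{V_{r(n-1)}}\;\le\;\bigl(eK(d,r)K'(d,r)\bigr)^{V_{r(n-1)}},
\]
which gives~\eqref{eq_appB} with $c=1+\log\!\bigl(K(d,r)K'(d,r)\bigr)$, a constant depending only on $d$ and $r$.

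The only step that needs genuine care is extracting the inequality $|\mathscr{C}_{rn}|\le K'(d,r)\,V_{r(n-1)}$ from minimality of the cover; the rest is essentially mechanical. I do not foresee a serious obstacle here, since the case $r=1$ of the statement is precisely Theorem~4.2 of \cite{Fro3}, and the argument above reduces to that one after rescaling.
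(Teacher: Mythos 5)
Your proof is correct, and it takes a genuinely different route from the paper's. The paper's argument fixes, for each $C\in\mathscr{C}_{rn}$, the occupancy number $n_C$ of $r(n-1)$-cubes sitting inside $C$, bounds the count of configurations with prescribed occupancies by $\prod_C\binom{(2^{r+1}-1)^d}{n_C}$, and then sums over the (at most $2^{2dV_{r(n-1)}+1}$) admissible occupancy vectors. You instead use minimality in both directions to localize the entire pool of admissible $r(n-1)$-cubes to a set of size $O(V_{r(n-1)})$ and conclude with a single $\binom{O(V)}{V}\le (eO(1))^V$ estimate. Both yield a constant $c(d,r)$ of order $dr$, and your route is more compact, sidestepping the occupancy bookkeeping entirely.

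Two small points worth making explicit. First, the step ``$|\mathscr{C}_{rn}|\le K'(d,r)\,V_{r(n-1)}$'' presupposes that \emph{some} subordinated $\mathscr{C}_{r(n-1)}$ of cardinality $V_{r(n-1)}$ exists; you should note that if none exists then $N(\mathscr{C}_{rn},V_{r(n-1)})=0$ and \eqref{eq_appB} is vacuous, and otherwise the map $C\mapsto D$ with $D\cap C\neq\emptyset$ is well defined and your fibre bound applies. Second, the intersection criterion $\|x-2^ry\|_\infty\le 2^r+1$ is derived from the scaling in \eqref{cubes} for $r(n-1)\ge 1$; for $n=1$ the convention $C_0(x)=\{x\}$ makes it a single point, and the correct condition is $|x_i-2^{r-1}y_i|\le 2^{r-1}$, but this only improves the counts $K(d,r)$ and $K'(d,r)$, so your final bound survives unchanged. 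Neither affects the validity of the argument.
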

\begin{proof}
	For each $rn$-cube $C_{rn}(x) \in \mathscr{C}_{rn}$, let $N_{C_{rn}(x)}$ be the number of cubes in a collection of $r(n-1)$-cubes $\mathscr{C}_{r(n-1)}$ that are covered by $C_{rn}(x)$. Fix $(n_{C_{rn}(x)})_{C_{rn}(x) \in \mathscr{C}_{rn}}$, with $n_{C_{rn}(x)}\geq 1$, an integer solution to the inequality
	\be\label{100}
	V_{r(n-1)}\leq \sum_{C_{rn}(x) \in \mathscr{C}_{rn}}n_{C_{rn}(x)}
	\leq 2dV_{r(n-1)},
	\ee
	and define $N(\mathscr{C}_{rn}, V_{r(n-1)}, (n_{C_{rn}(x)})_{C_{rn}(x) \in \mathscr{C}_{rn}})$ to be the number of collections $\mathscr{C}_{r(n-1)}$ of $r(n-1)$-cubes subordinated to $\mathscr{C}_{rn}$ such that $|\mathscr{C}_{r(n-1)}|=V_{r(n-1)}$ and $N_{C_{rn}(x)}=n_{C_{rn}(x)}$ for each $C_{rn}(x) \in \mathscr{C}_{rn}$. We get
	\[
	N(\mathscr{C}_{rn}, V_{r(n-1)}) = \sum_{(n_{C_{rn}(x)})_{C_{rn}(x) \in \mathscr{C}_{rn}}}N(\mathscr{C}_{rn}, V_{r(n-1)}, (n_{C_{rn}(x)})_{C_{rn}(x) \in \mathscr{C}_{rn}}).
	\]
	%To proceed, we need to count the number of positions that a $r(n-1)$-cube can sit inside a $rn$-cube. 
	The number of positions that a $r(n-1)$-cube can be arranged inside a $rn$-cube is at most $(2^{r+1}-1)^d$. Since the number of combinations that $n_{C_{rn}(x)} r(n-1)$-cubes can be arranged in the cube $C_{rn}(x)$ is at most $\binom{(2^{r+1}-1)^d}{n_{C_{rn}(x)}}$ we get
	\[
	N(\mathscr{C}_{rn}, V_{r(n-1)}, (n_{C_{rn}(x)})_{C_{rn}(x) \in \mathscr{C}_{rn}}) \leq \prod_{C_{rn}(x) \in \mathscr{C}_{rn}}\binom{(2^{r+1}-1)^d}{n_{C_{rn}(x)}}.
	\]
	The number of solutions to \eqref{100} is bounded by $2^{2dV_{r(n-1)}+1}$, concluding that Inequality (\ref{eq_appB}) holds for $c = (2d+1)\log(2)+ d\log(2^{r+1}-1)$.
\end{proof}
 Given a subset $\Lambda \Subset \Z^d$ and integers $r\geq 1$ and $n\geq 0$, define the \emph{total volume} by
\be
V_r(\Lambda) \coloneqq \sum_{n=0}^{n_r(\Lambda)} |\mathscr{C}_{rn}(\Lambda)|,
\ee  
where $n_r(\Lambda) = \lceil \log_{2^r}(\diam(\Lambda))\rceil$ and $\mathscr{C}_{rn}(\Lambda)$ is a \emph{minimal cover} of $\Lambda$ with $rn$-cubes $C_{rn}(x)$ defined in (\ref{cubes}). Observe that $|\mathscr{C}_{0}(\Lambda)|=|\Lambda|$. 
Let $V\geq 1$ be a positive integer and $\mathscr{F}_V$ be the set defined by
\be
\mathscr{F}_V \coloneqq \{\Lambda \Subset \Z^d : V_r(\Lambda) =V, 0 \in \Lambda\}.
\ee
%Proposition \ref{appB:prop1} will be important in the following proposition, where we give an upper bound to the number of elements in $\mathscr{F}_V$. 
By using Proposition \ref{appB:prop1}, let us show that the number of elements in $\mathscr{F}_V$ is exponentially bounded by $V$.
\begin{proposition}\label{appB:prop2}
	There exists $b\coloneqq b(d,r) >0$ such that,
	\be\label{bound_F_V}
	|\mathscr{F}_V| \leq e^{bV}.
	\ee
\end{proposition}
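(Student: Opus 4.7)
The plan is to specify each $\Lambda \in \mathscr{F}_V$ by its chain of minimal covers
\[
\mathscr{C}_{0}(\Lambda) \;\preceq\; \mathscr{C}_{r}(\Lambda) \;\preceq\; \cdots \;\preceq\; \mathscr{C}_{rN}(\Lambda), \qquad N \coloneqq n_r(\Lambda),
\]
and then to count the number of such chains. Since $C_0(x) = \{x\}$ by convention, $\mathscr{C}_0(\Lambda) = \Lambda$, so the chain determines $\Lambda$ uniquely; conversely, every admissible chain yields some $\Lambda \in \mathscr{F}_V$ as the union at the bottom level.

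First, I would bound the number of admissible \emph{top}-scale covers $\mathscr{C}_{rN}(\Lambda)$. By the definition of $N$ we have $\diam(\Lambda) \leq 2^{rN}$; combined with $0\in\Lambda$ this forces $\Lambda \subset [-2^{rN},2^{rN}]^d$. Any $rN$-cube $C_{rN}(x)$ belonging to the minimal cover must intersect $\Lambda$, so inspection of the definition of $C_{rN}(x)$ shows $|x_i|\le 3$ for each coordinate. Hence there are at most $7^d$ candidate cubes at scale $rN$, and therefore at most $K \coloneqq 2^{7^d}$ possible top-scale covers, independently of $V$ and $N$.

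Next, I would iterate Proposition \ref{appB:prop1} going downward in scale: given the cover at scale $r(n+1)$, the number of minimal covers at scale $rn$ subordinated to it of prescribed size $V_{rn}$ is at most $e^{c V_{rn}}$. Thus for any fixed size profile $(V_{rn})_{n=0}^{N}$ with $V_{rn}\ge 1$ and $\sum_{n=0}^{N} V_{rn}=V$, the number of admissible chains is at most
\[
K \cdot \prod_{n=0}^{N-1} e^{c V_{rn}} \;\le\; K\, e^{cV}.
\]
Summing over the choice of $N \in \{0,1,\ldots,V-1\}$ and over the compositions of $V$ into $N+1$ positive parts — there are $\binom{V-1}{N}$ such compositions, whose total over $N$ is $2^{V-1}$ — gives
\[
|\mathscr{F}_V| \;\le\; K\cdot 2^{V-1}\cdot e^{cV} \;\le\; e^{bV}
\]
for $b \coloneqq c + \log 2 + \log K$, as desired.

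The main (essentially the only) obstacle is the top-scale bound, since the coarsest cover in the chain is not subordinated to anything and Proposition \ref{appB:prop1} cannot be invoked there. The resolution is to exploit that $\Lambda$ is anchored at the origin and has diameter bounded by the side length of a single $rN$-cube, so that only $O_{d}(1)$ cubes at that scale are geometrically admissible. Everything else is an iteration of Proposition \ref{appB:prop1} combined with routine counting of compositions.
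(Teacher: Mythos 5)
Your proof is correct, and it follows essentially the same strategy as the paper's: encode each $\Lambda \in \mathscr{F}_V$ by its nested chain of minimal covers, iterate Proposition~\ref{appB:prop1} scale by scale to bound the number of ways to refine one level of the chain to the next, and then sum over the at most $2^{V-1}$ compositions of $V$ into positive parts (one per scale) and over the at most $V$ possible values of the top scale $N$. The one place you diverge from the paper is in handling the coarsest cover. The paper anchors the chain by counting only those families whose top cover is a single $rN$-cube containing the origin, giving $3^d$ choices and the constraint $\sum_{n=0}^{N-1}V_{rn}=V-1$; this implicitly assumes $|\mathscr{C}_{rN}(\Lambda)|=1$, which is not guaranteed by $\diam(\Lambda)\le 2^{rN}$ alone because of grid alignment. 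You instead observe that $0\in\Lambda$ and $\diam(\Lambda)\le 2^{rN}$ confine $\Lambda$ to $[-2^{rN},2^{rN}]^d$, so every cube of the top cover has $|x_i|\le 3$, giving at most $7^d$ candidate cubes and thus at most $2^{7^d}$ possible top covers independent of $V$. This is cruder in the constant but avoids the implicit single-cube assumption, and it still yields $|\mathscr{F}_V|\le e^{bV}$ with $b$ depending only on $d$ and $r$, which is all the proposition asks for. So: same route, with a more careful (and slightly larger-constant) anchoring step at the coarsest scale.
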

\begin{proof}
	For each $\Lambda \in \mathscr{F}_V$, there is a family of minimal covers $(\mathscr{C}_{rn}(\Lambda))_{0\leq n \leq n_r(\Lambda)}$ characterizes in a unique kind of way the subset $\Lambda \in \mathscr{F}_V$ since the minimal cover $\mathscr{C}_0(\Lambda)$ is a cover by points in $\mathbb{Z}^d$. Moreover, the minimal covers $\mathscr{C}_{rn}(\Lambda)$ can always be chosen in a way that $\mathscr{C}_{rn_1}$ is subordinated to $\mathscr{C}_{rn_2}$ whenever $n_1 \leq n_2$, since, in order to compute the total volume $V_r(\Lambda)$, we only need to know the size of each minimal cover $\mathscr{C}_{rn}(\Lambda)$. 
	%Indeed, start with $\mathscr{C}_{rn_r(\Lambda)}$. By definition, the minimal cover $\mathscr{C}_{rn_r(\Lambda)}$ consists of only one $n_r(\Lambda)$-cube $C_{n_r(\Lambda)}$ that contains the set $\Lambda$. We can break the cube $C_{n_r(\Lambda)}$ into $(n_r(\Lambda)-1)$-cubes. The collection of all the $n_r(\Lambda)-1$-cubes cover $\Lambda$. Then, consider the minimal cover $\mathscr{C}_{n_r(\Lambda)-1}$ with these cubes. We can iterate this construction for every $1\leq n <n_r(\Lambda)-1$ by dividing each $rn$-cube into $r(n-1)$-cubes and taking the minimal cover. When we reach $n=1$, since the cover $\mathscr{C}_0$ are just the points contained in $\Lambda$, it is clear that it must be subordinated to $\mathscr{C}_r$. 
	Fix $(V_{rn})_{0 \leq n \leq n_r(\Lambda)-1}$ a solution to the equation
	\be\label{constraint}
	\sum_{n=0}^{n_r(\Lambda)-1} V_{rn}= V-1. 
	\ee
	We can estimate $|\mathscr{F}_V|$ by counting the number of families $(\mathscr{C}_{rn}(\Lambda))_{0 \leq n \leq n_r(\Lambda)}$ where the last cover $\mathscr{C}_{rn_r(\Lambda)}$ consists of a unique cube $C_{rn_r(\Lambda)}(x)$ containing $0$. Let $\mathscr{F}_{V,m}=\{\Lambda \in \mathscr{F}_V:  n_r(\Lambda)=m\}.$ Then,
	\[
		|\mathscr{F}_V| \leq \sum_{m=1}^{V}|\mathscr{F}_{V,m}|,
	\]
	since $n_r(\Lambda)\leq V_r(\Lambda)=V$. Now,
	\begin{align*}
	|\mathscr{F}_{V,m}| &\leq \sum_{(V_{rn})_{0 \leq n \leq m-1}}|\{(\mathscr{C}_{rn})_{0 \leq n \leq m}: \mathscr{C}_{rn}\preceq \mathscr{C}_{r(n+1)}, |\mathscr{C}_{rn}|=V_{rn}, \mathscr{C}_{rm}=\{C_{rm}(x)\}, 0 \in C_{rm}(x) \}| \\
	%&=\sum_{(V_{rn})_{0 \leq n \leq m-1}}\sum_{C_{rm}(x) \ni 0}|\{\mathscr{C}_{rn}(\Lambda)\}_{0 \leq n \leq m-1}: \mathscr{C}_{rn}\preceq \mathscr{C}_{r(n+1)}, |\mathscr{C}_{rn}|=V_{rn}| \\
	%&= \sum_{(V_{rn})_{0 \leq n \leq m-1}m}\sum_{C_{rm}(x) \ni 0}\sum_{\substack{\mathscr{C}_{r(m-1)} \\ |\mathscr{C}_{r(m-1)}|=V_{r(m-1)}\\ \mathscr{C}_{r(m-1)} \preceq \mathscr{C}_{rm}}}|\{\mathscr{C}_{rn}(\Lambda)\}_{0 \leq n \leq m-2}: \mathscr{C}_{rn}\preceq \mathscr{C}_{r(n+1)}, |\mathscr{C}_{rn}|=V_{rn}| \\
	&=\sum_{(V_{rn})_{0 \leq n \leq m-1}}\sum_{C_{rm}(x) \ni 0}\sum_{\substack{\mathscr{C}_{r(m-1)} \\ |\mathscr{C}_{r(m-1)}|=V_{r(m-1)}\\ \mathscr{C}_{r(m-1)} \preceq \mathscr{C}_{rm}}}\ldots \sum_{\substack{\mathscr{C}_{r} \\ |\mathscr{C}_r|=V_r\\ \mathscr{C}_r \preceq \mathscr{C}_{2r}}} N(\mathscr{C}_r, V_0).
	\end{align*}
Iterating Inequality \eqref{eq_appB} we get 
\be\label{101}
	|\mathscr{F}_{V,m}| \leq  |\{C_{rm}(x): 0\in C_{rm}(x)\}| \sum_{(V_{rn})_{0 \leq n \leq m-1}} \prod_{n=0}^{m-1}e^{c V_{rn}}.
\ee
%\begin{align}\label{101}
%	|\mathscr{F}_{V,m}|&\leq
%	e^{c V_0}\sum_{(V_{rn})_{0 \leq n \leq n_r(\Lambda)-1}}\sum_{C_{rm}(x) \ni 0}\sum_{\substack{\mathscr{C}_{r(m-1)} \\ |\mathscr{C}_{r(m-1)}|=V_{r(m-1)}\\ \mathscr{C}_{r(m-1)} \preceq \mathscr{C}_{rm}}}\dots \sum_{\substack{\mathscr{C}_{2r} \nonumber \\ |\mathscr{C}_{2r}|=V_{2r}\\ \mathscr{C}_{2r} \preceq \mathscr{C}_{3r}}} N(C_{2r}, V_r) \\
%	&\leq  |C_{rm}(x): C_{rm}(x) \ni 0| \sum_{(V_{rn})_{0 \leq n \leq m-1}} \prod_{n=0}^{m-1}e^{c V_{rn}}.
%\end{align}
    %Since the $rm$-cubes must contain the point $0$, we just need to count how many centers $2^{rm-1}x \in \Z^d$ are possible. This is equivalent to count how many $x_i$ satisfy
    %$0 \in [2^{rm-1}x_i -2^{rm-1}, 2^{rm-1}x_i+2^{rm-1}]$ for all $1\leq i \leq d$. It is easy to see that $x_i \in \{-1,0,1\}$, for all $1 \leq i \leq d$, hence
    %\be \label{eq102}
    %|C_{rm}(x): C_{rm}(x) \ni 0| \leq 3^d.
    %\ee
	 We have at most $2^V$ solutions for Equation \eqref{constraint}, thus Inequality \eqref{101} together with the fact that the number of $rm$-cubes containing $0$ is bounded by $3^d$ yield us
	\[
	|\mathscr{F}_V| \leq 3^d V 2^V e^{c V}.
	\]
Therefore, Inequality (\ref{bound_F_V}) holds for $b = d\log(3) + \log(2)+ c+1$. 
\end{proof}

We are able to prove Proposition \ref{importprop} once we show that a fixed configuration $\sigma$ with $\Gamma(\sigma)=\{\gamma\}$ and a fixed volume $|\gamma|=m$ implies that the total volume $V_r(\Sp(\gamma))$ is finite.  We need the following auxiliary result about graphs, which is a generalization of Claim 4.2 in \cite{KP}.   
%For a fixed $\Lambda \Subset \Z^d$, consider a minimal cover $\mathscr{C}_{rn}$. We associate to $\mathscr{C}_{rn}$ a graph $G_{rn}$ where 
%\be\label{graph}
%v(G_{rn})=\mathscr{C}_{rn}, \text{   and   } 
%e(G_{rn})=\{(C_{rn},C'_{rn}) \in \mathscr{C}_{rn}\times \mathscr{C}_{rn}: \disC_{rn},C'_{rn}) \leq  M d^a 2^{a rn}\}.
%\ee
%For a given $rn$-cube $C_{rn} \in \mathscr{C}_{rn}$ , let $G_{C_{rn}}$ be the maximally connected subgraph of $G_{rn}$ such that $C_{rn} \in v(G_{C_{rn}})$. We define the collection of \emph{small clusters} $\mathscr{C}_{rn}'$ by 
%\be
%\mathscr{C}'_{rn} =\{C_{rn}  \in \mathscr{C}_{rn}: |v(G_{C_{rn}})| \leq 2^r-1\}.
%\ee
%If $|\mathscr{C}_{rn}|=1$ we set $\mathscr{C}'_{rn} = \emptyset$. Note that when $r=1$, this condition recovers the condition of being an isolated cube in Fr\"{o}hlich-Spencer \cite{Fro3}.
\begin{proposition}\label{proptree}
	Let $k\ge 1$ and $G$ be a finite, non-empty, connected simple graph. Then, $G$ can be covered by $\lceil |v(G)|/k \rceil$ connected subgraphs of size at most $2k$.
\end{proposition}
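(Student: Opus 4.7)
The plan is to reduce to a spanning tree and then produce the explicit cover using an Euler tour. Pick any spanning tree $T\subseteq G$; since every connected subgraph of $T$ is also a connected subgraph of $G$, it suffices to cover $v(G)$ by at most $\lceil n/k\rceil$ subsets of size at most $2k$, each inducing a connected subgraph of $T$, where $n\coloneqq |v(G)|$.

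To do so, root $T$ at an arbitrary vertex and let $w_1,w_2,\dots,w_L$ be its Euler tour, i.e.\ the sequence of vertices recorded by a depth-first traversal that crosses each edge of $T$ exactly twice. This sequence has length $L=2n-1$, every vertex of $T$ appears in it at least once, and consecutive entries are joined by an edge of $T$. Partition $\{1,\dots,L\}$ into consecutive intervals $B_1,\dots,B_N$ each of length at most $2k$, and set $V_i\coloneqq \{w_j : j\in B_i\}$. Then $|V_i|\le |B_i|\le 2k$ and $\bigcup_i V_i=v(T)=v(G)$. The restriction of the Euler tour to $B_i$ is a walk in $T$ all of whose vertices lie in $V_i$, so $T[V_i]$ is connected, and hence so is $G[V_i]$.

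It remains to check that $N\le \lceil n/k\rceil$. Since
\[
\frac{2n-1}{2k}=\frac{n}{k}-\frac{1}{2k}<\frac{n}{k},
\]
monotonicity of the ceiling function gives $N=\lceil (2n-1)/(2k)\rceil\le \lceil n/k\rceil$, which is exactly the bound claimed by the proposition. The one point that deserves a moment of care is that each contiguous block of the Euler tour has connected vertex set, but this is immediate from the fact that consecutive tour entries are $T$-adjacent. Beyond setting up the Euler tour cleanly, I do not anticipate any substantial obstacle.
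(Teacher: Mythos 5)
Your proof is correct but takes a genuinely different route from the paper. Both arguments begin by passing to a spanning tree $T$ of $G$. The paper then proceeds by induction on $\lceil |v(G)|/k\rceil$: it roots $T$, locates a vertex $u^*$ of maximal depth among those with at least $k$ descendants, carves out a connected subtree on between $k+1$ and $2k$ vertices hanging below $u^*$, removes it (retaining $u^*$ to preserve connectivity), and recurses on what remains. Your Euler-tour argument is non-inductive and more explicit: the tour of a tree on $n$ vertices has length $2n-1$ and consecutive entries are $T$-adjacent, so slicing the tour into consecutive blocks of length at most $2k$ yields connected pieces of size at most $2k$, and the arithmetic $\lceil (2n-1)/(2k)\rceil\le\lceil n/k\rceil$ delivers the count. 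This buys a shorter, one-pass construction in place of the paper's recursive peeling with descendant-counting. One small point worth tightening in your write-up: to conclude $N=\lceil(2n-1)/(2k)\rceil$ you should specify that every block except possibly the last has length exactly $2k$; as stated, a partition into intervals of length at most $2k$ could a priori use more than $\lceil(2n-1)/(2k)\rceil$ of them, though this is clearly what you intend.
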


\begin{proof}
	Since we can always consider a spanning tree from a connected graph $G$, it is sufficient to prove the proposition when $G$ is a tree. If either $k=1$ or $|v(G)|\leq 2k$, our statement is trivially true, so we suppose $k\geq 2$ and $|v(G)|\geq 2k+1$, and proceed by induction on $\lceil |v(G)|/k \rceil$.
	
	Choose a vertex $r \in v(G)$ to be our root. For every vertex $u$ of $G$ let $\text{dep}(u)$ be \emph{the depth} of the vertex $u$, i.e., the distance in the graph between $r$ and $u$. We say that a vertex $w$ is a \emph{descendant} of $u$ if there is a path $u_1=w, u_2, \dots, u_{n-1}, u_n=u$ in $G$ with $\text{dep}(u_i)> \text{dep}(u)$, for all $1\leq i \leq n-1$ and let $\text{desc}(u)$ be the number of descendants of $u$. Take a vertex $u^*$ from $\{u \in v(G): \text{desc}(u)\geq k\}$, that is not empty since $\text{desc}(r)\geq 2k$, with highest depth, i.e., such that for any other $u \in v(G)$ with $\text{desc}(u)\geq k$ we have $\text{dep}(u^*)\geq \text{dep}(u)$. Let $u_1,\dots,u_t$ be the children of the vertex $u^*$, and define $a_i = \text{desc}(u_i)+1$. By definition of $u^*$, we have that $a_i\leq k$, for $1\leq i \leq t$, and $a_1+ \dots +a_t \geq k$. Hence, there must be an $1\leq s \leq t$ for which $k \leq a_1+ \dots +a_s < 2k$. Therefore, we can consider the subtree $T$ whose vertex set is composed by $u^*, u_1, \dots, u_s$ and their descendants. By construction, it holds $k+1 \leq |v(T)|\leq 2k$. 
	The induced subgraph $H \coloneqq G[(v(G)\setminus v(T))\cup \{u^*\}]$ is connected and satisfies $|v(H)|\leq |v(G)|-k$. Using the induction hypothesis, $H$ can be covered by $\lceil|v(H)|/k\rceil\leq \lceil|v(G)|/k\rceil-1$ connected subgraphs. Adding $T$ to this cover completes the proof. 
\end{proof}

  The importance of the choice $r=\lceil \log_2(a+1)\rceil +d+1$ in the $(M,a,r)$-partition will be seen in the next proposition, where we will show that the total volume can be bounded by the size of the contour.

\begin{proposition}\label{prop4}
	There exists a constant $\kappa\coloneqq \kappa(d,M,r) >0$ such that, for any contour $\gamma \in \mathcal{E}_\Lambda^\pm$,
	\be\label{eq100}
	V_r(\Sp(\gamma)) \leq \kappa|\gamma|.
	\ee 
\end{proposition}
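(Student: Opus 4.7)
The plan is to proceed multi-scale: bound $|\mathscr{C}_{rm}(\Sp(\gamma))|$ at each level $m$ and sum geometrically. Since $\gamma \in \mathcal{E}_\Lambda^\pm$ arises from an actual configuration, by the construction in Proposition \ref{prop1} there is an extraction scale $n^*\ge 0$ such that $\Sp(\gamma) = \gamma_G$ for some $G \in \mathscr{P}_{n^*}$ with $|v(G)| \le 2^r - 1$. In particular $|\mathscr{C}_{n^*}(\Sp(\gamma))| \le 2^r - 1$, and since the at most $2^r - 1$ cubes of $G$ are pairwise within distance $M d^a 2^{a n^*}$, one has $\diam(\Sp(\gamma)) \le (2^r - 1)(2^{n^*} + M d^a 2^{a n^*})$, which gives $n_r(\Sp(\gamma)) \le a n^*/r + O(1)$.

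The per-scale bounds I would use are three: the refinement inequality $|\mathscr{C}_{n-1}(A)| \le 2^d |\mathscr{C}_n(A)|$, valid for any $A \subset \Z^d$ because each $n$-cube fits inside $2^d$ cubes of half the sidelength, iterates to $|\mathscr{C}_{rm}(\Sp(\gamma))| \le 2^{(n^* - rm)d}(2^r - 1)$ for $rm \le n^*$; the monotonicity $|\mathscr{C}_{n+1}(A)| \le |\mathscr{C}_n(A)|$ (every $n$-cube is contained in some $(n+1)$-cube), which gives $|\mathscr{C}_{rm}(\Sp(\gamma))| \le 2^r - 1$ for $rm \ge n^*$; and the trivial estimate $|\mathscr{C}_{rm}(\Sp(\gamma))| \le |\gamma|$. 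Summing over $0 \le m \le n_r(\Sp(\gamma))$, the scales with $rm \le n^*$ produce a geometric series in $2^{-rd}$ that is bounded by $O(|\gamma|)$ by taking the minimum of the refinement and trivial bounds, while the scales $rm > n^*$ contribute at most $\bigl(n_r(\Sp(\gamma)) - \lceil n^*/r\rceil\bigr)(2^r - 1) = O\bigl((a-1)\,n^* 2^r/r\bigr)$.

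The hardest step — and the real content of the proposition — is showing that this coarse-scale contribution is itself $O(|\gamma|)$, which amounts to establishing a lower bound of the form $|\gamma| \gtrsim n^*$ up to constants depending on $d, M, r$. The plan for that step is to exploit the fact that at every scale $n < n^*$ the cubes of $\Sp(\gamma)$ belong to connected components of $G_n(\partial \sigma_n)$ of size at least $2^r$ (otherwise they would have been extracted at scale $n$), and to iterate Proposition \ref{proptree} on these components across scales in order to quantify how many new cubes must appear in $\mathscr{C}_n(\Sp(\gamma))$ as $n$ decreases from $n^*$ to $0$. The specific choice $r = \lceil \log_2(a+1) \rceil + d + 1$ is exactly designed for this: it ensures $2^{rd} \ge 2(a+1)^d$, which is the precise threshold making the geometric series over fine scales converge and the linear-in-$n^*$ contribution from the coarse scales fit inside $\kappa |\gamma|$ with $\kappa = \kappa(d,M,r)$.
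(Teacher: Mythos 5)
Your outline correctly identifies the two regimes (scales below and above the extraction scale $n^*$) and the role of Proposition \ref{proptree}, but the central estimate for the fine scales is where the argument breaks, and it is not fixable by the lower bound $|\gamma|\gtrsim n^*$ that you propose as the ``hardest step.'' Your refinement inequality $|\mathscr{C}_{r(m-1)}(\Sp(\gamma))|\le 2^{rd}|\mathscr{C}_{rm}(\Sp(\gamma))|$ gives an upper bound that \emph{grows} by the factor $2^{rd}$ each time $m$ decreases, reaching $2^{n^*d}(2^r-1)$ at $m=0$. Taking the minimum with the trivial bound $|\gamma|$ and summing over $0\le m\le n^*/r$ therefore yields at best $|\gamma|\cdot\#\{m: 2^{(n^*-rm)d}(2^r-1)\ge|\gamma|\}+O(|\gamma|)$, and the cardinality factor is of order $\bigl(n^*-\tfrac1d\log_2|\gamma|\bigr)/r$, which is $\Theta(n^*/r)$ whenever $|\gamma|$ is not comparable to $2^{dn^*}$ (e.g.\ a contour whose support is sparse in its enclosing cubes). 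So the fine-scale sum is $\Theta(n^*|\gamma|)$, not $O(|\gamma|)$, and establishing $|\gamma|\gtrsim n^*$ would still leave you a factor of $n^*$ short.

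What is actually needed, and what the paper proves, is an inequality running in the opposite direction: a \emph{shrinkage} bound $|\mathscr{C}_{rn}(\Sp(\gamma))|\le 2^{-(r-d-1)}|\mathscr{C}_{rg(n)}(\Sp(\gamma))|$, where $g(n)\approx n/a<n$ is a \emph{finer} scale. This says that passing from the fine scale $rg(n)$ to the coarse scale $rn$ compresses the cover by a \emph{uniform} factor $2^{r-d-1}$. The mechanism is exactly where Proposition \ref{proptree} enters (not in the place you allocate it): since, by the extraction rule, each connected component of $G_{rg(n)}(\Sp(\gamma))$ has $\ge 2^r$ vertices, one covers each component by $\lceil|v(G)|/2^r\rceil$ connected subgraphs of $\le 2^{r+1}$ vertices, checks via the diameter estimate that each such cluster fits in $2^d$ cubes at the coarser scale $rn$, and concludes that $|\mathscr{C}_{rn}|\le 2^{d+1}\sum_G|v(G)|/2^r=2^{d+1-r}|\mathscr{C}_{rg(n)}|$. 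Iterating this about $\log_a(n)$ times reaches scale $0$ where the cover is $\le|\gamma|$, so $|\mathscr{C}_{rn}(\Sp(\gamma))|\lesssim|\gamma|\,n^{-(r-d-1)/\log_2 a}$, and the sum over $n$ is a $\zeta$-series that converges precisely because $r-d-1>\log_2 a$ --- this, rather than the numerology $2^{rd}\ge 2(a+1)^d$ you cite, is the role of $r=\lceil\log_2(a+1)\rceil+d+1$. The residual case (the exceptional scales where $G_{rg^m(n)}(\Sp(\gamma))$ consists of a single small component) requires a separate diameter argument, which the paper handles via the quantities $l_1(n),l_2(n)$. Without the shrinkage inequality, the multi-scale bookkeeping you set up cannot close.
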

\begin{proof}
	%Define $\mathscr{C}''_{rn} = \mathscr{C}_{rn} \setminus \mathscr{C}'_{rn}$. 
	Define $g:\mathbb{N} \rightarrow \Z$ by
	\be\label{prop4:eq1}
	g(n) = \left\lfloor \frac{n-2-\log_{2^r}(2Md^a)}{a}\right\rfloor.
	\ee
	We are going to prove 
	\be\label{prop4:eq2}
	|\mathscr{C}_{rn}(\Sp(\gamma))| \leq \frac{1}{2^{r-d-1}}|\mathscr{C}_{rg(n)}(\Sp(\gamma))|,
	\ee
	whenever $g(n)>0$, and either the graph $G_{rg(n)}(\Sp(\gamma))$ defined in Proposition \ref{prop1} has at least two connected components or it has only one connected component with at least $2^r$ vertices. Remember that $\mathscr{G}_{rg(n)}(\Sp(\gamma))$ is the set of all connected components of the graph $G_{rg(n)}(\Sp(\gamma))$. Note that
	\be\label{eq9}
	|\mathscr{C}_{rg(n)}(\Sp(\gamma))| = 2^r\sum_{G \in \mathscr{G}_{rg(n)}(\Sp(\gamma))} \frac{|v(G)|}{2^r}.
	\ee
	
	 Proposition \ref{proptree} states that we can cover the vertex set $v(G)$ with a family of connected graphs $G_i$ with $1 \leq i \leq \lceil |v(G)|/2^r \rceil$ and $ |v(G_i)|\leq 2^{r+1}$. Using the inequality 
	\[
	\diam(\Lambda \cup \Lambda') \leq \diam(\Lambda) + \diam(\Lambda') + \dis(\Lambda,\Lambda'), \quad \text{for all } \Lambda, \Lambda' \Subset \Z^d,
	\]
	and the fact that we can always extract a vertex of a finite connected graph in a way that the induced subgraph is still connected, by removing a leaf of a spanning tree, we can bound the diameter of $B_{G_i} = \bigcup_{C_{rg(n)}(x)\in v(G_i)}C_{rg(n)}(x)$ by
	\begin{align}\label{eq10}
		\diam(B_{G_i}) &\leq \sum_{C_{rg(n)}(x) \in v(G_i)}\diam(C_{rg(n)}(x)) + |v(G_i)|Md^a 2^{arg(n)} \nonumber \\
		&\leq d2^{r(g(n)+1)+1}+ Md^a2^{r(ag(n)+1)+1} \nonumber \\
		& \leq 2^{rn}.
	\end{align}
	% By construction, if $G \in \mathscr{G}_{rg(n)}$ for $\Sp(\gamma)$, then $v(G)\geq 2^r$, since if this was not the case, then $G$ would have been separated is another contour in the step $\mathscr{P}_{rg(n)}$ in the proof of Proposition \ref{prop1}. 
	The third inequality holds since $M,a,r \geq  1$. Therefore, each graph $G_i$ can be covered by one cube of side length $2^{rn}$ with center in $\Z^d$. We claim that every cube of side length $2^{rn}$ with an arbitrary center in $\Z^d$ can be covered by at most $2^d$ $rn$-cubes $C_{rn}(x)$. 
	%Indeed,
	Note that it is enough to consider the case where the cube has the form 
	\[
	\prod_{i=1}^d[q_i- 2^{rn-1}, q_i+ 2^{rn-1}])\cap \Z^d,
	\] 
	where $q_i \in \{0,1,\dots, 2^{rn-1}-1\}$, for $1 \leq i \leq d$.
	%, since the general case follows by translating the cube by $2^{rn-1}p$, for $p \in \Z^d.$
	 It is easy to see that
	\[
	[q_i- 2^{rn-1}, q_i+ 2^{rn-1}] \subset [-2^{rn-1}, 2^{rn-1}]\cup [0, 2^{rn}].
	\] 
%	where the interval $[0,2^{rn}]$ has center $2^{rn-1}$. 
	Taking products for all $1\leq i \leq d$, it concludes our claim. This reasoning allows us to conclude that the maximum number of $rn$-cubes required to cover each connected component $G$ of $G_{rg(n)}(\Sp(\gamma))$ is at most $2^d\lceil |v(G)|/2^r \rceil$, yielding us  
	\be\label{eq11}
	 |\mathscr{C}_{rn}(\Sp(\gamma))| \leq \sum_{G \in \mathscr{G}_{rg(n)}(\Sp(\gamma))}\left|\mathscr{C}_{rn}\left(\bigcup_{1 \leq i \leq \lceil |v(G)|/2^r\rceil}B_{G_i}\right)\right| \leq \sum_{G \in \mathscr{G}_{rg(n)}(\Sp(\gamma))} 2^d\left\lceil\frac{|v(G)|}{2^r}\right\rceil.
	\ee
	
If $|\mathscr{G}_{rg(n)}(\Sp(\gamma))|\geq 2$, since $\gamma \in \mathcal{E}^\pm_\Lambda$, each connected component $G \in \mathscr{G}_{rg(n)}(\Sp(\gamma))$ satisfies $|v(G)|\geq 2^r$. Indeed, if $|v(G)|\leq 2^r-1$, by our construction in Proposition \ref{prop1} the set $v(G)$ would be separated into another element of the $(M,a,r)$-partition. By hypothesis, if $|\mathscr{G}_{rg(n)}(\Sp(\gamma))|=1$ it already satisfies $|v(G_{rg(n)})|\geq 2^r$. 
 Together with 
 \[
 \frac{1}{2}\left\lceil\frac{|v(G)|}{2^r}\right\rceil  \leq \frac{|v(G)|}{2^r},
 \]
 Equality \eqref{eq9} and Inequality \eqref{eq11} yield 
	\[
	|\mathscr{C}_{rn}(\Sp(\gamma))| \leq 2^{d+1}\sum_{G \in \mathscr{G}_{rg(n)}(\Sp(\gamma))}\frac{|v(G)|}{2^r} = \frac{2^{d+1}}{2^r}|\mathscr{C}_{rg(n)}(\Sp(\gamma))|.
	\]
	
	So, Inequality \eqref{prop4:eq2} is proved. Let us define two auxiliary quantities
	\[
	l_1(n) \coloneqq \max\{m \geq 0: g^m(n) \geq 0\} \quad \text{and} \quad l_2(n) \coloneqq \max\{m \geq 0: |\mathscr{G}_{rg^m(n)}(\Sp(\gamma))|=1, |v(G_{rg^m(n)})|\leq 2^r-1\}.
	\]
	For the set $\mathscr{G}_{rg^m(n)}(\Sp(\gamma))$ to be well defined, we must have $g^m(n) \geq 0$, thus $l_2(n) \leq l_1(n)$. Moreover, knowing that $|\mathscr{C}_n(\Lambda)| \leq |\Lambda|$ for any $n \geq 0$,
	%, since the cubes get larger and the edges connect regions that are far more distant, yielding fewer connected components. 
	%Therefore, the function $m \mapsto |\mathscr{G}_{rg^m(n)}|$ becomes increasing since the function $m \mapsto g^m(n)$ is decreasing for any $n$. 
	%We conclude that $l_2(n) \leq l_1(n)$. 
	%Applying the %trivial 
	%inequality $|\mathscr{C}_m(\Lambda)| \leq |\mathscr{C}_n(\Lambda)|$ for any  $n \leq m$, we get
	\be\label{prop4:eq4}
	|\mathscr{C}_{rn}(\Sp(\gamma))| \leq  |\mathscr{C}_{rg^{l_2(n)}(n)}(\Sp(\gamma))| \leq \frac{1}{2^{(r-d-1)(l_1(n)-l_2(n))}}|\gamma|.
	\ee
	%since $|\mathscr{C}_n(\Lambda)| \leq |\Lambda|$ for any $n \geq 0$.
	
	 We claim %that
	\be\label{eq13}
	l_1(n) \geq \begin{cases}
		0 & \text{ if }0 \leq n \leq n_0, \\
		\left\lfloor \frac{\log_2 (n)-\log_2(n_0)}{\log_2(a)} \right\rfloor & \text{ if } n> n_0,
	\end{cases}
	\ee
	where %$n_0=\frac{a+2+\log_{2^r}(2Md^a)}{a-1}$ 
	$n_0= (a+2+\log_{2^r}(2Md^a))(a-1)^{-1}$.
	 The first bound is trivial. Let $n > n_0$ and consider the function 
	\[
	\tilde{g}(n) = \frac{n-a-2-\log_{2^r}(2Md^a)}{a}.
	\]
	%It is easy to see that 
	From the fact that $g(n) \geq \tilde{g}(n)$ and both functions are increasing, we have
	\be\label{ineq:gtilde}
	g^m(n) \geq \tilde{g}^m(n), \quad \text{ for all } m \geq 1,
	\ee
	which implies $\max_{\tilde{g}^m(n) \geq 0} m \leq l_1(n)$. Thus, we need to compute a lower bound for $\max_{\tilde{g}^m(n)\geq 0} m$. Since
    \[
	\tilde{g}^m(n) = \frac{n}{a^m} - b\left(\frac{a^m-1}{a^{m-1}(a - 1)}\right),
	\] 
	 is sufficient to have
	\be\label{prop4:eq5}
	\frac{n}{a^m}  > \frac{ab}{a- 1},
	\ee
	where $b=(a+2+\log_{2^r}(2Md^a))a^{-1}$.
	We get the desired bound after taking the %$\log_2$ 
	logarithm with respect to base two in both sides of Inequality \eqref{prop4:eq5}. To finish our calculation, we will analyse two cases depending if $l_2(n)$ is zero or not. First, let us consider the case where $l_2(n)=0$. Using Inequality \eqref{prop4:eq4} we get 
	\be
	V_r(\Sp(\gamma)) \leq |\gamma|\sum_{n=0}^\infty\frac{1}{2^{(r-d-1)l_1(n)}}.
	\ee
	To finish, notice that equation above can be bounded in the following way
	\[
	\sum_{n=0}^\infty\frac{1}{2^{(r-d-1)l_1(n)}} \leq n_0+ 1+2^{r-d-1}n_0^{\frac{r-d-1}{\log_2(a)}} \zeta\left(\frac{r-d-1}{\log_2(a)}\right).
	\] 	
	 Now consider the case $l_2(n)\neq 0$. A similiar bound as \eqref{eq10}, the fact that $\mathscr{C}_{rg^m(n)}(\Sp(\gamma))$ is a cover for the set $\Sp(\gamma)$ and assuming that $
    m \in \{k\geq 0: |\mathscr{G}_{rg^k(n)}|=1,|v(G_{rg^k(n)})|\leq 2^r-1\}$ yields
	\begin{align*}
	\diam(\Sp(\gamma)) \leq \diam(B_{G_{rg^m(n)}(\Sp(\gamma))}) \leq (d2^{rg^m(n)}+ Md^a 2^{arg^m(n)})|v(G_{rg^m(n)}(\Sp(\gamma)))|
	\leq 2Md^a 2^{arg^m(n)+r}.
	\end{align*}
	The inequality above yields
	\[
	\log_{2^r}(\diam(\Sp(\gamma))) \leq \log_{2^r}(2Md^a)+ ag^m(n)+1 \leq \log_{2^r}(2Md^a)+ \frac{n}{a^{m-1}}+1.
	\]
	Let us assume that $\diam(\Sp(\gamma))> 2^{2r+1}Md^a$. Isolating the term depending on $m$ and taking %$\log_2$ 
	the logarithm with respect to base two in both sides of the inequality above, it gives us
	\[
	 m \leq 1+ \frac{\log_2(n) - \log_2(\log_{2^r}(\diam(\Sp(\gamma)))-\log_{2^r}(2Md^a)-1)}{\log_2(a)}.
	\] 
	The inequality above is valid for all $m \in \{k\geq 0: |\mathscr{G}_{rg^k(n)}|=1, |v(G_{rg^k(n)})|\leq 2^r-1\}$. Thus, together with the lower bound \eqref{eq13}, we get for $n>n_0$
	\[
	l_1(n)-l_2(n) \geq \frac{ \log_2(\log_{2^r}(\diam(\Sp(\gamma)))-\log_{2^r}(2Md^a)-1)-\log_2(n_0)}{\log_2(a)}-2.
	\]
	Inequality \eqref{prop4:eq4} together with the inequality above yields  
	\begin{align}
	V_r(\Sp(\gamma)) &\leq (n_0 +1) |\gamma|+ |\gamma|\frac{2^{2(r-d-1)}n_0^{\frac{r-d-1}{\log_2(a)}}(n_r(\Sp(\gamma))-n_0)}{(\log_{2^r}(\diam(\Sp(\gamma)))-\log_{2^r}(2Md^a)-1)^{\frac{r-d-1}{\log_2(a)}}} \nonumber \\
	%&\leq |\gamma|\left(n_0 +1+ 2^{2(d+1-r)}n_0^{\frac{d+1-r}{\log_2(a)}}+ \frac{2^{2(d+1-r)}n_0^{\frac{d+1-r}{\log_2(a)}}\log_{2^r}(\diam(\Sp(\gamma)))}{\log_{2^r}(\diam(\Sp(\gamma)))-\log_{2^r}(2Md^a)-1}\right)\nonumber \\
	&\leq (n_0 + 1+ 2^{2(r-d-1)}n_0^{\frac{r-d-1}{\log_2(a)}}(2+\log_{2^r}(2Md^a)))|\gamma|,
	\end{align}
	where the last inequality is due to the fact that %$\frac{x}{x-w}\leq 1+w$, 
	$x/(x-w)\leq 1+w$ for any constant $x \geq w+1$ and the fact that $M\geq 1$ implies that $n_0 \geq 1$.
	 If $\diam(\Sp(\gamma))\leq 2^{2r+1}Md^a$, we have
	\be
	V_r(\Sp(\gamma))\leq (n_r(\Sp(\gamma))+1)|\gamma| \leq (4 + \log_{2^r}(2Md^a))|\gamma|.
	\ee
	Taking $\kappa = \max\{4 + \log_{2^r}(2Md^a), n_0 + 1+ 2^{2(r-d-1)}n_0^{\frac{r-d-1}{\log_2(a)}}(2+\log_{2^r}(2Md^a)), n_0+ 1+2^{r-d-1}n_0^{\frac{r-d-1}{\log_2(a)}} \zeta\left(\frac{r-d-1}{\log_2(a)}\right)\}$ concludes the desired result.
	%In order to finish the proposition, we need to bound the following sum
	%\[
	%\sum_{n=0}^\infty\frac{1}{2^{(r-t)l(n)}} \leq n_0+ n_0^s \zeta(s), \text{  where  }s = \frac{r-t}{\log_2(a)}.
	%\] 	
	%Defining $S_{m,j}= \{n : g^m(n) = j\}$,	
	%We will show that $|S_{m,j}|$ is bounded. Indeed, since the function $g$ is nondecreasing, the set $S_{m,j}$ is finite. Let $n_*$ and $n^*$ be the minimum and the maximum of $S_{m,j}$, respectively. It holds
	%\be
	%\frac{n^*}{a^m}-n_0 \leq  g^m(n^*)=g^m(n_*) \leq\frac{n_*}{a^m}  .
	%\ee
	%we have $|S_{m,j}| \leq 2n_0 a^m$,
	%Since for $m,m',j,j'$ the sets $S_{m,j}$ and $S_{m',j'}$ may have an non empty intersection, 
	%and then
	%Taking $\kappa =n_0+n_0^s\zeta(s)$ yields the desired result.
\newline
\end{proof}

%We give an upper bound to the number of possible contours next.

We are ready to show Proposition 3.14 that bounds exponentially the number of contours containing the origin with a fixed size.

\begin{proposition}\label{importprop}
	Let $m\geq 1$, $d\ge 2$, and let the set $\mathcal{C}_0(m)$ be defined as
	\[
	\mathcal{C}_0(m) \coloneqq \{\Sp(\gamma)\Subset \Z^d:\gamma \in \mathcal{E}^-, 0 \in V(\gamma), |\gamma|=m\}.
	\]
There exists $c_1\coloneqq c_1(d,M,r)>0$ such that
	\[
	|\mathcal{C}_0(m)| \leq e^{c_1 m}.
	\]
\end{proposition}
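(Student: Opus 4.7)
The plan is to combine Propositions \ref{appB:prop2} and \ref{prop4} with a translation trick, since the hypothesis only gives $0\in V(\gamma)$ rather than $0\in \Sp(\gamma)$. For each $\Sp(\gamma)\in\mathcal{C}_0(m)$ I would pick a canonical marked point $x_0=x_0(\Sp(\gamma))\in\Sp(\gamma)$ (say, the lexicographically smallest element) and consider the injective map
\[
\Sp(\gamma)\longmapsto (x_0,\tau),\qquad \tau\coloneqq \Sp(\gamma)-x_0.
\]
The translate $\tau$ now contains $0$ and is the support of a translated contour $\gamma'$ which still lies in $\mathcal{E}_{\Lambda'}^{\pm}$ for some $\Lambda'\Subset\Z^d$, because conditions \textbf{(A)} and \textbf{(B)} of Definition \ref{def:d_condition} and the external property are all translation invariant. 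In particular $|\gamma'|=|\gamma|=m$, so Proposition \ref{prop4} gives $V_r(\tau)\le \kappa m$.

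The remaining constraint $0\in V(\gamma)=V(\tau)+x_0$ rewrites as $-x_0\in V(\tau)$, so for each admissible $\tau$ the marked point $x_0$ is confined to $-V(\tau)$. Since $n_r(\tau)\le V_r(\tau)\le \kappa m$ and $0\in\tau$, the set $V(\tau)$ is contained in the $\ell^\infty$-box of radius $\diam(\tau)\le 2^{r\kappa m}$ around $0$, which yields the crude bound
\[
|V(\tau)|\le (2\cdot 2^{r\kappa m}+1)^d.
\]
Using Proposition \ref{appB:prop2} to count candidate translates $\tau$ with $0\in\tau$ and $V_r(\tau)=V$, I would conclude
\[
|\mathcal{C}_0(m)|\ \le\ \sum_{V=1}^{\lfloor\kappa m\rfloor}|\mathscr{F}_V|\cdot \max_{\tau}|V(\tau)|\ \le\ (2^{r\kappa m+1}+1)^d\sum_{V=1}^{\lfloor\kappa m\rfloor}e^{bV}\ \le\ e^{c_1 m},
\]
for a suitable constant $c_1\coloneqq c_1(d,M,r)>0$.

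There is no serious obstacle: the heavy entropy work has already been absorbed into Propositions \ref{appB:prop2} and \ref{prop4}. The only point requiring care is that $0\in V(\gamma)$ does not directly give a set with the origin in it to which Proposition \ref{appB:prop2} can be applied; this is circumvented by translating so that a distinguished point of $\Sp(\gamma)$ sits at the origin and then paying the translation entropy $|V(\tau)|$, which remains exponential in $m$ thanks to the diameter control inherited from the total-volume bound of Proposition \ref{prop4}.
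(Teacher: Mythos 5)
Your argument is correct and follows the same overall plan as the paper: reduce to counting translation representatives via Propositions~\ref{prop4} and~\ref{appB:prop2}, then multiply by the number of allowed positions of the origin, which is $|V(\gamma)|$. The one substantive difference is how the translation entropy $|V(\gamma)|$ is bounded. The paper observes that $\din V(\gamma)\subset \Sp(\gamma)$ and applies the isoperimetric inequality $|V(\gamma)|^{1-1/d}\le |\din V(\gamma)|$ to obtain the \emph{polynomial} bound $|V(\gamma)|\le m^{1+\frac{1}{d-1}}$; you instead extract a diameter bound $\diam(\tau)\le 2^{r\kappa m}$ from $n_r(\tau)\le V_r(\tau)\le\kappa m$, giving the much cruder \emph{exponential} bound $|V(\tau)|\le (2\cdot 2^{r\kappa m}+1)^d$. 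Since the proposition only asserts the existence of some exponential rate $c_1(d,M,r)$, both bounds close the argument, but yours inflates $c_1$ by an extra $dr\kappa\log 2$ compared with the paper's $c_1 = 2b\kappa + 1 + (d-1)^{-1}$. One minor technical point you gloss over: $V_r$ is not literally translation invariant because the $n$-cubes $C_n(x)$ sit on a fixed grid, so the assertion $V_r(\tau)\le\kappa m$ should be justified either by noting that Proposition~\ref{prop4} applies directly to the translated contour $\gamma'$ (whose $(M,a,r)$-partition and external property are indeed translation-invariant notions), or by remarking that a translate changes each $|\mathscr{C}_{rn}|$ by at most a factor $2^d$, which is harmless here.
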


\begin{proof}
	For a given contour $\gamma \in \mathcal{E}^-$, define the set $\mathcal{C}_\gamma$ by
	\be
	\mathcal{C}_\gamma \coloneqq \{\Sp(\gamma') \in \mathcal{C}_0(m): \exists \; x \in \Z^d \text{ s.t. } \Sp(\gamma') = \Sp(\gamma)+x\}.
	\ee
	Thus, we can partition the set $\mathcal{C}_0(m)$ into
	\[
	\mathcal{C}_0(m) = \bigcup_{\substack{0 \in \Sp(\gamma) \\ |\gamma|=m}}\mathcal{C}_\gamma.
	\]	
Given a contour $\gamma \in \mathcal{E}^-$, there is at most $|V(\gamma)|$ possibilities for the  position of the point $0$.
 Then,
	\be\label{prop5:eq1}
	|\mathcal{C}_0(m)| \leq \sum_{\substack{0 \in \Sp(\gamma), \gamma 
 \in \mathcal{E}^-\\ |\gamma|=m}}|\mathcal{C}_\gamma| \leq \sum_{\substack{0 \in \Sp(\gamma), \gamma \in \mathcal{E}^- \\ |\gamma|=m}}|V(\gamma)|.
	\ee
	Using the isoperimetric inequality and the fact $\din V(\gamma) \subset \Sp(\gamma)$, we obtain,
	\be\label{eq14}
	\sum_{\substack{0 \in \Sp(\gamma), \gamma \in \mathcal{E}^- \\ |\gamma|=m}}|V(\gamma)| \leq m^{1+ \frac{1}{d-1}}|\{\Sp(\gamma)\Subset \Z^d:\gamma \in \mathcal{E}^-, 0 \in \Sp(\gamma), |\gamma|=m \}|.
	\ee
	By Proposition \ref{prop4}, and since not all the finite sets with bounded total volume are contours, we have
% Proposition \ref{prop4} states that if $\gamma \in \mathcal{E}^\pm_\Lambda$, then $V_r(\Sp(\gamma)) \leq \kappa |\gamma|$. Since not all the finite sets with bounded total volume are contours, we have
	\be\label{eq15}
	\{\Sp(\gamma)\Subset \Z^d:\gamma \in \mathcal{E}^-, 0 \in \Sp(\gamma), |\gamma|=m \} \subset \{\Lambda \Subset \Z^d: 0 \in \Lambda, V_r(\Lambda)\leq \kappa m\}.
	\ee
	Proposition \ref{appB:prop2} yields
	\be\label{eq16}
	|\{\Lambda \Subset \Z^d: 0 \in \Lambda, V_r(\Lambda)\leq \kappa m \}|= \sum_{V=1}^{\lfloor \kappa m\rfloor}|\mathscr{F}_V| \leq \frac{e^{b(\kappa m+1)}}{e^b-1}.
	\ee
	Substituting Inequalities \eqref{eq14}, \eqref{eq15} and \eqref{eq16} into Inequality \eqref{prop5:eq1}, we conclude
	\be
	|\mathcal{C}_0(m)| \leq m^{1+ \frac{1}{d-1}}\frac{e^{2b\kappa m+1}}{e^{b}-1} \leq e^{c_1m},
	\ee
	for %$c_1 = 2b\kappa+1+ \frac{1}{d-1}$.
	$c_1 = 2b\kappa+1+ (d-1)^{-1}$.
\end{proof}

\section{Phase Transition}

%In this section, we show that the long-range Ising model with coupling constant decaying polynomially in the distance with parameter $\alpha$ with presence of the spatially dependent magnetic field, which decays polynomially in the distance with parameter $\delta$, undergoes a phase transition at low temperature when $\min\{d-\alpha,1\}<\delta<d$.

In this section, we prove that the long-range Ising model with decaying field undergoes a phase transition at low temperature when $\min\{\alpha-d,1\}<\delta<d$. When the magnetic field decays with power $\delta \geq d$, the result is straightforward. In fact, for $\delta>d$, the magnetic field is summable and, by a general result of Georgii (see Example 7.32 and Theorem 7.33 in \cite{Geo}), there is an affine bijection between the Gibbs measures of the Ising model with $h=0$. Then, the phase transition is already known in this case. For $\delta=d$ the sum $\sum_{x\in \Lambda}h_x$ can be bounded by $\log |\Lambda|$. This implies that $\sum_{x \in \Lambda} h_x = o(|\Lambda|^\varepsilon)$ for any $\varepsilon>0$. Thus, if we prove the phase transition for $\delta < d$, it is easy to extend to this case.

\begin{theorem}\label{thm1}
	For a fixed $d\ge 2$, suppose that $\alpha>d$ and $\delta>0$.
	There exists $\beta_c\coloneqq \beta_c(\alpha,d)>0$ such that, for every $\beta>\beta_c$, the long-range Ising model with coupling constant  (\ref{long}) and magnetic field (\ref{magfield}) undergoes a phase transition at inverse of temperature $\beta$ when
	\begin{itemize}
		\item $d<\alpha<d+1$ and $\delta >\alpha -d$; 
		\item $d<\alpha<d+1$ and $\delta=\alpha-d$ if $h^*$ is small enough;
		\item $\alpha \geq d+1$ and $\delta>1$;
		\item  $\alpha \geq d+1$ and $\delta=1$ if $h^*$ is small enough.
	\end{itemize}.
\end{theorem}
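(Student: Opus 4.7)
The plan is a Peierls-type argument using the long-range contours constructed in Section 3. The goal is to show that for $\beta$ large enough,
\begin{equation*}
\mu_{\beta,\Lambda}^-(\sigma_0 = +1) < 1/2
\end{equation*}
uniformly in $\Lambda \Subset \Z^d$; together with the analogous bound for the $+$-measure (holding with smallness of $h^*$ on the critical lines where needed), this yields $|\mathcal{G}_\beta|>1$. The event $\{\sigma_0 = +1\}$ under $-$-boundary conditions forces the existence of some $\gamma \in \mathcal{E}_\Lambda^+$ with $0 \in V(\gamma)$ external in $\Gamma(\sigma)$, so a union bound gives
\begin{equation*}
\mu_{\beta,\Lambda}^-(\sigma_0 = +1) \leq \sum_{\substack{\gamma \in \mathcal{E}_\Lambda^+ \\ 0 \in V(\gamma)}} \mu_{\beta,\Lambda}^-(\gamma \text{ external in } \Gamma(\sigma)).
\end{equation*}

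For each such $\gamma$, I will bound $\mu_{\beta,\Lambda}^-(\gamma\text{ external})$ by a flipping argument. Given $\sigma$ realising $\gamma$ externally, let $\tau$ be obtained by flipping $\sigma$ inside $V(\gamma)$. Conditions \textbf{(A)} and \textbf{(B)} of Definition \ref{def:d_condition} ensure that removing $\gamma$ preserves the $(M,a,r)$-partition of the remaining contours, so $\sigma \mapsto \tau$ is an injection on the event $\{\gamma\text{ external}\}$. Comparing the Boltzmann weights,
\begin{equation*}
\frac{e^{-\beta H^-_{\Lambda,\mathbf{h}}(\sigma)}}{e^{-\beta H^-_{\Lambda,\mathbf{h}}(\tau)}} \leq \exp\!\left(-2\beta F_{V(\gamma)} + 2\beta \sum_{x \in V(\gamma)} |h_x| + 2\beta R(\gamma,\sigma)\right),
\end{equation*}
where $F_{V(\gamma)} = \sum_{x\in V(\gamma),\, y \notin V(\gamma)} J_{xy}$ is the surface energy coming from the $+$-layer on $\din V(\gamma)$ (by $\lab\gamma = +$) against the $-$-exterior, and $R(\gamma,\sigma)$ collects the long-range couplings between $V(\gamma)$ and the (possibly flipped) interiors of the other contours. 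Condition \textbf{(B)} forces every other $\overline{\gamma}' \in \Gamma(\sigma)$ at distance at least $M(\max_k \diam \overline{\gamma}_k)^a$ from $\gamma$; with the choice $a = \max\{(d+1+\varepsilon)/(\alpha-d),\, d+1+\varepsilon\}$ the tails $\sum J_{xy}$ across these gaps are summable and strictly smaller than $F_{V(\gamma)}$, so that $R(\gamma,\sigma) = o(F_{V(\gamma)})$ and can be absorbed.

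Combining with the entropy bound $|\mathcal{C}_0(m)| \leq e^{c_1 m}$ from Proposition \ref{importprop} and a lower bound $F_{V(\gamma)} \geq E(|\gamma|)$ obtained from Lemma \ref{lema3}, I arrive at
\begin{equation*}
\mu_{\beta,\Lambda}^-(\sigma_0 = +1) \leq \sum_{m\geq 1} e^{c_1 m}\,\exp\!\big(-2\beta [E(m)-\Phi(m)]\big),
\end{equation*}
where $\Phi(m) = \max_{|\gamma|=m} \sum_{x \in V(\gamma)}|h_x|$. The isoperimetric inequality gives $|V(\gamma)| \leq C|\gamma|^{d/(d-1)}$, so $V(\gamma)$ fits in a ball of radius $O(|\gamma|^{1/(d-1)})$ and the decay of $h_x$ yields $\Phi(m) \leq C h^* m^{(d-\delta)/(d-1)}$. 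Lemma \ref{lema3} gives $E(m) \geq c\, m^{(2d-\alpha)/(d-1)}$ for $d<\alpha<d+1$ and $E(m) \geq c\, m$ for $\alpha\geq d+1$. Comparing growth rates, $E(m)$ strictly dominates $\Phi(m)$ at large $m$ precisely in the regimes $\delta > \alpha - d$ (resp.\ $\delta>1$) listed in the theorem, so taking $\beta$ large makes the series arbitrarily small. On the critical lines $\delta = \alpha-d$ or $\delta = 1$, both terms are of the same $m$-order and smallness of $h^*$ is what keeps the coefficient of $m$ in the exponent negative.

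The main obstacle, and the heart of the argument, is the quantitative control of the error $R(\gamma,\sigma)$: one must simultaneously estimate long-range couplings between $V(\gamma)$ and every other element of $\Gamma(\sigma)$, including contours nested inside $\I(\gamma)$ with arbitrary labels, while verifying that the flip $\sigma\mapsto\tau$ does not create or destroy any contour beyond $\gamma$ itself. The multiscale design of the $(M,a,r)$-partition is built precisely for this: Condition \textbf{(A)} preserves the topological compatibility of the residual partition after the flip, while Condition \textbf{(B)} with the chosen exponent $a$ and parameter $r$ makes the long-range tails summable and dominated by the surface energy, matching the exponential entropy $e^{c_1 m}$ of Proposition \ref{importprop}. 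A secondary subtlety is that $F_{V(\gamma)}$ must be bounded below in terms of $|\gamma|$ itself (not merely $|\din V(\gamma)|$), which is where the thickness of long-range interactions captured by Lemma \ref{lema3} pays off.
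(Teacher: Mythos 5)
Your overall strategy follows the Peierls framework correctly, but the central energy estimate is of the wrong order, and this would make the final sum diverge. Two concrete problems.

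First, the flipping map. You propose to flip $\sigma$ on all of $V(\gamma)$. Under $-$-boundary conditions, the external contour $\gamma$ has label $-$ on the outermost layer $\din V(\gamma)$, so the spins on $\din V(\gamma)$ and just outside $V(\gamma)$ are both $-1$. Flipping all of $V(\gamma)$ would then turn $\din V(\gamma)$ into a $+1$ layer facing a $-1$ exterior: far from erasing $\gamma$, you have created a new interface, and you have also flipped the spins inside $\I_-(\gamma)$, breaking the labels of any contours nested there. The paper's map $\tau_\gamma$ is more surgical: it leaves $\I_-(\gamma)$ and the exterior fixed, flips only $\I_+(\gamma)$, and sets $\Sp(\gamma)$ identically to $-1$. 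This removes $\gamma$ without creating new incorrect points and without disturbing the nested $-$-labelled contours, so the combinatorics of the remaining $(M,a,r)$-partition is genuinely preserved.

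Second, and more fundamentally, your lower bound $E(m)\geq c\,m^{(2d-\alpha)/(d-1)}$ on the surface energy is not justified. Lemma \ref{lema3} gives $F_\Lambda\geq K_\alpha|\Lambda|^{2-\alpha/d}$, but the only available lower bound on $|V(\gamma)|$ is $|V(\gamma)|\geq|\Sp(\gamma)|=m$; the isoperimetric inequality $|V(\gamma)|\leq Cm^{d/(d-1)}$ goes in the opposite direction. So over all $\gamma$ with $|\gamma|=m$ one can only guarantee $F_{V(\gamma)}\geq K_\alpha m^{2-\alpha/d}$, and since $2-\alpha/d<1$ whenever $\alpha>d$, this is $o(m)$. (For $\alpha\geq d+1$ the situation is the same: $|\partial V(\gamma)|$ can be as small as order $m^{(d-1)/d}$.) The entropy factor $e^{c_1m}$ of Proposition \ref{importprop} then overwhelms $e^{-2\beta E(m)}$ for every $\beta$, and your series never converges. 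The missing ingredient is an energy term of strict order $|\gamma|$. This is precisely what Proposition \ref{prop2} provides: $H^-_\Lambda(\sigma)-H^-_\Lambda(\tau_\gamma(\sigma))\geq c_2|\gamma|+c_3F_{\I_+(\gamma)}+c_4F_{\Sp(\gamma)}$, where the crucial $c_2|\gamma|$ piece comes from the fact that every point of $\Sp(\gamma)$ is incorrect, hence has an opposite-sign neighbour at distance $\leq 1$, so that $\sum_{x\in\Sp(\gamma),\,y}J_{xy}\mathbbm{1}_{\{\sigma_x\neq\sigma_y\}}$ already produces a term proportional to $|\gamma|$. It is this short-range term that beats the entropy; the surface energies $F_{\I_+(\gamma)}$ and $F_{\Sp(\gamma)}$ are needed only to absorb the decaying field, and there the paper compares $F_{\I_+(\gamma)}$ with $\sum_{x\in\I_+(\gamma)}h_x$ set by set (after truncating $h$ near the origin to handle the contribution on $\Sp(\gamma)$), rather than via an $m$-uniform bound that would suffer from the same mismatch between $|V(\gamma)|$ and $m$.
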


For a fixed $x \in \Z^d$, define the function $\Theta_x : \Omega \rightarrow \mathbb{R}$ by
\[
\Theta_x(\sigma) = \prod_{\substack{y \in \Z^d \\ |x-y| \leq 1}}\mathbbm{1}_{\{\sigma_y = +1\}}- \prod_{\substack{y \in \Z^d \\ |x-y| \leq 1}}\mathbbm{1}_{\{\sigma_y = -1\}}. 
\]

The function $\Theta_x$ assumes the value $+1$ if the point $x$ is $+$-correct, $-1$ if the point is $-$-correct, and $0$ otherwise. By the definition of  contours, given a finite $\Lambda \Subset \Z^d$ and a configuration $\sigma \in \Omega^-_\Lambda$ it may happen that a contour $\gamma$ associated to it have volume outside $\Lambda$. To avoid this problem consider the probability measure defined as,
\be
\nu^-_{\beta,\mathbf{h},\Lambda}(A) \coloneqq \mu^-_{\beta,\mathbf{h},\Lambda}(A|\Theta_x =  - 1, x \in \din \Lambda),
\ee
for every measurable set $A$. The Markov property implies that the probability measures $\nu^-_{\beta,\mathbf{h},\Lambda}$ are the finite volume Gibbs measure in a subset of $\Lambda$ and to work with them is advantageous since we can study important quantities in terms of contours. Fixed $\Lambda \Subset \Z^d$, for each $\Lambda' \subset \Lambda$, let the restricted partition functions be
\be
Z_{\beta,\mathbf{h}}^-(\Lambda') \coloneqq \sum_{\substack{\Gamma \in \mathcal{E}_\Lambda^-\\ V(\Gamma)\subset \Lambda'}}\sum_{\sigma \in \Omega(\Gamma)}e^{-\beta H_{\Lambda, \mathbf{h}}^-(\sigma)},
\ee

where, given $\Gamma \in \mathcal{E}^-_\Lambda$, the space of configurations is $\Omega(\Gamma)\coloneqq \{\sigma \in \Omega^-_\Lambda: \Gamma \subset \Gamma(\sigma)\}$. Note that we will abuse the notation and denote $\Gamma(\sigma)$ as the set of contours associated to a configuration $\sigma$ instead of the $(M,a,r)$-partition. Define the map $\tau_\Gamma:\Omega(\Gamma) \rightarrow \Omega_{\Lambda}^-$ as
\be
\tau_\Gamma(\sigma)_x = 
\begin{cases}
	\;\;\;\sigma_x &\text{ if } x \in \I_-(\Gamma)\cup V(\Gamma)^c, \\
	-\sigma_x &\text{ if } x \in \I_+(\Gamma),\\
	-1 &\text{ if } x \in \Sp(\Gamma).
\end{cases}
\ee
The map $\tau_\Gamma$ erases a family of compatible contours, since the spin flip preserves incorrect points but transforms $+$-correct points into $-$-correct points. Given $\Gamma \in \mathcal{E}^-_\Lambda$ and a configuration $\sigma \in \Omega(\Gamma)$, we will calculate the energy cost to extract one of its elements. We can start with bounding the number of integer points in the $\ell^1$-sphere and continue with giving a lower bound for the diameter of a finite subset of $\Z^d$.
\begin{lemma}\label{comblema}
	Let $s_d(n)$ be the cardinality of integer points in the $\ell^1$-sphere, centered at the origin and with radius $n$. Then, for any $n\geq d$, we have,
	\[
	s_d(n) = \sum_{k=0}^{d-1}2^{d-k}\binom{d}{k}\binom{n-1}{d-k-1}.
	\]
	If $n<d$, the sum above starts in $k=d-n$. 
\end{lemma}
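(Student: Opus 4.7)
The plan is to partition the integer points on the $\ell^1$-sphere of radius $n$ in $\Z^d$ according to the number $j$ of nonzero coordinates, count each class by elementary combinatorics, and finally reindex by $k = d - j$ to recover the form written in the lemma.

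First I would fix $j \in \{1,\ldots,\min(d,n)\}$ and count the lattice points $x \in \Z^d$ with $|x_1|+\cdots+|x_d|=n$ and exactly $j$ nonzero entries. This is a three-step count: (i) pick the $j$ positions among the $d$ coordinates, contributing $\binom{d}{j}$; (ii) pick a sign for each of the $j$ nonzero entries, contributing $2^j$; (iii) pick positive integers $y_1,\ldots,y_j \ge 1$ with $y_1+\cdots+y_j = n$, which is a standard stars-and-bars count equal to $\binom{n-1}{j-1}$. Hence the contribution for fixed $j$ is $\binom{d}{j}2^j\binom{n-1}{j-1}$, and summing over $j$ yields
\[
s_d(n) = \sum_{j=1}^{\min(d,n)} \binom{d}{j}\, 2^j \binom{n-1}{j-1}.
\]
The upper limit $\min(d,n)$ records both that there are only $d$ coordinates to choose from and that each nonzero coordinate carries at least one unit of $\ell^1$-mass, so $j \le n$ is forced.

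To match the statement of the lemma, I would perform the substitution $k = d-j$. The range $1 \le j \le \min(d,n)$ becomes $d - \min(d,n) \le k \le d - 1$, and the summand transforms into $\binom{d}{k}\, 2^{d-k}\binom{n-1}{d-k-1}$. When $n \ge d$, the lower limit is $k = 0$, giving precisely the formula displayed in the lemma. When $n < d$, the genuine lower limit is $k = d - n$; however, the extra terms one would obtain by extending the sum down to $k = 0$ all vanish, since $\binom{n-1}{d-k-1} = 0$ whenever $d - k - 1 > n - 1$, i.e.\ $k < d-n$. (I note in passing that the truncation indicated in the statement of the lemma is the one given by this vanishing, namely the sum effectively starts at $k = d - n$.)

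There is no real obstacle here: the argument is a routine case-split-and-count, and the only mild subtlety is noticing the vanishing of the binomial coefficients in the extended sum, which allows a uniform lower limit to be written. A brief sanity check against small values (e.g.\ $d=3$, $n=1,2$) confirms the formula.
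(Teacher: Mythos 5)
Your proof is correct and is the natural combinatorial argument; the paper states Lemma~\ref{comblema} without giving a proof, so there is nothing to compare it against. To summarize your route: partition the $\ell^1$-sphere by the number $j$ of nonzero coordinates, count each stratum as $\binom{d}{j}2^j\binom{n-1}{j-1}$ via choice of support, choice of signs, and a stars-and-bars count of positive compositions of $n$ into $j$ parts, and then reindex with $k=d-j$, using $\binom{d}{d-k}=\binom{d}{k}$. This is exactly what one would expect the authors to have in mind, and your sanity checks (e.g.\ $d=3$, $n=2$ giving $18$) confirm it.

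You are also right to flag the truncation: for $n<d$ the sum effectively runs over $k\in\{d-n,\dots,d-1\}$, since $\binom{n-1}{d-k-1}=0$ for $k<d-n$; the paper's remark that it ``starts in $k=n$'' appears to be a slip for $k=d-n$. One could equivalently leave the lower limit at $k=0$ for all $n\ge 1$ and let the vanishing binomials take care of the truncation, which is perhaps the cleanest way to state the identity. This discrepancy has no downstream effect, since the lemma is only used later in the regime $n\geq d$ (and through the two-sided bound $c_d n^{d-1}\leq s_d(n)\leq 2^{2d-1}e^{d-1}n^{d-1}$).
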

This Lemma allows us to show that 
\be\label{eq:desi_sd}
c_d n^{d-1}\leq s_d(n) \leq 2^{2d-1} e^{d-1} n^{d-1}, 
\ee
for all $n \geq d$ and $c_d = (d-1)^{-(d-1)}$.
\begin{lemma}\label{diamlema}
	There exists $k_d>0$ such that for every $\Lambda \Subset \Z^d$ it holds,
	\be\label{diamlema:eq1}
	\diam(\Lambda)\geq k_d |\Lambda|^{\frac{1}{d}}.
	\ee
\end{lemma}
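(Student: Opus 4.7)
The plan is to use a containment argument: if $D=\diam(\Lambda)$, then after fixing any reference point $x_0\in\Lambda$, the entire set $\Lambda$ lies inside the closed $\ell^1$-ball $B_D(x_0)$. The result then reduces to showing that the cardinality of this ball grows at most like $D^d$, which is exactly the content that can be read off from Lemma \ref{comblema}.

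More concretely, I would write
\[
|\Lambda|\;\le\;|B_D(x_0)|\;=\;1+\sum_{n=1}^{D}s_d(n).
\]
The bound $s_d(n)\leq 2^{2d-1}e^{d-1}n^{d-1}$ for $n\geq 1$ (already noted after Lemma \ref{comblema}) gives
\[
|B_D(x_0)|\;\le\;1+2^{2d-1}e^{d-1}\sum_{n=1}^{D}n^{d-1}\;\le\;C_d D^{d},
\]
for a constant $C_d>0$ depending only on the dimension. Rearranging yields $D\ge C_d^{-1/d}|\Lambda|^{1/d}$, so one can take $k_d\coloneqq C_d^{-1/d}$.

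The only subtlety is the edge case $|\Lambda|=1$, where $\diam(\Lambda)=0$ and the inequality would fail for any positive $k_d$; this is handled by restricting to $|\Lambda|\geq 2$ (which is the case of interest in the applications of the lemma), or equivalently by shrinking $k_d$ so that the inequality is automatic when $|\Lambda|$ is small and $\diam(\Lambda)\geq 1$. Essentially no real obstacle arises: the statement is a standard volume-versus-diameter comparison on $\Z^d$, and the only mild technical point is choosing the constants in the polynomial bound for $s_d(n)$ uniformly in $n$, which is already available from Lemma \ref{comblema}. An alternative and slightly cleaner route would bypass $\ell^1$-balls by noting $\|x-y\|_\infty\leq\|x-y\|_1$, so $\Lambda$ fits in an $\ell^\infty$-cube of side $D+1$, giving $|\Lambda|\le(D+1)^d$ directly; but the $\ell^1$ version is more in line with the notation used throughout the paper.
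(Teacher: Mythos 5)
Your proof is correct and takes essentially the same route as the paper's: both bound $|\Lambda|$ by the cardinality of the $\ell^1$-ball of radius $\diam(\Lambda)$ about a point of $\Lambda$, then control that ball's cardinality by a constant times $\diam(\Lambda)^d$ using the upper bound on $s_d(n)$ supplied by Lemma~\ref{comblema}. The $|\Lambda|=1$ edge case you flag is equally present (and equally harmless, given how the lemma is used) in the paper's own argument, which tacitly assumes a positive radius.
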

\begin{proof}
	Consider a closed ball with positive integer radius $n$. Lemma \ref{comblema} implies that the diameter satisfies
	\[
	\diam(B_n(x))= 2n \geq C_d |B_n(x)|^{\frac{1}{d}},
	\]
	where $C_d = (2^{2d}e^{d-1}\frac{1}{d})^{-\frac{1}{d}}$. Let $\Lambda$ be any finite subset of $\Z^d$. If we take $x^*, y^* \in \Lambda$ such that $\diam(\Lambda)=|x^*-y^*|$, we have
	\[
	2\diam(\Lambda)= \diam( B_{|x^*-y^*|}(x^*)) \geq C_d|\Lambda|^{\frac{1}{d}}.
	\]
	Inequality \eqref{diamlema:eq1} follows by choosing the constant $k_d = C_d/2$. 
\end{proof}

In the next proposition, we will give a lower bound for the cost of extracting a contour from a given configuration. The main difference in our results compared to the short-range case is that we have an additional surface order term, defined as
\[
F_\Lambda \coloneqq \sum_{\substack{x \in \Lambda \\ y \in \Lambda^c}}J_{xy},
\]
for every finite set $\Lambda \Subset \Z^d$. First, let us give a lower bound to the surface energy term, that will be useful to the proof of phase transition.
\begin{lemma}\label{lema3}
	Given $\alpha>d$, there exists $K_\alpha\coloneqq K_\alpha(\alpha,d)>0$ such that for every $\Lambda \Subset \Z^d$ it holds
	\be
	F_\Lambda \geq K_{\alpha}\max\{|\Lambda|^{2-\frac{\alpha}{d}}, |\partial\Lambda|\}.
	\ee
\end{lemma}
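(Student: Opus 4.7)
My plan is to prove the two inequalities in the maximum separately and then combine them. For the surface bound $F_\Lambda \geq K_\alpha |\partial \Lambda|$, the strategy is to simply discard all of the long-range contributions in the sum defining $F_\Lambda$ and retain only the nearest-neighbor pairs. Every pair $(x,y)$ with $x \in \Lambda$, $y \in \Lambda^c$ and $|x-y|=1$ contributes $J_{xy}=J$ and is counted exactly once in $F_\Lambda$; since by the definition of $\partial \Lambda$ these pairs are in bijection with the edges of $\partial \Lambda$, this gives $F_\Lambda \geq J|\partial \Lambda|$ immediately.

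For the volume bound $F_\Lambda \geq K_\alpha |\Lambda|^{2-\alpha/d}$, I would use a pointwise rearrangement argument. Fix $x \in \Lambda$ and let $R$ be the smallest integer for which $|B_R(x)| \geq |\Lambda|$; this radius is independent of $x$ and satisfies $R \asymp |\Lambda|^{1/d}$, as follows from the polynomial volume growth of balls in $\Z^d$ quantified by Lemma \ref{comblema}. Because $J|x-y|^{-\alpha}$ depends only on the distance and is decreasing in $|x-y|$, the sum $\sum_{y \in \Lambda\setminus\{x\}} |x-y|^{-\alpha}$, which has exactly $|\Lambda|-1$ terms, is bounded above by the sum over the $|\Lambda|-1$ points of $\Z^d\setminus\{x\}$ closest to $x$, and hence by $\sum_{y \in B_R(x) \setminus \{x\}} |x-y|^{-\alpha}$. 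Decomposing the (absolutely convergent) total $\sum_{y \neq x} |x-y|^{-\alpha}$ according to both partitions $\{\Lambda\setminus\{x\},\Lambda^c\}$ and $\{B_R(x)\setminus\{x\}, B_R(x)^c\}$ and subtracting, one obtains
\[
\sum_{y \in \Lambda^c}|x-y|^{-\alpha} \;\geq\; \sum_{y \in B_R(x)^c}|x-y|^{-\alpha} \;=\; \sum_{n>R} s_d(n)\, n^{-\alpha}.
\]
Using the lower bound $s_d(n)\geq c_d n^{d-1}$ from \eqref{eq:desi_sd} and a comparison with the integral $\int_R^\infty t^{d-1-\alpha}\,dt$, which is finite because $\alpha>d$, the right-hand side is at least a constant multiple of $R^{d-\alpha}$, and therefore of $|\Lambda|^{1-\alpha/d}$.

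Summing this pointwise lower bound over $x \in \Lambda$ yields $F_\Lambda \geq K_\alpha' J\,|\Lambda|\cdot |\Lambda|^{1-\alpha/d} = K_\alpha' J\,|\Lambda|^{2-\alpha/d}$; combining with the surface bound and taking the minimum of the two constants produces the claim. I do not expect any serious obstacle: the argument is short, and the only mild subtlety is the small-$|\Lambda|$ regime, in which the asymptotic bound $s_d(n) \geq c_d n^{d-1}$ is only asserted for $n\geq d$ and the ball-volume estimate $R\asymp |\Lambda|^{1/d}$ degenerates. In that range, however, both sides of the desired inequality are bounded by finite constants depending only on $(\alpha,d)$, so any deficit can be absorbed by shrinking $K_\alpha$ once at the end.
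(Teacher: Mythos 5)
Your proof is correct and follows essentially the same approach as the paper: both bound $F_\Lambda \geq J|\partial\Lambda|$ by keeping only nearest-neighbor edges, and both fix $x\in\Lambda$, compare $\sum_{y\in\Lambda^c}J_{xy}$ to $\sum_{y\in B_R(x)^c}J_{xy}$ for a radius $R$ chosen so that $|B_R(x)|\geq|\Lambda|$, estimate the tail $\sum_{n>R}s_d(n)n^{-\alpha}$ by an integral, and sum over $x\in\Lambda$. Your rearrangement framing of the key step $\sum_{y\in\Lambda^c}J_{xy}\geq\sum_{y\in B_R(x)^c}J_{xy}$ is an equivalent (slightly more conceptual) phrasing of the paper's direct estimate $\sum_{y\in B_R(x)}J_{xy}-\sum_{y\in\Lambda}J_{xy}\geq J\bigl(|B_R(x)|-|\Lambda|\bigr)/R^\alpha\geq 0$.
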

\begin{proof}
	Since all the edges of $\partial\Lambda$ are present in the surface energy term $F_\Lambda$, we have the bound $F_\Lambda \geq J |\partial \Lambda|$. Fix $x \in \Lambda$.
	If we set $R = \lceil (dc_d^{-1}|\Lambda|)^{\frac{1}{d}} \rceil$ and using that $\sum_{\substack{y \in \Z^d}}J_{xy} < \infty$, we have
	\[
	\ssum{y \in \Lambda^c}J_{xy} - \ssum{y \in B_{R}(x)^c}J_{xy}=\ssum{y \in B_{R}(x)}J_{xy} - \ssum{y \in \Lambda}J_{xy} \geq \frac{J(|B_{R}(x)| - |\Lambda|)}{R^\alpha} \geq 0.
	\]
	 Lema \ref{comblema} yields us
	\[
	\sum_{y \in B_R(x)^c}J_{xy}  = J\sum_{n\geq R+1}\frac{s_d(n)}{n^\alpha} \geq J c_d\sum_{n \geq R+1} \frac{1}{n^{\alpha -d +1}}.
	\]
	The r.h.s can be bounded below by an integral, and we take $K_\alpha = Jc_d (\alpha-d)^{-1}((dc_d^{-1})^{1/d}+2)^{d-\alpha}$.
\end{proof}

In the next proposition we will denote $H_\Lambda^-$ the Hamiltonian function in \eqref{Isingsys} when the field $h_x = 0$, for all $x \in \Z^d$.
\begin{proposition}\label{prop2}
	For $M$ large enough, there are constants $c_i\coloneqq c_i(\alpha,d)>0$, $i=2,3,4$, such that for $\Lambda \Subset \Z^d$, any fixed contour $\gamma \in \mathcal{E}^-_\Lambda$, and $\sigma \in \Omega(\gamma)$ it holds 
	\be
	H_\Lambda^-(\sigma)- H_\Lambda^-(\tau(\sigma))\geq c_2|\gamma|+ c_3 F_{\I_+(\gamma)} + c_4 F_{\Sp(\gamma)}.
	\ee
\end{proposition}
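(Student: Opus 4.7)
The plan is to expand $\Delta H := H^-_\Lambda(\sigma) - H^-_\Lambda(\tau(\sigma))$ bond by bond, extending both configurations by $-1$ outside $\Lambda$ so that the boundary terms are absorbed. Since $\tau$ only changes spins in $V_\gamma^+ := \Sp(\gamma)\cup\I_+(\gamma)$, only bonds meeting this set contribute, and a four-fold split is natural: (i) both endpoints in $\I_+(\gamma)$, which cancels exactly since both spins flip; (ii) both in $\Sp(\gamma)$, which contributes $J_{xy}(1-\sigma_x\sigma_y)\ge 0$; (iii) one in $\Sp(\gamma)$ and one in its complement; and (iv) one in $\I_+(\gamma)$ and one in $V_\gamma^-:=\I_-(\gamma)\cup V(\gamma)^c$.

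I would first analyse the \emph{ideal} case in which $\sigma\equiv +1$ on $\I_+(\gamma)$ and $\sigma\equiv -1$ on $V_\gamma^-$, which is the sign pattern forced on the inner boundaries of $\gamma$ by its label. Using $-\sigma_x\sigma_y=\tfrac{1}{2}(\sigma_x-\sigma_y)^2-1$ one rewrites $\Delta H_{\rm ideal}=2\sum_{\sigma_x\ne\sigma_y}J_{xy}$ and identifies three groups of disagreement bonds: the bonds $\I_+(\gamma)\leftrightarrow V_\gamma^-$ contributing exactly $F_{\I_+(\gamma)}-\sum_{x\in\I_+(\gamma),y\in\Sp(\gamma)}J_{xy}$; bonds between $\Sp(\gamma)$ and its complement, which after a case analysis on the sign of $\sigma_x$ at each $x\in\Sp(\gamma)$ contribute at least a positive multiple of $F_{\Sp(\gamma)}$ plus the missing piece above; and finally nearest-neighbour bonds at $\Sp(\gamma)$ guaranteed to be in disagreement by the incorrectness of every $x\in\Sp(\gamma)$, yielding an extra $J|\gamma|/(2d)$.

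The hardest step is controlling the deviation from this ideal scenario. Any deviation is caused by another contour $\gamma'\in\Gamma(\sigma)\setminus\{\gamma\}$ whose support flips $\sigma$ away from the label value somewhere in $V_\gamma^+\cup V_\gamma^-$. By \textbf{(B2)} of Definition \ref{def:d_condition} such $\gamma'$ satisfies $\dis(\Sp(\gamma),\Sp(\gamma'))>M\min\{D_\gamma,D_{\gamma'}\}^a$, where $D$ denotes the maximal diameter of the subfamily supplied by \textbf{(B1)}. Combined with the tail estimate $\sum_{|y-x|\ge R}J|x-y|^{-\alpha}\le C(\alpha,d)R^{d-\alpha}$ and the lower bound $\diam\ge k_d|\cdot|^{1/d}$ of Lemma \ref{diamlema}, the total error contribution of $\gamma'$ to $\Delta H$ is bounded by $C(\alpha,d)M^{d-\alpha}|V(\gamma')|^{1+a(d-\alpha)/d}$. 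The choice $a\ge(d+1+\varepsilon)/(\alpha-d)$ of Definition \ref{def:d_condition} makes the exponent of $|V(\gamma')|$ strictly less than $-\varepsilon/d$, hence summable over sub-contours via the entropy bound of Proposition \ref{importprop}.

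Putting the ideal contribution together with the defect estimate, the total error is bounded by $C\,M^{d-\alpha}\bigl(F_{\Sp(\gamma)}+F_{\I_+(\gamma)}\bigr)$ after Lemma \ref{lema3} is used to convert surface quantities back into the $F$ form. Choosing $M$ so large that this constant is at most $1/2$ absorbs the defect into half of the ideal contribution and leaves positive constants $c_2,c_3,c_4>0$ depending only on $\alpha$ and $d$. The main obstacle is precisely this defect bound: one must verify that the cumulative long-range interference coming from arbitrarily many sub-contours is strictly dominated by the positive terms associated to $\gamma$ itself, which is why the exponent $a$ in Definition \ref{def:d_condition} is fine-tuned to this particular form.
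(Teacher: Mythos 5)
Your opening reduction and the ``ideal case'' computation are sound and mirror the paper's starting point (eq.~\eqref{prop2:eq1}--\eqref{prop2:eq2}): when $\tau_\gamma(\sigma)\equiv -1$ on $V(\gamma)$, the difference is exactly $2\sum_{\sigma_x\neq\sigma_y}J_{xy}$ and the cases (i)--(iv) you list are the correct bond classes. The gap is entirely in the defect estimate, and it is a real one, not cosmetic.

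You bound the interference from a single sub-contour $\gamma'$ by $C M^{d-\alpha}|V(\gamma')|^{1+a(d-\alpha)/d}$, note that the exponent is $<-\varepsilon/d$ by the choice of $a$, and conclude the total is controlled ``via the entropy bound of Proposition~\ref{importprop}.'' This cannot work. Proposition~\ref{importprop} counts how many contours of a given size \emph{could} exist, which is the right tool inside the Peierls sum in Proposition~\ref{prop3}, but here you have a \emph{fixed} configuration $\sigma\in\Omega(\gamma)$ with a fixed set of actual sub-contours, so there is nothing to sum the entropy bound over. Moreover, even formally, $\sum_m e^{c_1 m}m^{-\varepsilon/d}$ diverges, so a negative power of $|V(\gamma')|$ alone cannot absorb a combinatorial factor growing exponentially in $m$. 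Most importantly, a per-$\gamma'$ bound depending only on $|V(\gamma')|$ loses the information you actually need: there can be arbitrarily many small sub-contours of a fixed size, and your bound places no cap on their combined interference.

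What closes this in the paper is not summability in $|V(\gamma')|$ but a relation back to the surface terms of $\gamma$. Sub-contours are split into those with larger partition diameter than $\gamma$'s (the sets $\Upsilon_1,\Upsilon_3$), which are all at distance $\geq M(\max_k\diam\gamma_k)^a$ and are handled by a single tail estimate yielding a factor proportional to $M^{d-\alpha}|\gamma|$, and those with smaller diameter, which are layered by scale $m$ into $\Upsilon_{2,m},\Upsilon_{4,m},\Gamma_{1,m}$. For a layer of scale $m$, each representative point $y_{\gamma',x}$ is at distance $\geq Mm^a$ from $x$, the balls of radius $Mm^a$ around distinct $y_{\gamma',x}$ are disjoint by \textbf{(B2)}, and the minimal path from $x$ to $y_{\gamma',x}$ crosses enough bonds that the entire scale-$m$ sum is bounded by $(Mm^a)^{-1}F_{\Sp(\gamma)}$ (or $(Mm^a)^{-1}F_{\I_+(\gamma)}$); this is eq.~\eqref{prop2:eq7}. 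Combined with the volume bound $|V(\gamma')|\leq Cm^d$ from condition~\textbf{(B)} and Lemma~\ref{diamlema}, the sum over scales becomes $\frac{C}{M}\zeta(a-d)\,F_{\Sp(\gamma)}$, which converges precisely because $a>d+1$ and is then absorbed by taking $M$ large. So the role of $a$ is not to make $|V(\gamma')|$ carry a summable negative power; it is to make $\sum_m m^{d-a}$ converge after the defect has been re-expressed as a fraction of the positive surface term you are trying to keep. Without this re-expression step your argument does not close.
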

\begin{proof}
	Fix some $\sigma \in \Omega(\gamma)$. We will denote $\tau_\gamma(\sigma)\coloneqq \tau$ and $\Gamma(\sigma) \coloneqq \Gamma$ throughout this proposition. The difference between the Hamiltonians is
	\begin{align}\label{prop2:eq1}
		H_{\Lambda}^-(\sigma)-H_{\Lambda}^-(\tau)&= \sum_{\{x,y\}\subset V(\Gamma)}J_{xy}(\tau_x\tau_y - \sigma_x\sigma_y) + \sum_{\substack{x \in V(\Gamma)\\ y \in V(\Gamma)^c}}J_{xy}(\sigma_x - \tau_x) \nonumber \\
		&=\sum_{\substack{x \in A(\gamma) \\ y \in B(\gamma)}}J_{xy}(\tau_x\tau_y - \sigma_x\sigma_y) + \sum_{\{x,y\} \subset A(\gamma)}J_{xy}(\tau_x\tau_y-\sigma_x\sigma_y),
	\end{align}
	where $A(\gamma)=\I_+(\gamma)\cup\Sp(\gamma)$ and $B(\gamma)=\I_-(\gamma)\cup V(\gamma)^c$. Using the definition of the map $\tau_\gamma$ and decomposing the sets $A(\gamma)$ and $B(\gamma)$ in a suitable manner, we get
	\begin{align}\label{prop2:eq2}
	H_{\Lambda}^-(\sigma)-H_{\Lambda}^-(\tau) &= \sum_{\substack{x \in \Sp(\gamma)\\ y \in \Z^d}}J_{xy}\mathbbm{1}_{\{\sigma_x \neq \sigma_y\}}+\sum_{\substack{x \in \Sp(\gamma)\\ y \in \Sp(\gamma)^c}}J_{xy}\mathbbm{1}_{\{\sigma_x \neq \sigma_y\}}-2\sum_{\substack{x \in I_+(\gamma) \\ y \in B(\gamma)}}J_{xy}\sigma_x\sigma_y \nonumber\\
	&- 2\sum_{\substack{x \in \Sp(\gamma) \\ y \in B(\gamma)}}J_{xy}\mathbbm{1}_{\{\sigma_y = +1\}}-2\sum_{\substack{x \in \Sp(\gamma) \\ y \in \I_+(\gamma)}}J_{xy}\mathbbm{1}_{\{\sigma_y = -1\}}.
	\end{align}
	We need to analyse each negative term of the equality above carefully. We start with the terms depending on $\Sp(\gamma)$. Notice that the characteristic functions on $B(\gamma)$ and $\I_+(\gamma)$ can only be different from zero at the other contours volumes. Thus,
	\be\label{prop2:eq3}
	\sum_{\substack{x \in \Sp(\gamma) \\ y \in B(\gamma)}}J_{xy}\mathbbm{1}_{\{\sigma_y = +1\}}+\sum_{\substack{x \in \Sp(\gamma) \\ y \in \I_+(\gamma)}}J_{xy}\mathbbm{1}_{\{\sigma_y = -1\}} \leq \sum_{\substack{x \in \Sp(\gamma) \\ y \in V(\Gamma')}}J_{xy},
	\ee
	where $\Gamma'\coloneqq \Gamma(\tau)$. Let $\gamma = \bigcup_{1 \leq k\leq n} \gamma_k$ and $\gamma'= \cup_{1\leq j\leq n'}\gamma'_j$ for each $\gamma' \in \Gamma'$ be the subsets given to us by condition \textbf{(B)}. We will divide the r.h.s of Equation \eqref{prop2:eq3} into two terms depending on the sets
	$\Upsilon_1= \{\gamma' \in \Gamma': \underset{1\leq k \leq n}{\max}\diam(\gamma_k)\leq \underset{1\leq j \leq n'}{\max}\diam(\gamma'_j)\}$  and $\Upsilon_2= \Gamma' \setminus \Upsilon_1$. On the first sum, Condition \textbf{(B2)} implies
	\[\label{prop2:eq4}
	\sum_{\substack{x \in \Sp(\gamma) \\ y \in V(\Upsilon_1)}}J_{xy} \leq \sum_{\substack{x \in \Sp(\gamma) \\ y \in B_R(x)^c}}J_{xy},
	\]
	where $R= M\underset{1\leq k \leq n}{\max}\diam(\gamma_k)^a$. Using Condition \textbf{(B1)} it holds,
	\be\label{prop2:eq5}
	\sum_{\substack{x \in \Sp(\gamma)\\ y \in B_R(x)^c}}J_{xy} \leq \frac{J2^{d+\alpha-1}e^{d-1}|\gamma|}{(\alpha-d)M^{\alpha-d}}\underset{1\leq k \leq n}{\max}\diam(\gamma_k)^{a(d-\alpha)}\leq  \frac{J2^{d+\alpha-1}e^{d-1}(2^r-1)}{(\alpha-d)k_d^{a(\alpha-d)}M^{\alpha -d}}.
	\ee
	We turn our attention to the sum depending on $\Upsilon_2$. We divide the set $\Upsilon_2$ into sets $\Upsilon_{2,m}$ consisting of contours of $\Gamma'$ where the maximum diameter of its partition is $m$. Thus, for any $x$ in $\Sp(\gamma)$ and $\gamma' \in \Gamma'$, there is a point $y_{\gamma',x}\in V(\gamma')$ such that $|x-y_{\gamma',x}| = \dis(\{x\},\gamma')$. Then, for every $1\leq m < \max_{1\leq k\leq n} \diam(\gamma_k)$,
	\[
	\sum_{\substack{x \in \Sp(\gamma) \\ y \in V(\Upsilon_{2,m})}}J_{xy} \leq \sum_{\substack{x \in \Sp(\gamma) \\ \gamma' \in \Upsilon_{2,m}}}|V(\gamma')|J_{xy_{\gamma',x}}.
	\]
	 For each $\gamma' \in \Upsilon_{2,m}$, define the graph $G_{\gamma'}$ with vertex set $v(G_{\gamma'})=\{\gamma'_j\}_{1\leq j \leq n'}$ and an edge is placed when $\dis(\gamma'_j,\gamma'_i)\leq 1$. Let $G_j$ be the maximal connected component of $G_{\gamma'}$ such that $\gamma'_j$ is an element of its vertex set. Also, let $V(G_j)\subset V(\gamma')$ be the subset of all connected components of $V(\gamma')$ that have a non empty intersection with the vertices of $G_j$. Using Lemma \ref{diamlema}, we have
	\[
	|V(\gamma')| \leq \frac{1}{k_d^d}\sum_{j=1}^{n'}\diam(V(G_j))^d.
	\]
	The diameter of $V(G_j)$ is realized by the distance between two points, namely $x^*$ and $y^*$, that must be into $\gamma'_i, \gamma'_l \in v(G_j)$. We can make a minimal path $\lambda_1$ in the graph $G_j$ between $\gamma'_i$ and $\gamma'_l$ since it is connected. Thus, using the path $\lambda_1$, we can construct a minimal path $\lambda_2$ in $\Z^d$ connecting $x^*$ and $y^*$ that passes through every vertex that is visited by the path $\lambda_1$. Let $\lambda_3$ be a minimal path realizing the distance between $x^*$ and $y^*$. Since the path is minimal, we have
	\[
	\diam(V(G_j))=|\lambda_3|\leq |\lambda_2|\leq \sum_{\gamma'_i \in v(G_j)}\left[\diam(\gamma'_i)+1\right].
	\]
	Hence,
	\be\label{eqvolume}
	\frac{1}{k_d^d}\sum_{j=1}^{n'}\diam(V(G_j))^d\leq \frac{1}{k_d^d}\sum_{j=1}^{n'} \left(\sum_{\gamma'_i \in v(G_j)}\left[\diam(\gamma'_i)+1\right]\right)^d.
	\ee
	The number of elements in $v(G_j)$ is at most $2^r-1$ by condition \textbf{(B)}, thus
	\be\label{prop2:eq6}
	\sum_{\substack{x \in \Sp(\gamma) \\ y \in V(\Upsilon_{2,m})}}J_{xy} \leq \frac{(2(2^r-1))^{d+1}}{k_d^d}m^d\sum_{\substack{x \in \Sp(\gamma) \\ \gamma' \in \Upsilon_{2,m}}}J_{xy_{\gamma',x}}.
	\ee
	We know that there is no other point $y_{\gamma'',x}$ at least in a ball of radius $Mm^a$ centered at $y_{\gamma',x}$. Thus, taking balls of radius $Mm^a/3$ guarantees that they become disjoint by Condition \textbf{(B2)}. Also, if $\lambda$ is the minimal path realizing the distance between $x$ and $y_{\gamma',x}$, we know that it must have at least $Mm^a$ points (see Figure \ref{figure_minimalpath}). Thus,
	\be\label{prop2:eq7}
	\sum_{\substack{x \in \Sp(\gamma) \\ \gamma' \in \Upsilon_{2,m}}}J_{xy_{\gamma',x}} \leq \frac{3}{Mm^a}F_{\Sp(\gamma)}.
	\ee
	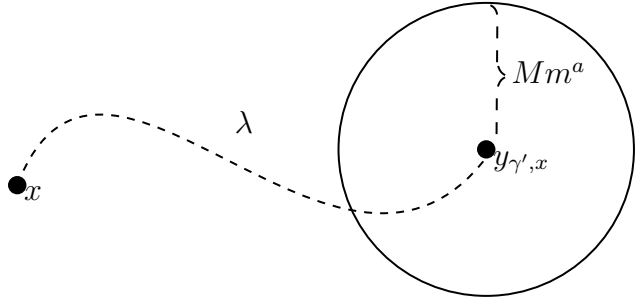
\begin{figure}[H]\label{figure_minimalpath}
		\centering
		
		\tikzset{every picture/.style={line width=0.75pt}} %set default line width to 0.75pt        
		
		\begin{tikzpicture}[x=0.75pt,y=0.75pt,yscale=-1,xscale=1]
			%uncomment if require: \path (0,300); %set diagram left start at 0, and has height of 300
			
			%Shape: Circle [id:dp36157831401121343] 
			\draw  [fill={rgb, 255:red, 0; green, 0; blue, 0 }  ,fill opacity=1 ] (4.8,99.1) .. controls (4.8,96.84) and (6.64,95) .. (8.9,95) .. controls (11.16,95) and (13,96.84) .. (13,99.1) .. controls (13,101.36) and (11.16,103.2) .. (8.9,103.2) .. controls (6.64,103.2) and (4.8,101.36) .. (4.8,99.1) -- cycle ;
			%Shape: Circle [id:dp5187850050766343] 
			\draw  [fill={rgb, 255:red, 0; green, 0; blue, 0 }  ,fill opacity=1 ] (238.8,81.1) .. controls (238.8,78.84) and (240.64,77) .. (242.9,77) .. controls (245.16,77) and (247,78.84) .. (247,81.1) .. controls (247,83.36) and (245.16,85.2) .. (242.9,85.2) .. controls (240.64,85.2) and (238.8,83.36) .. (238.8,81.1) -- cycle ;
			%Curve Lines [id:da3269010169198716] 
			\draw [dashed  ,draw opacity=1 ]   (8.9,99.1) .. controls (51.4,-12) and (169.5,182) .. (242.9,85.2) ;
			%Shape: Circle [id:dp8970204274266147] 
			\draw   (169.2,81.1) .. controls (169.2,40.4) and (202.2,7.4) .. (242.9,7.4) .. controls (283.6,7.4) and (316.6,40.4) .. (316.6,81.1) .. controls (316.6,121.8) and (283.6,154.8) .. (242.9,154.8) .. controls (202.2,154.8) and (169.2,121.8) .. (169.2,81.1) -- cycle ;
			%Shape: Brace [id:dp10131283592165574] 
			\draw  [dash pattern={on 4.5pt off 4.5pt}] (241,81) .. controls (245.67,81.03) and (248.02,78.72) .. (248.05,74.05) -- (248.18,54.25) .. controls (248.23,47.58) and (250.58,44.27) .. (255.25,44.3) .. controls (250.58,44.27) and (248.27,40.92) .. (248.32,34.25)(248.3,37.25) -- (248.45,15.05) .. controls (248.48,10.38) and (246.17,8.03) .. (241.5,8) ;
			
			% Text Node
			\draw (244.9,80.4) node [anchor=north west][inner sep=0.75pt]    {$y_{\gamma',x}$};
			% Text Node
			\draw (10.9,98.4) node [anchor=north west][inner sep=0.75pt]    {${\displaystyle x}$};
			% Text Node
			\draw (116,60.4) node [anchor=north west][inner sep=0.75pt]    {$\lambda $};
			% Text Node
			\draw (254,35.4) node [anchor=north west][inner sep=0.75pt]    {$Mm^{a}$};
			
		\end{tikzpicture}
		\caption{Minimal path $\lambda$ between $x$ and $y_{\gamma',x}$.}
	\end{figure}
	Inequalities \eqref{prop2:eq7}, \eqref{prop2:eq6}, \eqref{prop2:eq5} plugged into Inequality \eqref{prop2:eq3} yields,
	\be\label{prop2:eq8}
	\sum_{\substack{x \in \Sp(\gamma) \\ y \in B(\gamma)}}J_{xy}\mathbbm{1}_{\{\sigma_y = +1\}}+\sum_{\substack{x \in \Sp(\gamma) \\ y \in \I_+(\gamma)}}J_{xy}\mathbbm{1}_{\{\sigma_y = -1\}} \leq \frac{k^{(1)}_\alpha}{M^{(\alpha-d)\wedge 1}} F_{\Sp(\gamma)}
	\ee
	where, $k^{(1)}_\alpha =2 \max\left\{\frac{J2^{d+\alpha-1}e^{d-1}(2^r-1)}{(\alpha-d)k_d^{a(\alpha-d)}},\frac{(2(2^r-1))^{d+1}3\zeta(a-d)}{k_d^d}\right\}$ and $(\alpha-d)\wedge 1 = \min\{\alpha-d,1\}$. 
	
	The remaining term in our analysis is the one involving the interaction between $\I_+(\gamma)$ and $B(\gamma)$. Recall that $\Gamma(\tau )=\Gamma'$ is the set of external contours of $\Gamma$ after $\gamma$ is removed and define $\Gamma_1\subset \Gamma'$ as the set of external contours that are contained in $\I_+(\gamma)$ and $\Gamma_2 = \Gamma' \setminus \Gamma_1$. We have,
	\begin{align}\label{prop2:eq9}
		\sum_{\substack{x \in \I_+(\gamma) \\ y \in B(\gamma)}}J_{xy}\sigma_x\sigma_y= \sum_{\substack{x \in V(\Gamma_1) \\ y \in V(\Gamma_2)}}J_{xy} +\sum_{\substack{x \in \I_+(\gamma)\setminus V(\Gamma_1) \\ y \in V(\Gamma_2)}}2J_{xy}\mathbbm{1}_{\{\sigma_y=+1\}} + \sum_{\substack{x \in V(\Gamma_1) \\ y \in B(\gamma)\setminus V(\Gamma_2)}}2J_{xy}\mathbbm{1}_{\{\sigma_x=-1\}} \nonumber \\
		-\sum_{\substack{x \in \I_+(\gamma)\setminus V(\Gamma_1) \\ y \in V(\Gamma_2)}}J_{xy}-\sum_{\substack{x \in V(\Gamma_1) \\ y \in V(\Gamma_2)}}2J_{xy}\mathbbm{1}_{\{\sigma_x \neq \sigma_y\}}-\sum_{\substack{x \in \I_+(\gamma) \\ y \in B(\gamma)\setminus V(\Gamma_2)}}J_{xy}.
	\end{align}
	We start our analysis with the first two terms on r.h.s of \eqref{prop2:eq9}. Note that,
	\be\label{again}
	\sum_{\substack{x \in V(\Gamma_1) \\ y \in V(\Gamma_2)}}J_{xy} + \sum_{\substack{x \in \I_+(\gamma)\setminus V(\Gamma_1) \\ y \in V(\Gamma_2)}}2J_{xy}\mathbbm{1}_{\{\sigma_y=+1\}}\leq 2 \sum_{\substack{x \in \I_+(\gamma) \\ y \in V(\Gamma_2)}}J_{xy}.
	\ee
	Consider the two sets $\Upsilon_3=\{\gamma' \in \Gamma_2: \underset{1\leq k \leq n}{\max}\diam(\gamma_k) \leq \underset{1\leq j \leq n'}{\max}\diam(\gamma'_j)\}$ and $\Upsilon_4 = \Gamma_2 \setminus \Upsilon_3$. Now we proceed as in the previous case for $\Upsilon_1$,
	\be\label{eq11111}
	\sum_{\substack{x \in \I_+(\gamma)\\y \in V(\Upsilon_3)}}J_{xy} \leq \frac{J2^{d+\alpha-1} e^{d-1}}{(\alpha-d)M^{\alpha-d}} |\I_+(\gamma)|\underset{1\leq k \leq n}{\max}\diam(\gamma_k)^{a(d-\alpha)}.
	\ee
    Notice that by using the isoperimetric inequality together with Lemma \ref{diamlema}, we get that 
    \[
    |\I_+(\gamma)|\underset{1\leq k \leq n}{\max}\diam(\gamma_k)^{a(d-\alpha)} \leq \frac{|\din \gamma|^{\frac{d}{d-1}}}{k_d^{a(\alpha-d)}} \underset{1\leq k \leq n}{\max}|\gamma_k|^{\frac{a(d-\alpha)}{d}}\leq \frac{(2^r-1)}{k_d^{a(\alpha-d)}} |\din \gamma|^{\frac{1}{d-1}}\underset{1\leq k \leq n}{\max}|\gamma_k|^{\frac{a(d-\alpha)}{d}+1}.
    \]
    Plugging the above inequality into \eqref{eq11111} and using our choice of the constant $a$ we get
    \be\label{prop2:eq10}
	\sum_{\substack{x \in \I_+(\gamma)\\y \in V(\Upsilon_3)}}J_{xy} \leq \frac{k_\alpha^{(1)}}{M^{\alpha-d}}|\din\gamma|^{\frac{1}{d-1}}\leq \frac{k_\alpha^{(1)}}{M^{\alpha-d}}|\gamma|.
	\ee
    For the sum depending on the contours in $\Upsilon_4$, we will need to break, as before, into sets $\Upsilon_{4,m}$ whose contours have maximum diameter equals to $m$. An argument similar to the one employed in \eqref{prop2:eq8} holds, hence
	\be\label{prop2:eq11}
	\sum_{\substack{x \in \I_+(\gamma)\\y \in V(\Upsilon_4)}}J_{xy} \leq \frac{(2(2^r-1))^{d+1}3\zeta(a-d)}{k_d^d M}\sum_{\substack{x \in \I_+(\gamma) \\ y \in \I_-(\gamma)\cup V(\gamma)^c}}J_{xy}\leq \frac{k_\alpha^{(1)}}{M}F_{\I_+(\gamma)}.
	\ee
	For the next term, since we have $B(\gamma)\setminus V(\Gamma_2) \subset \I_+(\gamma)^c$, we get
	\[
	\sum_{\substack{x \in V(\Gamma_1) \\ y \in B(\gamma)\setminus V(\Gamma_2)}}J_{xy} \leq \sum_{\substack{x \in V(\Gamma_1) \\ y \in \I_+(\gamma)^c}}J_{xy}.
	\] 
	We claim that, for any $\gamma' \in \Gamma_1$, $\max_{1\leq j \leq n'}\diam(\gamma'_j) < \max_{1\leq k \leq n}\diam(\gamma_k)$ for $M>(2(2^r-1))^{d+1}/k_d^d$. Indeed, by condition \textbf{(A)}, $\Sp(\gamma')$ is contained in only one connected component of $\I(\gamma)$, let us call it $\I_+(\gamma)^{(1)}$. By similar reasonings as the one that gave us Inequality \eqref{eqvolume}, we have
	  \[
	  |\I_+(\gamma)^{(1)}|\leq \frac{(2(2^r-1))^{d+1}}{k_d^d}\max_{1\leq k \leq n}\diam(\gamma_k)^d.
	  \] 
	  Assume by contradiction $\max_{1\leq j \leq n'}\diam(\gamma'_j) \geq \max_{1\leq k \leq n}\diam(\gamma_k)$, then Condition \textbf{(B2)} implies that $\dis(\gamma',\gamma)\geq M \max_{1 \leq k \leq n}\diam(\gamma_k)^a$. Therefore, $|\I_+(\gamma)|$ must have at least $M \max_{1 \leq k \leq n}\diam(\gamma_k)^a$ points inside it, which is a contradiction with our choice of $M$.
	
	Thus, let us break $\Gamma_1$ into layers $\Gamma_{1,m}$ where $\max_{1\leq j \leq n'}\diam(\gamma'_j)=m$. For each $y \in \I_+(\gamma)^c$ and $\gamma' \in \Gamma_1$ there is $x_{\gamma',y} \in V(\gamma')$ that realizes the distance between $V(\gamma')$ and $y$. Hence,
	\be\label{eqfinal}
	\sum_{\substack{x \in V(\Gamma_1) \\ y \in B(\gamma)\setminus V(\Gamma_2)}}J_{xy}\leq \sum_{m = 1}^{N-1}\sum_{\substack{x \in V(\Gamma_{1,m}) \\ y \in \I_+(\gamma)^c}}J_{xy} \leq \sum_{m = 1}^{N-1}\frac{(2(2^r-1))^{d+1}}{k_d^d}m^d \sum_{\substack{\gamma' \in \Gamma_{1,m} \\ y \in \I_+(\gamma)^c}}J_{x_{\gamma',y}y} \leq \frac{k_\alpha^{(1)}}{M} F_{\I_+(\gamma)},
	\ee
	where $N\coloneqq \max_{1\leq k\leq n}\diam(\gamma_k)$. We turn our attention to the term containing $\mathbbm{1}_{\{\sigma_x \neq \sigma_y\}}$ in the r.h.s of Inequality \eqref{prop2:eq9}. The triangle inequality implies that the following inequality holds
	\be\label{prop2:eq12}
	J_{xy} \geq \frac{1}{(2d+1)2^\alpha}\sum_{|x-x'| \leq 1}J_{x'y},
	\ee
	for every distinct pair of points $x,y \in \Z^d$. Thus, we have that
	\be\label{prop2:eq13}
	\sum_{\substack{x \in V(\Gamma_1)\\ y \in V(\Gamma_2)}}J_{xy}\mathbbm{1}_{\{\sigma_x \neq \sigma_y\}}\geq \frac{1}{(2d+1)2^\alpha}\sum_{\substack{x \in V(\Gamma_1)_0 \\ y \in V(\Gamma_2)}}J_{xy},
	\ee
	where $V(\Gamma_1)_0 = \{x \in V(\Gamma_1): \Theta_x(\sigma)=0\}$. Plugging Inequalities \eqref{prop2:eq10}, \eqref{prop2:eq11}, \eqref{eqfinal}, \eqref{prop2:eq13} into Equation \eqref{prop2:eq9}, we get
	\begin{align}\label{prop2:eq14}
	\sum_{\substack{x \in \I_+(\gamma)\\ y \in B(\gamma)}}J_{xy}\sigma_x\sigma_y &\leq \frac{4k^{(1)}_\alpha}{M}F_{\I_+(\gamma)} +\frac{2(2d)^{\frac{1}{d-1}}k_\alpha^{(1)}}{M^{\alpha-d}}|\gamma|-\frac{1}{(2d+1)2^{\alpha-1}}\sum_{\substack{x \in V(\Gamma_1)_0 \\ y \in V(\Gamma_2)}}J_{xy} \nonumber \\
	&- \sum_{\substack{x \in \I_+(\gamma)\\ y \in B(\gamma)\setminus V(\Gamma_2)}}J_{xy}- \sum_{\substack{x \in \I_+(\gamma)\setminus V(\Gamma_1) \\ y \in  V(\Gamma_2)}}J_{xy}.
	\end{align}
	 We must add the regions with correct points into the sum depending on $V(\Gamma_1)_0$. But this is a simple task, since we have,
	\be
	\sum_{\substack{x \in V(\Gamma_1)\setminus V(\Gamma_1)_0\\ y \in V(\Gamma_2)}}J_{xy} \leq \sum_{\substack{x \in V(\Gamma_1)\\ y \in V(\Gamma_2)}}J_{xy},
	\ee
	and proceeding as we did in \eqref{again}, we arrive at the following inequality
	\be\label{prop2:eq15}
	\sum_{\substack{x \in \I_+(\gamma) \\ y \in B(\gamma)}}J_{xy}\sigma_x\sigma_y \leq \frac{5k^{(1)}_\alpha}{M}F_{\I_+(\gamma)} +\frac{3(2d)^{\frac{1}{d-1}}k_\alpha^{(1)}}{M^{\alpha-d}}|\gamma|-\frac{1}{(2d+1)2^{\alpha-1}}\sum_{\substack{x\in \I_+(\gamma) \\ y \in B(\gamma)}}J_{xy}.
	\ee
	Also, Inequality \eqref{prop2:eq12} implies that
	\begin{equation}\label{prop2:eq16}
        \begin{split}
	\sum_{\substack{x \in \Sp(\gamma)\\ y \in \Z^d}}J_{xy}\mathbbm{1}_{\{\sigma_x \neq \sigma_y\}}+&\sum_{\substack{x \in \Sp(\gamma)\\ y \in \Sp(\gamma)^c}}J_{xy}\mathbbm{1}_{\{\sigma_x \neq \sigma_y\}} \geq \frac{1}{(2d+1)2^\alpha}\left(Jc_\alpha|\gamma|+ F_{\Sp(\gamma)}\right) \\
        &\geq\frac{J c_\alpha}{(2d+1)2^\alpha}|\gamma|+ \frac{1}{(2d+1)2^{\alpha+1}}\bigg(F_{\Sp(\gamma)}+\sum_{\substack{x\in \Sp(\gamma) \\ y \in \I_+(\gamma)}}J_{xy}\bigg)
        \end{split}
	\end{equation}
	where $c_\alpha = \sum_{y \neq 0 \in \Z^d}\frac{1}{|y|^\alpha}$. Joining Inequalities \eqref{prop2:eq8},\eqref{prop2:eq15} and \eqref{prop2:eq16} into \eqref{prop2:eq2} yields
	\begin{align}
	H_\Lambda^-(\sigma)-H_\Lambda^-(\tau) \geq & \left( \frac{Jc_\alpha}{(2d+1)2^\alpha}- \frac{6(2d)^{\frac{1}{d-1}}k_\alpha^{(1)}}{M^{\alpha-d}}\right)|\gamma| + \left(\frac{1}{(2d+1)2^{\alpha+1}} -\frac{10k^{(1)}_\alpha}{M}\right)F_{\I_+(\gamma)} \nonumber \\
	&+\left(\frac{1}{(2d+1)2^{\alpha+1}}-\frac{2k_\alpha^{(1)}}{M^{(\alpha-d)\wedge1}}\right)F_{\Sp(\gamma)}.
	\end{align}
	Letting $M> \max\{\frac{(2(2^r-1))^{d+1}}{k_d^d}, M_1, M_2\}$, where $Jc_\alpha M_1^{\alpha-d}\coloneqq 36d(2d)^{\frac{1}{d-1}}k_\alpha^{(1)}2^{\alpha+2}$, ${M_2^{(\alpha-d)\wedge 1} \coloneqq 30dk_\alpha^{(1)}2^{\alpha+2}}$ we arrive at the desired result.
\end{proof}

The following Lemma is necessary to study the competition between the magnetic field and the long-range interaction.

\begin{lemma}\label{lema1}
	There exists $c_5\coloneqq c_5(d,\delta,h^*)>0$ such that for any $\Lambda \Subset \Z^d$ 
	\begin{equation}\label{eq4}
		\sum_{x \in \Lambda} h_x \leq c_5 |\Lambda|^{1-\frac{\delta}{d}},
	\end{equation}
 where $(h_x)_{x \in\Z^d}$ is the magnetic field as in (\ref{magfield}) and $\delta<d$. 
\end{lemma}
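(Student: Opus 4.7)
The plan is to use a rearrangement-type argument: since the magnetic field $h_x = h^*|x|^{-\delta}$ (with $h_0 = h^*$) is a radially decreasing function of $x \in \Z^d$, the supremum of $\sum_{x \in \Lambda} h_x$ over all $\Lambda \Subset \Z^d$ with prescribed cardinality $N$ is attained by placing $\Lambda$ around the origin. So the first step is to fix $N = |\Lambda|$ and choose the smallest radius $R = R(N)$ such that $N \le |B_R(0)|$, whence by monotonicity
\begin{equation*}
\sum_{x \in \Lambda} h_x \;\le\; \sum_{x \in B_R(0)} h_x.
\end{equation*}

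Next, I evaluate the right-hand side by slicing in $\ell^1$-spheres and using the bound $s_d(n) \le 2^{2d-1} e^{d-1} n^{d-1}$ coming from Lemma \ref{comblema} (Inequality \eqref{eq:desi_sd}):
\begin{equation*}
\sum_{x \in B_R(0)} h_x \;=\; h^* + h^*\sum_{n=1}^{R} \frac{s_d(n)}{n^\delta} \;\le\; h^* + h^* 2^{2d-1} e^{d-1}\sum_{n=1}^{R} n^{d-1-\delta}.
\end{equation*}
Since $\delta < d$ by hypothesis, the exponent $d-1-\delta > -1$, and a standard integral comparison gives $\sum_{n=1}^R n^{d-1-\delta} \le C(d,\delta)\, R^{d-\delta}$ for some constant depending only on $d$ and $\delta$.

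Finally, I relate $R$ to $N$. From Lemma \ref{comblema} (or equivalently the bound $|B_R(0)| \ge c_d' R^d$ for some constant $c_d' > 0$), it suffices to take $R = \lceil (N/c_d')^{1/d} \rceil$, so that $R^{d-\delta} \le C'(d,\delta)\, N^{1-\delta/d}$. Combining the two estimates yields
\begin{equation*}
\sum_{x \in \Lambda} h_x \;\le\; h^* + h^* 2^{2d-1} e^{d-1} C(d,\delta) C'(d,\delta)\, N^{1 - \delta/d},
\end{equation*}
and absorbing the additive $h^*$ into a slightly larger constant (since $N \ge 1$ so $N^{1-\delta/d}\ge 1$) produces the desired $c_5 = c_5(d,\delta,h^*)$.

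There is no real obstacle here; the only point to be careful about is the rearrangement step, which is just the bathtub principle on $\Z^d$ for the nonincreasing radial weight $h_x$, and the verification that the boundary case $x=0$ (where the formula $h^*|x|^{-\delta}$ is replaced by $h^*$) does not cause trouble, which is immediate since $h_0 = h^*$ contributes only an additive constant.
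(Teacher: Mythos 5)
Your proof is correct and follows essentially the same strategy as the paper: reduce to a ball $B_R(0)$ by monotonicity of $h_x$ (the paper makes the rearrangement explicit via the inequality $\sum_{x\in B_R(0)}h_x -\sum_{x \in \Lambda} h_x \geq \frac{h^*}{R^\delta}(|B_R(0)|-|\Lambda|)$, you invoke the bathtub principle), take $R \approx |\Lambda|^{1/d}$ via the volume bound from Lemma~\ref{comblema}, and then sum over $\ell^1$-spheres using the upper bound in \eqref{eq:desi_sd}. The two write-ups differ only in presentation, not in substance.
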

\begin{proof}
	Fix $\Lambda \Subset \Z^d$. In order to prove the inequality (\ref{eq4}), we will show that the sum in the l.h.s is always upper bounded by the sum of the magnetic field $h_x$ in some ball $B_R(0)$ with $R$ large enough. In fact, the magnetic field satisfies $h_x\geq h^*/R^\delta$ for $x \in B_R(0)$, and $h_x< h^*/R^\delta$ for $x \in \Lambda\setminus B_R(0)$. Then, we have
	\be\label{eq:diff_field}
	\sum_{x\in B_R(0)}h_x -\sum_{x \in \Lambda} h_x \geq  \frac{h^*}{R^\delta}\left( |B_R(0)|-|\Lambda|\right).
	\ee
	By using Lemma \ref{comblema}, we find that the volume of ball satisfies $|B_R(0)| \geq d^{-1}c_d R^d$. Thus, if we choose $R \coloneqq \lceil (dc_d^{-1}|\Lambda| )^{\frac{1}{d}}\rceil$, the r.h.s of Inequality \eqref{eq:diff_field} is nonnegative. We can bound the sum of the magnetic field in a ball $B_R(0)$ in the following way,
	\[
	\ssum{x \in B_R(0)} h_x \leq h^*+ h^*2^{2d-1}e^{d-1}\sum_{n=1}^{R} n^{d-1-\delta}.
	\]
	 The result is obtained taking $c_5 = h^*2^{2d} e^{d-1}(d-\delta)^{-1}((dc_d^{-1})^{1/d}+2)^{d-\delta}$.
\end{proof}

 As in the usual Peierls argument, Theorem \ref{thm1} will follow once we proof the following proposition.
\begin{proposition}\label{prop3}
	Let $\alpha>d$ and $\delta>0$. For $\beta$ large enough, it holds that 
	\be
	\nu^-_{\beta,\mathbf{h},\Lambda}(\sigma_0= + 1) < \frac{1}{2},
	\ee
	for every $\Lambda \Subset \Z^d$ when
	\begin{itemize}
		\item $d<\alpha<d+1$ and $\delta>\alpha -d$;
		\item $d<\alpha<d+1$ and $\delta=\alpha-d$ if $h^*$ is small enough;
		\item $\alpha \geq d+1$ and $\delta>1$;
		\item $\alpha \geq d+1$ and $\delta=1$ if $h^*$ is small enough.
	\end{itemize}
\end{proposition}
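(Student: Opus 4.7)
The plan is a Peierls-type argument tailored to the contours introduced in Section 3. Under the measure $\nu^-_{\beta,\textbf{h},\Lambda}$ the event $\{\sigma_0=+1\}$ forces the existence of at least one external contour $\gamma\in\mathcal{E}^-_\Lambda$ surrounding the origin, in the sense that $0\in V(\gamma)$. A union bound therefore yields
\[
\nu^-_{\beta,\textbf{h},\Lambda}(\sigma_0=+1) \leq \sum_{\substack{\gamma\in\mathcal{E}^-_\Lambda\\ 0\in V(\gamma)}}\frac{1}{Z^-_{\beta,\textbf{h},\Lambda}}\sum_{\sigma\in\Omega(\gamma)}e^{-\beta H^-_{\Lambda,\textbf{h}}(\sigma)},
\]
and for each such $\gamma$ the contour-removal map $\tau_\gamma$ (introduced before Proposition \ref{prop2}), being injective on $\Omega(\gamma)$, bounds the inner ratio by $\exp\bigl(-\beta\inf_{\sigma}\Delta_\gamma H(\sigma)\bigr)$, where $\Delta_\gamma H(\sigma)= H^-_{\Lambda,\textbf{h}}(\sigma)-H^-_{\Lambda,\textbf{h}}(\tau_\gamma\sigma)$.

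The interaction part of $\Delta_\gamma H$ is controlled from below by Proposition \ref{prop2}, namely $c_2|\gamma|+c_3 F_{\I_+(\gamma)}+c_4 F_{\Sp(\gamma)}$. The field part equals $-\sum_x h_x(\sigma_x-\tau_\gamma\sigma_x)$; since $\tau_\gamma$ acts as the identity outside $\Sp(\gamma)\cup\I_+(\gamma)$, it is bounded in absolute value by $2\sum_{x\in\Sp(\gamma)\cup\I_+(\gamma)} h_x$. Combining Lemma \ref{lema1} with the subadditivity $(a+b)^s\leq a^s+b^s$ valid for $s\in(0,1)$ gives
\[
\sum_{x\in\Sp(\gamma)\cup\I_+(\gamma)} h_x \leq c_5\bigl(|\Sp(\gamma)|^{1-\delta/d}+|\I_+(\gamma)|^{1-\delta/d}\bigr).
\]

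The key step is to absorb this bound into the positive Peierls terms, and splits into two regimes. When $d<\alpha<d+1$ with $\delta\geq\alpha-d$, the exponent inequality $1-\delta/d\leq(2d-\alpha)/d$ together with Lemma \ref{lema3} yields $|\Sp(\gamma)|^{1-\delta/d}\leq K_\alpha^{-1}F_{\Sp(\gamma)}$ and similarly for $\I_+(\gamma)$, so the field is dominated by $(c_5/K_\alpha)(F_{\Sp(\gamma)}+F_{\I_+(\gamma)})$; on the critical line $\delta=\alpha-d$ this forces $h^*$ to be small so that $2c_5/K_\alpha<\min(c_3,c_4)$, while for $\delta>\alpha-d$ the extra decay factor $|\cdot|^{-\epsilon}$ with $\epsilon=(\delta-\alpha+d)/d>0$ allows a Young-type splitting $A^{1-\delta/d}\leq \eta A^{(2d-\alpha)/d}+C_\eta$ with $\eta$ arbitrarily small, trading a small-$\eta$ $F$-coefficient for a $\beta$-independent additive constant which is compensated by $c_4 F_{\Sp(\gamma)}\geq c_4 K_\alpha|\gamma|^{(2d-\alpha)/d}$ for $|\gamma|$ large and by the term $c_2|\gamma|$ together with a large-$\beta$ choice for small $|\gamma|$. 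When $\alpha\geq d+1$ with $\delta\geq 1$ the corresponding mechanism uses the isoperimetric bound $|\I_+(\gamma)|^{1-1/d}\leq C|\din\I_+(\gamma)|\leq C|\gamma|$ to absorb the field into the $c_2|\gamma|$ term, with $h^*$ small required on the critical line $\delta=1$ and a Young-type splitting handling $\delta>1$.

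Finally, combining this with the entropy bound $|\mathcal{C}_0(m)|\leq e^{c_1 m}$ of Proposition \ref{importprop} yields an estimate
\[
\nu^-_{\beta,\textbf{h},\Lambda}(\sigma_0=+1)\leq \sum_{m\geq 1} e^{c_1 m}\,e^{-\beta c'm+\beta D}
\]
for some $c'>0$ and a $\beta$-independent constant $D$, which is strictly less than $1/2$ for all $\beta$ above some threshold $\beta_c=\beta_c(\alpha,d,\delta,h^*)$. The main obstacle I anticipate is the delicate balancing in the strict-inequality cases: the Young-type splitting must simultaneously keep the absorbed $F$-coefficient below $\min(c_3,c_4)$, leave an additive constant small enough to not overwhelm the geometric decay coming from $c'$, and control the estimate uniformly for small $|\gamma|$, where the surface and linear terms are only of order one.
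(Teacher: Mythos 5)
The proposal misses the paper's key technical device and has a genuine gap that the student partly acknowledges but does not resolve. The paper does not bound the field contribution by a Young-type splitting; instead it replaces the field $h_x$ by the \emph{truncated} field $\widehat{h}_x$, which is zero on the ball $B_R(0)$ and equal to $h_x$ outside, so that $\widehat{h}_x\leq h^*/R^\delta$ uniformly. Phase transition for $\widehat{h}$ implies phase transition for $h$ by Georgii's Theorem 7.33 (a finite-volume perturbation). With the truncation, $\sum_{x\in\Sp(\gamma)}\widehat{h}_x\leq h^*|\gamma|/R^\delta$ is killed against $c_2|\gamma|$ by taking $R$ large, and the interior competition $c_3 F_{\I_+(\gamma)}-2\sum_{x\in\I_+(\gamma)}\widehat{h}_x\geq 0$ is handled for \emph{all} sizes of $\I_+(\gamma)$: large $\I_+(\gamma)$ by comparing exponents via Lemmas \ref{lema3} and \ref{lema1}, small $\I_+(\gamma)$ by taking $R$ large so that $\widehat{h}$ is uniformly tiny.

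Your scheme attempts to substitute the truncation by the inequality $A^{1-\delta/d}\leq \eta A^{(2d-\alpha)/d}+C_\eta$, producing a Peierls lower bound of the form $\Delta_\gamma H\geq c_2|\gamma|-D$ with a $\beta$-independent constant $D$. This is where the argument breaks. Feeding it into the entropy bound yields $\nu^-(\sigma_0=+1)\leq e^{\beta D}\sum_{m\geq 1}e^{(c_1-\beta c')m}$, whose $m=1$ term is $e^{c_1+\beta(D-c')}$; this tends to $0$ as $\beta\to\infty$ \emph{only if} $D<c'$. But $D$ depends on $h^*$ (via $c_5\propto h^*$, and via the smallness forced on $\eta$ so the absorbed $F$-coefficients stay positive, hence $C_\eta$ grows), so for $h^*$ large one has $D>c'$ and the estimate blows up. This is not a mere technical nuisance: without truncation, for a small contour at the origin the field gain is of order $h^*$ while $c_2|\gamma|$ is of order one, so $\Delta_\gamma H$ can genuinely be negative and the large-$\beta$ choice you invoke makes the Peierls weight worse, not better. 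The truncation (or an equivalent finite-perturbation argument) is the missing idea that removes this obstruction; your own note on the ``main obstacle'' points exactly at it but does not close it.

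A smaller inaccuracy: $\tau_\gamma$ is \emph{not} injective on $\Omega(\gamma)$, since it erases the spin values on $\Sp(\gamma)$; the paper accounts for this with an extra $2^{|\gamma|}$ multiplicity factor (absorbed into $e^{(\log 2-\beta c_2/2)|\gamma|}$), and also treats carefully the fact that erasing $\gamma$ can turn contours in $\I_+(\gamma)$ into new external contours, leading to the comparison $Z^-(\Lambda\setminus\Sp(\gamma))/Z^-(\Lambda)\leq 1$. Your bound ``$\exp(-\beta\inf_\sigma\Delta_\gamma H)$'' glosses over both points.
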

\begin{proof}
	Let $R>0$ and $(\widehat{h}_x)_{x\in\Z^d}$ be the truncated magnetic field 
	\begin{equation}\label{magfield2}
		\widehat{h}_x=\begin{cases}
			0 &\text{ if } |x|<R; \\
			h_x &\text{ if } |x| \geq R.
		\end{cases}
	\end{equation}
	The constant $R$ will be chosen later. Existence of phase transition under the presence of the truncated field implies phase transition for the model with the decaying field (see Theorem 7.33 of \cite{Geo} for a more general statement). If $\sigma_0=+1$ there must exist a contour $\gamma$ such that $0 \in V(\gamma)$. Hence
	\[\label{prop3:eq1}
	\nu^-_{\beta,\mathbf{\widehat{h}},\Lambda}(\sigma_0= + 1) \leq \sum_{\substack{\gamma \in \mathcal{E}_\Lambda^- \\ 0 \in V(\gamma)}} \nu^-_{\beta,\mathbf{\widehat{h}},\Lambda}(\Omega(\gamma)). 
	\]
	Using Proposition \ref{prop2}, we know that the Hamiltonian $H^-_{\Lambda,\mathbf{\widehat{h}}}$ satisfies,
	\be\label{prop3:eq2}
	H^-_{\Lambda,\mathbf{\widehat{h}}}(\sigma)-H^-_{\Lambda,\mathbf{\widehat{h}}}(\tau(\sigma)) \geq c_2 |\gamma| + c_3 F_{\I_+(\gamma)} - 2\sum_{x \in \I_+(\gamma)\cup \Sp(\gamma)}\hat{h}_x,
	\ee
	 where $\tau_\gamma(\sigma)= \tau(\sigma)$. Notice that 
	\begin{equation}\label{field_contribution}
	\sum_{x \in \Sp(\gamma)}\hat{h}_x \leq \frac{h^* |\gamma|}{R^\delta}.
	\end{equation}
	The choice $R^\delta > \frac{4h^*}{c_2}$ is sufficient to guarantee that the term $c_2 |\gamma|$ is larger than the field contribution term \eqref{field_contribution} in the right-hand side of Inequality \eqref{prop3:eq2}. The next step is to analyse when the term $c_3 F_{\I_+(\gamma)} - 2\sum_{x \in \I_+(\gamma)}\hat{h}_x$ is nonnegative. If $\I_+(\gamma) = \emptyset$, there is nothing to do, since the bound is trivial. Otherwise, we must analyse the competition of the decaying field with the different regimes of decay for the couplings constants $J_{xy}$.
	\begin{enumerate}[label=\textbf{(\roman*)}]
		\item \emph{Case} $d< \alpha < d+1$.
		By Lemmas \ref{lema3} and \ref{lema1}, we have 
		\be\label{prop3:eq3}
		c_3 F_{\I_+(\gamma)}- 2\sum_{x \in \I_+(\gamma)}\hat{h}_x \geq c_3K_\alpha |\I_+(\gamma)|^{2 - \frac{\alpha}{d}}- 2c_5|\I_+(\gamma)|^{1-\frac{\delta}{d}}.
		\ee
		Thus, if $\delta > \alpha - d$ and $|\I_+(\gamma)| \geq c'_\alpha \coloneqq \left(\frac{2c_5}{c_3 K_\alpha}\right)^{\frac{d}{\delta -(\alpha-d)}}$, we have that the r.h.s of Inequality \eqref{prop3:eq3} is nonnegative. In order to get a positive difference for all sizes of $\I_+(\gamma)$, we need to consider $R^\delta>R_1^\delta \coloneqq\frac{2h^* c'_\alpha}{c_3 K_\alpha}$. For the case $\delta=\alpha-d$, we must take $h^*$ small enough since the exponents in \eqref{prop3:eq3} will be equal.
		\item \emph{Case} $\alpha \geq d+1$.
		By Lemmas \ref{lema3} and \ref{lema1}, we have 
		\be\label{prop3:eq4}
		c_3 F_{\I_+(\gamma)}- 2\sum_{x \in \I_+(\gamma)}\hat{h}_x \geq c_3K_\alpha |\partial\I_+(\gamma)|- 2c_5|\I_+(\gamma)|^{1-\frac{\delta}{d}}.
		\ee
		Thus, if $\delta > 1$ and $|\I_+(\gamma)| \geq b_\alpha \coloneqq \left(\frac{c_5}{dc_3 K_\alpha}\right)^{\frac{d}{\delta-1}}$, we have that the r.h.s of Inequality \eqref{prop3:eq4} is nonnegative. In order to get a positive difference for all sizes of $\I_+(\gamma)$, we need to consider $R^\delta>R_2^\delta \coloneqq \frac{h^* b_\alpha}{dc_3 K_\alpha}$. The case where $\delta=1$, we must take $h^*$ small enough and use the isoperimetric inequality in Inequality \eqref{prop3:eq4}.
	\end{enumerate}
	It is clear that, by taking $R = \max\{\left(\frac{4h^*}{c_2}\right)^\frac{1}{\delta}, R_1, R_2\}$ together with \eqref{prop3:eq2} we get
	\[
	H^-_{\Lambda,\mathbf{\widehat{h}}}(\sigma)-H^-_{\Lambda,\mathbf{\widehat{h}}}(\tau(\sigma)) \geq \frac{c_2}{2} |\gamma|,
	\]
	which implies
	\be\label{eq112}
	\nu_{\beta,\mathbf{\widehat{h}},\Lambda}^-(\Omega(\gamma)) \leq \frac{e^{-\beta \frac{c_2}{2} |\gamma|}}{Z_{\beta, \mathbf{\widehat{h}}}^-(\Lambda)}\sum_{\sigma \in \Omega(\gamma)}e^{-\beta H^-_{\Lambda,\mathbf{\widehat{h}}}(\tau(\sigma))}.
	\ee
	Using the decomposition
	\[
	\Omega(\gamma) = \bigcup_{\Gamma: \Gamma \cup \{\gamma\} \in \mathcal{E}_\Lambda^-} \{ \sigma \in \Omega_\Lambda^-: \Gamma(\sigma)= \Gamma \cup \{\gamma\}\},
	\]
	together with the fact that, when we erase the contour $\gamma$, we may create new external contours but it always holds that $V(\Gamma(\tau(\sigma))) \subset \Lambda \setminus \Sp(\gamma)$. Hence, the r.h.s of Inequality \eqref{eq112} can be bounded as follows
	\begin{align*}
	\sum_{\sigma \in \Omega(\gamma)}e^{-\beta H^-_{\Lambda,\mathbf{\widehat{h}}}(\tau(\sigma))} &\leq \sum_{\substack{\Gamma \in \mathcal{E}^-_\Lambda \\ V(\Gamma)\subset \Lambda \setminus V(\gamma)}} \sum_{\substack{\Gamma' \in \mathcal{E}^-_\Lambda \\ V(\Gamma')\subset \I(\gamma)}}\sum_{\substack{\tau(\sigma) \\ \Gamma(\tau(\sigma))=\Gamma \cup \Gamma'}} \sum_{\omega: \tau(\omega)=\tau(\sigma)} e^{-\beta H^-_{\Lambda,\mathbf{\widehat{h}}}(\tau(\sigma))}\\
	&\leq |\{\sigma \in \Omega_{\Sp(\gamma)}: \Theta_x(\sigma)=0, \text{ for each } x \in \Sp(\gamma)\}| \sum_{\substack{\Gamma \in \mathcal{E}_{\Lambda}^- \\ V(\Gamma)\subset \Lambda \setminus \Sp(\gamma)}} \sum_{\sigma \in \Omega(\Gamma)}e^{-\beta H^-_{\Lambda,\mathbf{\widehat{h}}}(\sigma)}.
	\end{align*}
	Since the number of configurations that are incorrect in $\Sp(\gamma)$ are bounded by $2^{|\gamma|}$, we get
	\be\label{prop3:eq5}
	\nu_{\beta,\mathbf{\widehat{h}},\Lambda}^-(\Omega(\gamma)) \leq \frac{Z^-_{\beta, \mathbf{\widehat{h}}}(\Lambda\setminus\Sp(\gamma))e^{(\log(2)-\beta \frac{c_2}{2})|\gamma|}}{Z^-_{\beta, \mathbf{\widehat{h}}}(\Lambda)}.
	\ee
	Summing over all contours yields, together with Proposition \ref{importprop},
	\begin{align}
	\nu_{\beta,\mathbf{\widehat{h}},\Lambda}^-(\sigma_0 = +1) &\leq \sum_{\substack{\gamma \in \mathcal{E}_\Lambda^- \\ 0 \in V(\gamma)}}\frac{Z^-_{\beta, \mathbf{\widehat{h}}}(\Lambda\setminus\Sp(\gamma))e^{(\log(2)-\beta \frac{c_2}{2})|\gamma|}}{Z^-_{\beta, \mathbf{\widehat{h}}}(\Lambda)} \nonumber \\
	&\leq \sum_{m \geq 1}|\mathcal{C}_0(m)|e^{(\log(2)-\beta \frac{c_2}{2})m} \nonumber\\
	&\leq \sum_{m \geq 1}e^{(c_1 +\log(2) - \beta \frac{c_2}{2})m}< \frac{1}{2},
	\end{align}
	for $\beta$ large enough. 
\end{proof}

\section{Concluding Remarks}

In this paper, we developed a contour argument for phase transition in long-range Ising models when $d\geq 2$. As an application, we showed that the ferromagnetic Ising model with a decaying field presents a phase transition. Since for the borderline case $\delta=\alpha-d$ we need to consider $h^*$ small, this is an indication that phase transition should not hold further into the region of the exponents. Some piece of evidence of this phenomenon is given by the nearest-neighbor case studied in \cite{Bis2}, where uniqueness happened whenever $\delta<1$. 

Another natural question is to investigate if we can extend the argument to more general interactions in a sharp region of the exponents. One of such models is the ferromagnet nearest-neighbor Ising model with a competing long-range antiferromagnet interaction, as considered in \cite{Bisk}. As stated in their paper, a zero magnetization does not imply the absence of phase transition, and maybe some of the techniques developed here could be helpful to investigate the problem.

Moreover, our contours make sense for more general state spaces, so one could try to study other systems such as the Potts model, extending the results of \cite{Park1} and \cite{Park2}. We did that and also studied the cluster expansion and decay of correlations as was done in \cite{Imbr1, Imbr2} for the one-dimensional case in a separate paper which will appear soon.

\section*{Acknowledgements}

LA, RB, and EE thank Bruno Kimura for providing us with his thesis \cite{Kim} and many helpful discussions concerning his work in 1d long-range Ising models; to Aernout van Enter for many references and help with the literature. We thank Christian T\'{a}fula for suggesting a simpler proof for Proposition $3.12$. A sincere thank you to Jo\~{a}o Maia for his diligent proofreading of this paper. In addition, we thank the referee for the careful reading, which improved the paper. LA is supported by FAPESP Grant 2017/18152-2 and 2020/14563-0. RB is supported by CNPq Grant 312294/2018-2 and FAPESP Grant 16/25053-8; he thanks to Marcelo Disconzi for his support and his friendship over the years. SH is supported by the Ministry of Education, Culture, Sports, Science and Technology through the Program for Leading Graduate Schools (Hokkaido University Ambitious Leader's Program).

%\bibliographystyle{abbrv}   
%\bibliography{bib}

\begin{thebibliography}{99}

\bibitem{Affonso} L. Affonso. Multidimensional Contours \`a la Fr\"ohlich-Spencer and Boundary Conditions for
Quantum Spin Systems. University of S\~ao Paulo, \href{https://arxiv.org/abs/2310.07946}{arXiv:2310.07946}, (2023). 

\bibitem{Johannes} L. Affonso, R. Bissacot, J. Maia. \textit{Phase Transitions in Multidimensional Long-Range Random Field Ising Models}, \href{https://arxiv.org/abs/2307.14150}{arXiv:2307.14150}, (2024).

\bibitem{AW} M. Aizenman and J. Wehr. \textit{Rounding effects of quenched randomness on first-order phase transitions.}, Commun. Math. Phys., \textbf{130}, 489-528, (1990).

\bibitem{BBL} A. Bakchich, A. Benyoussef and L. Laanait. \textit{Phase diagram of the Potts model in an external magnetic field.}, Annales de l'I.H.P. Physique th\'{e}orique, \textbf{50}, 17-35, (1989).

\bibitem{BBCK} M. Biskup, C. Borgs, J.T. Chayes and R. Koteck\'{y}. \textit{Gibbs states of graphical representations of the Potts model with external fields}, J. Math. Phys.,  \textbf{41}, 1170-1210, (2000).

\bibitem{Bisk} M. Biskup, L. Chayes and S.A. Kivelson. \textit{On the Absence of Ferromagnetism in Typical 2D Ferromagnets}, Commun. Math. Phys., \textbf{274}, 217-231, (2007).

\bibitem{Bis1}R. Bissacot and L. Cioletti. \textit{Phase Transition in Ferromagnetic Ising Models with Non-uniform External Magnetic Fields}, J. Stat. Phys., \textbf{139}, 769-778 (2010).

\bibitem{Bis2}R. Bissacot, M. Cassandro, L. Cioletti and E. Presutti. \textit{Phase Transitions in Ferromagnetic Ising Models with Spatially Dependent Magnetic Fields}, Commun. Math. Phys., \textbf{337}, 41-53, (2015).

%\bibitem{Bis3} R.Bissacot, L. Cioletti, \textit{Introdu\c{c}\~{a}o \`{a} Teoria das Medidas de Gibbs}, in Portuguese, avaiable at \texttt{https://sites.google.com/site/matbissacot/Home/lectures-notes}.

\bibitem{BEE} R. Bissacot, E. O. Endo, and A. C. D. van Enter. \textit{Stability of the phase transition of critical-
	field Ising model on Cayley trees under inhomogeneous external fields}, Stochastic Processes and their Applications \textbf{127}, 4126-4138, (2017).

\bibitem{Eric2} R. Bissacot, E.O. Endo, A.C.D. van Enter, B. Kimura and W.M. Ruszel. \textit{Contour Methods for Long-Range Ising Models: Weakening Nearest-Neighbor Interactions and Adding Decaying Fields}, Ann. Henri Poincar\'{e}, \textbf{19}, 2557-2574 (2018).

%\bibitem{Bodineau} T. Bodineau, \textit{The Wulff construction in three and more dimensions}, Commun.Math. Phys., Vol 207, Issue 1, 197--229 (1999).

%\bibitem{Bod2}T. Bodineau, D. Ioffe, Y. Velenik, \textit{Rigorous Probabilistic Analysis of equilibrium crystal shapes}, J. Math. Phys, 41, 1033, (2000).

\bibitem{Bov} A. Bovier. \textit{Statistical Mechanics of Disordered Systems: A Mathematical Perspective}, Cambridge Series in Statistical and Probabilistic Mathematics, Cambridge University Press, (2006).

\bibitem{BK} J. Bricmont and A. Kupiainen. \textit{Phase transition in the $3$d random field Ising model}. Commun. Math. Phys. \textbf{116}, 539-572, (1988).

\bibitem{Cass} M. Cassandro, P.A. Ferrari, I. Merola and E. Presutti. \textit{Geometry of contours and Peierls estimates in $d = 1$ Ising models with long-range interaction},  J. Math. Phys. \textbf{46}, 053305, (2005).

\bibitem{Cass1} M. Cassandro, E. Orlandi and P. Picco. \textit{Phase Transition in the 1d Random Field Ising Model with long-range Interaction},  Commun. Math. Phys., \textbf{288},731-744, (2009).

\bibitem{Cass2} M. Cassandro, I. Merola, P. Picco and U. Rozikov. \textit{One-Dimensional Ising Models with long-range Interactions: Cluster Expansion, Phase-Separating Point},  Commun. Math. Phys.  \textbf{327}, 951-991, (2014).

\bibitem{Cass3} M. Cassandro, I. Merola and P. Picco. \textit{Phase Separation for the long-range One-dimensional Ising Model},  J. Stat. Phys.  \textbf{167}, 351-382 (2017).


%\bibitem{CP} R. Cerf, A. Pisztora, \textit{On the Wulff crystal in the Ising model},  Ann. Probab. Vol 28, 947--1017 (2000)

%\bibitem{CP2} R. Cerf, A. Pisztora, \textit{Phase coexistence in Ising, Potts and percolation models.},  Ann. Inst. H. Poincar\'{e} Vol 37, Issue 6, 643--724 (2001)


\bibitem{CMR} L. Chayes, J. Machta, and O. Redner. \textit{Graphical Representations for Ising Systems in External Fields}. J. Stat. Phys., \textbf{93}, 17-32, (1998).

\bibitem{CV} L. Cioletti and R. Vila.  \textit{Graphical Representations for Ising and Potts Models in General External Fields}. J. Stat. Phys., \textbf{162}, 81-122, (2016).

\bibitem{Dat} N. Datta, R. Fern\'{a}ndez and J. Fr\"{o}hlich. \textit{Low-temperature phase diagrams of quantum lattice systems. I. Stability for quantum perturbations of classical systems with finitely-many ground states}, J. Stat. Phys., \textbf{84}, 455-534 (1996).

%\bibitem{DKS} R.L. Dobrushin, R. Koteck\'{y}, S. Shlosman, \textit{Wulff Construction: A Global Shape From Local Interaction}. AMS Trans. Series (1992)

\bibitem{DobCont} R.L. Dobrushin and M. Zahradn\'{i}k. \textit{Phase Diagrams for Continuous-Spin Models: An Extension of the Pirogov - Sinai Theory}, In: Dobrushin R.L. (eds) Mathematical Problems of Statistical Mechanics and Dynamics. Mathematics and Its Applications (Soviet Series),  \textbf{6},  Springer, Dordrecht, (1986). 

\bibitem{Dyson}F.J. Dyson. \textit{Existence of a phase-transition in a one-dimensional Ising ferromagnet}, Commun. Math. Phys. \textbf{12}, 91-107 (1969).

%\bibitem{Elis} R.S. Ellis, \textit{Entropy, Large Deviations, and Statistical Mechanics}, Classics in Mathematics, Springer Verlag, (1985).

\bibitem{vanEnter81} A. C. D. van Enter. \textit{A note on the Stability of Phase Diagrams in Lattice Systems}. Commun. Math. Phys., \textbf{79}, 25-32, (1981).

\bibitem{Vel} S. Friedli and Y. Velenik. \textit{Statistical Mechanics of Lattice Systems: a Concrete Mathematical Introduction}, Cambridge University Press, (2017).

\bibitem{Fro3} J. Fr\"{o}hlich and T. Spencer. \textit{The Kosterlitz-Thouless Transition in Two-Dimensional Abelian Spin Systems and the Coulomb Gas
}, Commun. Math. Phys., \textbf{81}, 527-602, (1981).

\bibitem{Fro1} J. Fr\"{o}hlich and T. Spencer. \textit{The Phase Transition in the one-dimensional Ising model with $1/r^2$ interaction energy}, Commun. Math. Phys.,\textbf{84}, 87-101, (1982).

\bibitem{Fro2} J. Fr\"{o}hlich and T. Spencer. \textit{Absence of Diffusion in the Anderson Tight Binding model for Large Disorder or Low Energy}, Commun. Math. Phys. \textbf{88}, 151-184, (1983).


\bibitem{Geo} H.O. Georgii. \textit{Gibbs measures and Phase Transitions}, De Gruyter Studies in Mathematics 9, Second Edition, (2009).

\bibitem{GGR} J. Ginibre, A. Grossmann and D. Ruelle. \textit{Condensation of Lattice Gases}, Commun. Math. Phys. \textbf{3} 187-193, (1966).

\bibitem{GPY} M. Gonz\'{a}lez-Navarrete, E. Pechersky and A. Yambartsev. \textit{Phase Transition in Ferromagnetic Ising Model with a Cell-Board External Field}. J. Stat. Phys. \textbf{162}, 139-161 (2016).

\bibitem{Imbr1} J.Z. Imbrie. \textit{Decay of Correlations in the One-Dimensional Ising Model with $J_{ij}=|i-j|^{-2}$}, Commun. Math. Phys. \textbf{85}, 491-515, (1982).

\bibitem{Imbr2} J.Z. Imbrie and C.M. Newman. \textit{An intermediate Phase with Slow Decay of Correlations in One-Dimensional $1/|x-y|^2$ Percolation, Ising and Potts models}, Commun. Math. Phys. \textbf{118}, 303-336 (1988).

%\bibitem{Ioffe1} D. Ioffe, \textit{Large deviations for the 2D Ising model: a lower bound without cluster expansions}. J. Stat. Phys. Vol 74, 411--432 (1994)

%\bibitem{Ioffe2} D. Ioffe, \textit{Exact large deviation bounds up $T_c$ for the Ising model in two dimensions}. Probab. Theory Relat. Fields Vol 102, 313--330 (1995)

%\bibitem{IS} D. Ioffe, R. H. Schonmann, \textit{Dobrushin-Kotecky-Shlosman theorem up to the critical temperature}. Commun. Math. Phys. Vol 199, 117--167 (1998)

\bibitem{Jo} K. Johansson. \textit{Condensation of a One-Dimensional Lattice Gas}, Commun. Math. Phys. \textbf{141}, 41-61, (1991).

\bibitem{KT} M. Kac and C.J. Thompson. \textit{Critical Behaviour of Several Lattice Models with Long-Range Interaction}, J. Math. Phys. \textbf{10}, 1373-1386, (1969).

\bibitem{KP} V. Kharash and R. Peled. \textit{The Fr\"{o}hlich-Spencer Proof of the Berezinskii-Kosterlitz-Thouless Transition}, arXiv:1711.04720 (2017).

\bibitem{Kim} B. Kimura. \textit{Critical Behaviour of Ising spin systems - Phase transition, metastability and ergodicity}, Ph.D. Thesis, TU Delft, (2019).

%\bibitem{Eul}E. Olivieri, M.E. Vares, \textit{Large Deviations and Metastability},Encyclopedia of Mathematics and its Applications, 100. Cambridge University Press, 2004.

\bibitem{LP}J. Littin and P. Picco. \textit{Quasi-additive estimates on the Hamiltonian for the one-dimensional long-range Ising model}, J. Math. Phys.,  \textbf{58},  073301, (2017).

\bibitem{Park1} Y.M. Park. \textit{Extension of Pirogov-Sinai Theory of Phase Transition to infinite range interactions. I. Cluster Expansion}, Comm. Math. Phys. \textbf{114}, 187-218, (1988).

\bibitem{Park2} Y.M. Park. \textit{Extension of Pirogov-Sinai Theory of Phase Transition to infinite range interactions. II. Phase Diagram}, Comm. Math. Phys. \textbf{114}, 219-241, (1988).

\bibitem{Pei} R. Peierls. \textit{On Ising's model of ferromagnetism}.
Proc. Cambridge Philos. Soc. \textbf{32},
477-481, (1936).

\bibitem{Pfis} C.E. Pfister. \textit{Large Deviations and Phase separation in the two-dimensional Ising model}, Hel. Phys. Acta, \textbf{64}, 953-1054, (1991).

\bibitem{Pirogov} S. A. Pirogov, Ya. G. Sinai. \textit{Phase diagrams of classical lattice systems}, Theoret. and Math. Phys., \textbf{25}, 1185-1192, (1975).

%\bibitem{Pisztora} A. Pisztora, \textit{Surface order large deviations for Ising, Potts and percolation models}, Probab. Theory Relat. Fields, Vol 104, 427--466, (1996).


\bibitem{Preston} C. Preston. \textit{Gibbs states on countable sets}, Cambridge Tracts in Mathematics (Book 68),
Cambridge University Press, London, (1974).

\bibitem{Pres} E. Presutti. \textit{Scaling Limits in Statistical Mechanics and Microstructures in Continuum Mechanics}, Theoretical and Mathematical Physics, Springer, (2009).

\bibitem{Firas} F. Rassoul-Agha, T. Sepp\"{a}l\"{a}inen. \textit{A Course on Large Deviations with an Introduction to Gibbs Measures}, Graduate Studies in Mathematics, AMS, (2015).

%\bibitem{Scho} R.H. Schonmann, \textit{Second Order Large Deviation Estimates for Ferromagnetic Systems in the Phase Coexistence Region}, Comm. Math. Phys. Volume 112, Issue 3 , 409-422, (1987).

\bibitem{Simon} B. Simon. \textit{A Celebration of J\"{u}rg and Tom}, J. Stat. Phys., \textbf{134}, 809-812, (2009).

\bibitem{LY1} C. N. Yang and T. D. Lee. \textit{Statistical Theory of Equations of State and Phase Transitions. I. Theory of Condensation}, Phys. Rev., \textbf{87}, 404-409, (1952).

\bibitem{LY2} C. N. Yang and T. D. Lee. \textit{Statistical Theory of Equations of State and Phase Transitions. II. Lattice Gas and Ising Model}, Phys. Rev., \textbf{87}, 410-419, (1952).

\bibitem{Zaradnik} M. Zahradn\'{i}k. \textit{An alternate version of Pirogov-Sinai theory}, Commun. Math. Phys. \textbf{93}, 559-581, (1984).

\end{thebibliography}
\end{document}